\setlist{nolistsep}  
\DeclareSymbolFont{frenchscript}{OMS}{ztmcm}{m}{n}
\DeclareMathSymbol{\Pow}{\mathord}{frenchscript}{80}  
\DeclareMathSymbol{\B}{\mathrel}{frenchscript}{66}    
\DeclareMathAlphabet{\mathcal}{OMS}{cmsy}{m}{n}       
\DeclareMathAlphabet{\mathbbm}{U}{bbm}{m}{n}          
\newcommand{\IT}{\mathbbm{T}}                         
\newcommand{\IN}{\mathbbm{N}}                         
\newcommand{\RS}{{\cal S}}                            
\newcommand{\T}{{\rm T}}                              
\newcommand{\TS}{\mathcal{T}}                         
\newcommand{\R}{\mathcal{R}}                          
\newif\if@qeded
\def\qed{\hfill$\Box$\global\@qededtrue}
\def\qedneeded{\global\@qededfalse}
\def\qedifneeded{\if@qeded\else\qed\fi}
\newtheorem{defi}{Definition}
\newtheorem{theo}{Theorem}
\newtheorem{lemm}{Lemma}
\newtheorem{prop}{Proposition}
\newtheorem{coro}{Corollary}
\newtheorem{exam}{Example}
\newtheorem{obse}{Observation}
\newenvironment{definition}[1]{\begin{defi} \rm \label{df:#1} }{\end{defi}}
\newenvironment{definitionA}[2]{\begin{defi}[#1] \rm \label{df:#2} }{\end{defi}}
\newenvironment{theorem}[1]{\begin{theo} \rm \label{thm:#1} }{\end{theo}}
\newenvironment{theoremA}[2]{\begin{theo}[#1] \rm \label{thm:#2} }{\end{theo}}
\newenvironment{lemma}[1]{\begin{lemm} \rm \label{lem:#1} }{\end{lemm}}
\newenvironment{lemmaA}[2]{\begin{lemm}[#1] \rm \label{lem:#2} }{\end{lemm}}
\newenvironment{proposition}[1]{\begin{prop} \rm \label{pr:#1} }{\end{prop}}
\newenvironment{corollary}[1]{\begin{coro} \rm \label{cor:#1} }{\end{coro}}
\newenvironment{example}[1]{\begin{exam} \rm \label{ex:#1} }{\end{exam}}
\newenvironment{observation}[1]{\begin{obse} \rm \label{obs:#1} }{\end{obse}}
\newenvironment{proof}{\qedneeded\begin{trivlist} \item[\hspace{\labelsep}\bf Proof:]}
                      {\qedifneeded\end{trivlist}}
\newcommand{\Sec}[1]{Section~\ref{sec:#1}}
\newcommand{\df}[1]{Definition~\ref{df:#1}}
\newcommand{\thm}[1]{Theorem~\ref{thm:#1}}
\newcommand{\lem}[1]{Lemma~\ref{lem:#1}}
\newcommand{\pr}[1]{Proposition~\ref{pr:#1}}
\newcommand{\ex}[1]{Example~\ref{ex:#1}}
\newcommand{\obs}[1]{Observation~\ref{obs:#1}}
\def\comesfrom{\@transition\leftarrowfill}
\def\goesto{\@transition\rightarrowfill}
\def\ngoesto{\@transition\nrightarrowfill}
\def\Goesto{\@transition\Rightarrowfill}
\def\nGoesto{\@transition\nRightarrowfill}
\def\xmapsto{\@transition\mapstofill}
\def\nxmapsto{\@transition\nmapstofill}
\def\@transition#1{\@@transition{#1}}
\newbox\@transbox
\newbox\@arrowbox
\newbox\@downbox
\def\@@transition#1#2%
\wd\@transbox{#1}
\@transbox\hbox{$\mathop{\box\@arrowbox}\limits^{\box\@transbox}$}
\def\nrightarrowfill{$\m@th\mathord-\mkern-6mu%
  \cleaders\hbox{$\mkern-2mu\mathord-\mkern-2mu$}\hfill
  \mkern-6mu\mathord\not\mkern-2mu\mathord\rightarrow$}
\def\Rightarrowfill{$\m@th\mathord=\mkern-6mu%
  \cleaders\hbox{$\mkern-2mu\mathord=\mkern-2mu$}\hfill
  \mkern-6mu\mathord\Rightarrow$}
\def\nRightarrowfill{$\m@th\mathord=\mkern-6mu%
  \cleaders\hbox{$\mkern-2mu\mathord=\mkern-2mu$}\hfill
  \mkern-6mu\mathord\not\mathord\Rightarrow$}
\def\mapstofill{$\m@th\mathord\mapstochar\mathord-\mkern-6mu%
  \cleaders\hbox{$\mkern-2mu\mathord-\mkern-2mu$}\hfill
  \mkern-6mu\mathord\rightarrow$}
\def\nmapstofill{$\m@th\mathord\mapstochar\mathord-\mkern-6mu%
  \cleaders\hbox{$\mkern-2mu\mathord-\mkern-2mu$}\hfill
  \mkern-6mu\mathord\not\mkern-2mu\mathord\rightarrow$}
\newcommand{\goesnotto}[1]{{\ngoesto{#1\;}}}          
\newcommand{\goto}[1]{\stackrel{#1}{\longrightarrow}} 
\newcommand{\hoto}[1]{\mathbin{\stackrel{#1}{\raisebox{0pt} 
        [3pt][0pt]{$\scriptstyle--\rightarrow$}}}}          
\newcommand{\gonotto}[1]{\mbox{$\,\,\,\,\not\!\!\!\!\stackrel{#1~}{\longrightarrow}$}} 
\newcommand{\honotto}[1]{\mbox{$\,\,\,\,\,\,\,\not\!\!\!\!\!\!\stackrel{#1~}{\raisebox{0pt}
        [3pt][0pt]{$\scriptstyle--\rightarrow$}}$}}   
\newcommand{\plat}[1]{\raisebox{0pt}[0pt][0pt]{#1}}   
\newcommand{\rec}[1]{\plat{$			      
	\stackrel{\mbox{\tiny $/$}}
	{\raisebox{-.3ex}[.3ex]{\tiny $\backslash$}}
	\!\!#1\!\!
	\stackrel{\mbox{\tiny $\backslash$}}
	{\raisebox{-.3ex}[.3ex]{\tiny $/$}} $}}
\newcommand{\leftm}{\mathbin{\lfloor\hspace{-3pt}\lfloor}} 
\newcommand{\Var}{{\it Var}}                          
\newcommand{\var}{{\it var}}                          
\newcommand{\ar}[1]{\mathit{ar}(#1)}                  
\newcommand{\E}{E}                                    
\newcommand{\al}{a}                                   
\def\titlerunning{Unique Solutions of Guarded Recursive Equations}
\title\titlerunning
\author{Rob van Glabbeek%
  \thanks{Supported by Royal Society Wolfson Fellowship RSWF\textbackslash R1\textbackslash 221008}
  \,\href{https://orcid.org/0000-0003-4712-7423}{\includegraphics[scale=.04]{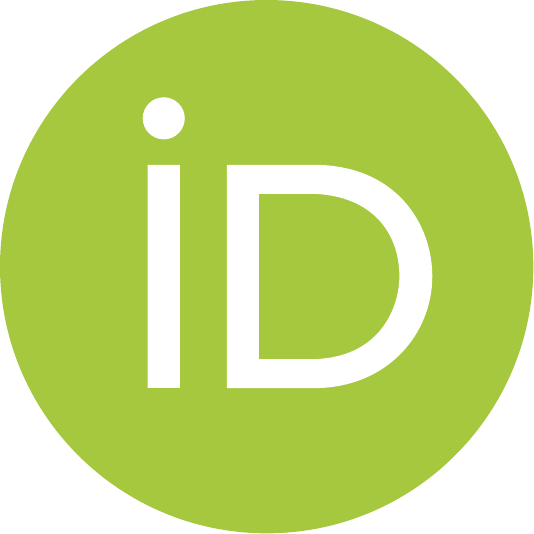}}
  \institute{School of Informatics, University of Edinburgh, UK}
  \institute{School of Computer Science and Engineering,
    University of New South Wales, Sydney, Australia}
  \email{rvg@cs.stanford.edu}
}
\begin{document}
\maketitle

\begin{abstract}
This paper shows that guarded systems of recursive equations have unique solutions up to strong
bisimilarity for any process algebra with a structural operation semantics in the ready simulation
format. A similar result holds for simulation equivalence, for ready simulation equivalence and for the
(ready) simulation preorder. As a consequence, these equivalences and preorders are full (pre)
congruences for guarded recursion. Moreover, the unique-solutions result yields a sound and
ground-complete axiomatisation of strong bisimilarity for any finitary GSOS language.

\end{abstract}

\section{Introduction}

In many process algebras, including CCS \cite{Mi90ccs}, CSP \cite{BHR84} and ACP \cite{BW90},
infinite processes are specified by systems $\RS$ of recursive equations $\{X = \RS_X \mid X \in V_\RS\}$.
Such systems are called \emph{recursive specifications}. Here $V_\RS$ is a set of recursion
variables specific to a recursive specification, and each $X \in V_\RS$ is defined to behave as
the expression $\RS_X$. Crucially, $X$ itself, and other variables from $V_\RS$, may occur in the
expression $\RS_X$. An important tool in equational reasoning is that a large class of recursive
specifications have unique solutions. Here a \emph{solution} is a tuple $\vec{P}$ of processes, one
for each variable $X$ in the set $V_\RS$, such that filling in the $X$-component of $\vec{P}$ for
$X$, at both the left- and the right-hand side of the recursive equations in $\RS$, yields a set of
semantically valid statements.

\begin{example}{recursive specification}
Using the syntax of CCS \cite{Mi90ccs}, let $\RS$ be the recursive specification consisting of the
two recursive equations $\{X = a.X +b.Y ~,~ Y = c.X + d.Y\}$. In the model of labelled transition
systems \cite{Mi90ccs}, the pair $(P,Q)$ of states in the labelled transition system depicted below
constitutes a solution of $\RS$.\vspace{-5pt}

\expandafter\ifx\csname graph\endcsname\relax
   \csname newbox\expandafter\endcsname\csname graph\endcsname
\fi
\ifx\graphtemp\undefined
  \csname newdimen\endcsname\graphtemp
\fi
\expandafter\setbox\csname graph\endcsname
 =\vtop{\vskip 0pt\hbox{%
\pdfliteral{
q [] 0 d 1 J 1 j
0.576 w
0.576 w
39.6 -14.4 m
39.6 -16.388225 37.988225 -18 36 -18 c
34.011775 -18 32.4 -16.388225 32.4 -14.4 c
32.4 -12.411775 34.011775 -10.8 36 -10.8 c
37.988225 -10.8 39.6 -12.411775 39.6 -14.4 c
S
Q
}%
    \graphtemp=.5ex
    \advance\graphtemp by 0.380in
    \rlap{\kern 0.500in\lower\graphtemp\hbox to 0pt{\hss $P$\hss}}%
\pdfliteral{
q [] 0 d 1 J 1 j
0.576 w
0.072 w
q 0 g
27.432 -7.56 m
33.48 -11.88 l
26.064 -10.872 l
27.432 -7.56 l
B Q
0.576 w
33.48 -16.92 m
27.54 -19.26 l
23.5008 -20.8512 19.296 -21.6 14.4 -21.6 c
9.504 -21.6 7.2 -19.296 7.2 -14.4 c
7.2 -9.504 9.504 -7.2 14.4 -7.2 c
19.296 -7.2 23.3856 -7.90272 27.18 -9.396 c
32.76 -11.592 l
S
Q
}%
    \graphtemp=.5ex
    \advance\graphtemp by 0.200in
    \rlap{\kern 0.000in\lower\graphtemp\hbox to 0pt{\hss $a$\hss}}%
\pdfliteral{
q [] 0 d 1 J 1 j
0.576 w
111.6 -14.4 m
111.6 -16.388225 109.988225 -18 108 -18 c
106.011775 -18 104.4 -16.388225 104.4 -14.4 c
104.4 -12.411775 106.011775 -10.8 108 -10.8 c
109.988225 -10.8 111.6 -12.411775 111.6 -14.4 c
S
Q
}%
    \graphtemp=.5ex
    \advance\graphtemp by 0.380in
    \rlap{\kern 1.500in\lower\graphtemp\hbox to 0pt{\hss $Q$\hss}}%
\pdfliteral{
q [] 0 d 1 J 1 j
0.576 w
0.072 w
q 0 g
98.064 -16.704 m
105.48 -16.92 l
98.784 -20.232 l
98.064 -16.704 l
B Q
0.576 w
98.412476 -18.508989 m
78.57244 -22.210851 58.175807 -21.700826 38.545583 -17.011994 c
S
Q
}%
    \graphtemp=.5ex
    \advance\graphtemp by 0.400in
    \rlap{\kern 1.000in\lower\graphtemp\hbox to 0pt{\hss $b$\hss}}%
\pdfliteral{
q [] 0 d 1 J 1 j
0.576 w
0.072 w
q 0 g
45.936 -12.096 m
38.52 -11.88 l
45.216 -8.568 l
45.936 -12.096 l
B Q
0.576 w
45.587475 -10.36302 m
65.42751 -6.661151 85.824143 -7.17117 105.454368 -11.859994 c
S
Q
}%
    \graphtemp=.5ex
    \advance\graphtemp by 0.000in
    \rlap{\kern 1.000in\lower\graphtemp\hbox to 0pt{\hss $c$\hss}}%
\pdfliteral{
q [] 0 d 1 J 1 j
0.576 w
0.072 w
q 0 g
117.936 -10.872 m
110.52 -11.88 l
116.568 -7.56 l
117.936 -10.872 l
B Q
0.576 w
110.52 -16.92 m
116.46 -19.26 l
120.4992 -20.8512 124.704 -21.6 129.6 -21.6 c
134.496 -21.6 136.8 -19.296 136.8 -14.4 c
136.8 -9.504 134.496 -7.2 129.6 -7.2 c
124.704 -7.2 120.6144 -7.90272 116.82 -9.396 c
111.24 -11.592 l
S
Q
}%
    \graphtemp=.5ex
    \advance\graphtemp by 0.200in
    \rlap{\kern 2.000in\lower\graphtemp\hbox to 0pt{\hss $d$\hss}}%
    \hbox{\vrule depth0.400in width0pt height 0pt}%
    \kern 2.000in
  }%
}%

\centerline{\box\graph}
\vspace{6pt}

\noindent
In fact, this is the only solution of $\RS$ up to strong bisimilarity \cite{Mi90ccs}.
\end{example}
Unfortunately, not all recursive specifications have unique solutions.
The recursive equation $X = X$, for instance, allows any process as a solution.
\textsc{Milner} \cite{Mi90ccs} formulated a syntactic criterion on recursive specifications called
\emph{guardedness}, and showed that in the process algebra CCS, systems of guarded recursive equations have
unique solutions up to strong bisimilarity. In the setting of ACP \cite{BBK87a,BW90} two reasoning
principles are defined: the \emph{Recursive Definition Principle} (RDP) states that any recursive
specification has a solution, and the \emph{Recursive Specification Principle} (RSP) states that
guarded recursive specifications have at most one solution, up to a semantic equivalence.
Whether these principles hold may depend on the process algebra and on the semantic equivalence employed.
First of all it takes a semantic equivalence to determine whether a given tuple $\vec{P}$ is a
solution of a recursive specification $\RS$. Secondly, if $\RS$ is guarded and $\vec{P}$ and
$\vec{Q}$ both are solutions, RSP demands that $\vec{P}$ and $\vec{Q}$ are semantically equivalent.
RDP and RSP are equational and conditional equational laws that can be added as axioms to the
equational theories of various process algebras. Doing so has been a crucial step in achieving
complete axiomatisations \cite{Mi84,BK88,Fok00,GM20}---see also \Sec{complete}.

Subtly different definitions of guardedness appear in the literature. In \cite{Mi90ccs} it was
required that within the bodies $\RS_X$ of recursive equations the recursion variables $Y \in V_\RS$
may appear only within the scope of a CCS action prefixing operator $a.\_$. Here this action
\emph{guards} the recursion variable. In \ex{recursive specification}, $\RS$ is guarded in this
sense. In other process algebras one might allow other operators as
guards. In ACP \cite{BW90} for instance, the left merge $\leftm$ is a good candidate. However, it
guards only its second argument: the recursive specification $X = b \leftm X$ is guarded, but
$Y = Y\leftm b$ is not.
In general, a suitable definition of guardedness is whatever syntactic criterion allows one to
obtain RSP for a chosen semantic equivalence.
Proofs of RSP \cite{Mi90ccs,BBK87a} tend to depend to a large extent on the syntax and semantics of
the process algebra involved.

The goal of this paper is to give a proof of RSP for strong bisimulation semantics for a large class
of process algebras, including CCS and ACP, at once. I do this by characterising a format for
transition system specifications in structural operational semantics, proposing the most general
definition of guardedness I can find in that context, and formally show that all process algebras
whose operational semantics fits that format satisfy RSP, for strong bisimilarity. They also satisfy
RDP, but that is fairly trivial.

The most general format for standard process algebras (without probabilities, time, name-binding or
other special features) that guarantees strong bisimilarity to be congruence is the
ntyft/ntyxt format of \textsc{Groote} \cite{Gr93}, augmented with recursion as in \cite{vG17b}.
Let $f$ (pronounced ``after $a$'') be a unary operator defined through the transition rules\vspace{-2ex}
\[\frac{x \goto{a} y ~~~ y \goto{b} z}{f(x) \goto{b} z}\]
for the specific action $a$ and each action $b$. It has lookahead, in the sense that the initial
actions of $f$ are determined by looking ahead at the second action performed by its argument.
Now the recursive equation $X = f(a.X)$ has any process as solution, even though the recursion
variable $X$ appears guarded (by the guard $a$) in the body $f(a.X)$ of the recursion equation.
Because of this I rule out lookahead in this paper. Aiming for maximal generality, I will obtain RSP
for strong bisimilarity in the ready simulation format \cite{vG93d}, which is the ntyft/ntyxt format
without lookahead, augmented with recursion.

Most applications of this work will fall in the GSOS format, which restricts the ready simulation
format mainly by allowing only variables to occur in premises, except in the rules for recursion.
In this context I favour a notion of guardedness due to \textsc{Vaandrager} \cite{Va93}: an $n$-ary
operator $f$ guards its $i^{\rm th}$ argument iff there is no transition rule of the form
  \[\frac{H}{f(x_1,\dots,x_n) \goto\al E}\]
such that the variable $x_i$ occurs as the left-hand side of one of the premises in $H$.
I will generalise this idea to the ready simulation format by allowing $x_i$ to occur in the
left-hand sides of premises (which may be complex terms) but only in guarded positions.

After establishing RSP for strong bisimilarity, I observe that by means of an almost identical proof
it can also be shown for other strong semantic equivalences at the branching time side of the linear
time -- branching time spectrum \cite{vG01}, namely simulation equivalence (in the absence of
negative premises) and ready simulation equivalence. Naturally I would like to show it for preorders as well.
Using the notation $\vec{P} \equiv \RS[\vec{P}]$ to indicate that the tuple $\vec{P}$ is a solution
of the recursive specification $\RS$ up to the semantic equivalence $\equiv$, there are two ways to
replace $\equiv$ by a preorder $\sqsubseteq$. When $\vec{P} \sqsubseteq \RS[\vec{P}]$ holds I call
$\vec{P}$ a presolution of $\RS[\vec{P}]$, and when $\RS[\vec{P}] \sqsubseteq \vec{P}$ holds I call
it a postsolution. The form of RSP for $\equiv$ simply says that if $\vec{P}$ and $\vec{Q}$ are both
solutions of $\RS$ then $\vec{P} \equiv \vec{Q}$. It turns out that no interesting conclusions can be
drawn from the assumption that  $\vec{P}$ and $\vec{Q}$ are both presolutions, or both postsolutions.
However, an asymmetric form of RSP holds if $\vec{P}$ is a presolution and $\vec{Q}$ a postsolution.
In that case $\vec{P} \sqsubseteq \vec{Q}$, at least when taking for $\sqsubseteq$ the strong
simulation preorder or the ready simulation preorder. I thus formulate RSP as
\begin{equation}
\frac{\vec{P} \equiv \RS[\vec{P}] \qquad \RS[\vec{Q}] \equiv \vec{Q}}{\vec{P} \equiv \vec{Q}}
\qquad
 \frac{\vec{P} \sqsubseteq \RS[\vec{P}] \qquad \RS[\vec{Q}] \sqsubseteq \vec{Q}}{\vec{P} \sqsubseteq
   \vec{Q}}.
\label{RSP}
\end{equation}
\hypertarget{kernel}{In most cases a semantic equivalence $\equiv$ (say simulation equivalence)
arises as the kernel of a preorder, meaning that $P \equiv Q$ iff $P \sqsubseteq Q$ and $Q \sqsubseteq P$.}
In this setting, RSP for $\equiv$ is a simple consequence of RSP for $\sqsubseteq$.
Thus, one merely needs to prove the latter.

For linear time equivalences, a generic counterexample to RSP is given by these processes:\\
\expandafter\ifx\csname graph\endcsname\relax
   \csname newbox\expandafter\endcsname\csname graph\endcsname
\fi
\ifx\graphtemp\undefined
  \csname newdimen\endcsname\graphtemp
\fi
\expandafter\setbox\csname graph\endcsname
 =\vtop{\vskip 0pt\hbox{%
    \graphtemp=.5ex
    \advance\graphtemp by 2.224in
    \rlap{\kern 2.067in\lower\graphtemp\hbox to 0pt{\hss $P'$\hss}}%
\pdfliteral{
q [] 0 d 1 J 1 j
0.576 w
0.576 w
150.264 -4.248 m
150.264 -5.04329 149.61929 -5.688 148.824 -5.688 c
148.02871 -5.688 147.384 -5.04329 147.384 -4.248 c
147.384 -3.45271 148.02871 -2.808 148.824 -2.808 c
149.61929 -2.808 150.264 -3.45271 150.264 -4.248 c
S
121.896 -21.24 m
121.896 -22.03529 121.25129 -22.68 120.456 -22.68 c
119.66071 -22.68 119.016 -22.03529 119.016 -21.24 c
119.016 -20.44471 119.66071 -19.8 120.456 -19.8 c
121.25129 -19.8 121.896 -20.44471 121.896 -21.24 c
S
121.896 -43.92 m
121.896 -44.71529 121.25129 -45.36 120.456 -45.36 c
119.66071 -45.36 119.016 -44.71529 119.016 -43.92 c
119.016 -43.12471 119.66071 -42.48 120.456 -42.48 c
121.25129 -42.48 121.896 -43.12471 121.896 -43.92 c
S
150.264 -4.248 m
150.264 -5.04329 149.61929 -5.688 148.824 -5.688 c
148.02871 -5.688 147.384 -5.04329 147.384 -4.248 c
147.384 -3.45271 148.02871 -2.808 148.824 -2.808 c
149.61929 -2.808 150.264 -3.45271 150.264 -4.248 c
S
150.264 -32.616 m
150.264 -33.41129 149.61929 -34.056 148.824 -34.056 c
148.02871 -34.056 147.384 -33.41129 147.384 -32.616 c
147.384 -31.82071 148.02871 -31.176 148.824 -31.176 c
149.61929 -31.176 150.264 -31.82071 150.264 -32.616 c
S
121.896 -60.912 m
121.896 -61.70729 121.25129 -62.352 120.456 -62.352 c
119.66071 -62.352 119.016 -61.70729 119.016 -60.912 c
119.016 -60.11671 119.66071 -59.472 120.456 -59.472 c
121.25129 -59.472 121.896 -60.11671 121.896 -60.912 c
S
150.264 -60.912 m
150.264 -61.70729 149.61929 -62.352 148.824 -62.352 c
148.02871 -62.352 147.384 -61.70729 147.384 -60.912 c
147.384 -60.11671 148.02871 -59.472 148.824 -59.472 c
149.61929 -59.472 150.264 -60.11671 150.264 -60.912 c
S
178.632 -60.912 m
178.632 -61.70729 177.98729 -62.352 177.192 -62.352 c
176.39671 -62.352 175.752 -61.70729 175.752 -60.912 c
175.752 -60.11671 176.39671 -59.472 177.192 -59.472 c
177.98729 -59.472 178.632 -60.11671 178.632 -60.912 c
S
121.896 -89.28 m
121.896 -90.07529 121.25129 -90.72 120.456 -90.72 c
119.66071 -90.72 119.016 -90.07529 119.016 -89.28 c
119.016 -88.48471 119.66071 -87.84 120.456 -87.84 c
121.25129 -87.84 121.896 -88.48471 121.896 -89.28 c
S
150.264 -89.28 m
150.264 -90.07529 149.61929 -90.72 148.824 -90.72 c
148.02871 -90.72 147.384 -90.07529 147.384 -89.28 c
147.384 -88.48471 148.02871 -87.84 148.824 -87.84 c
149.61929 -87.84 150.264 -88.48471 150.264 -89.28 c
S
178.632 -89.28 m
178.632 -90.07529 177.98729 -90.72 177.192 -90.72 c
176.39671 -90.72 175.752 -90.07529 175.752 -89.28 c
175.752 -88.48471 176.39671 -87.84 177.192 -87.84 c
177.98729 -87.84 178.632 -88.48471 178.632 -89.28 c
S
150.264 -117.648 m
150.264 -118.44329 149.61929 -119.088 148.824 -119.088 c
148.02871 -119.088 147.384 -118.44329 147.384 -117.648 c
147.384 -116.85271 148.02871 -116.208 148.824 -116.208 c
149.61929 -116.208 150.264 -116.85271 150.264 -117.648 c
S
178.632 -117.648 m
178.632 -118.44329 177.98729 -119.088 177.192 -119.088 c
176.39671 -119.088 175.752 -118.44329 175.752 -117.648 c
175.752 -116.85271 176.39671 -116.208 177.192 -116.208 c
177.98729 -116.208 178.632 -116.85271 178.632 -117.648 c
S
178.632 -146.016 m
178.632 -146.81129 177.98729 -147.456 177.192 -147.456 c
176.39671 -147.456 175.752 -146.81129 175.752 -146.016 c
175.752 -145.22071 176.39671 -144.576 177.192 -144.576 c
177.98729 -144.576 178.632 -145.22071 178.632 -146.016 c
S
Q
}%
    \graphtemp=.5ex
    \advance\graphtemp by 0.256in
    \rlap{\kern 2.008in\lower\graphtemp\hbox to 0pt{\hss $a$\hss}}%
    \graphtemp=.5ex
    \advance\graphtemp by 0.098in
    \rlap{\kern 1.870in\lower\graphtemp\hbox to 0pt{\hss $a$\hss}}%
    \graphtemp=.5ex
    \advance\graphtemp by 0.634in
    \rlap{\kern 1.827in\lower\graphtemp\hbox to 0pt{\hss $a$\hss}}%
    \graphtemp=.5ex
    \advance\graphtemp by 0.650in
    \rlap{\kern 2.335in\lower\graphtemp\hbox to 0pt{\hss $a$\hss}}%
    \graphtemp=.5ex
    \advance\graphtemp by 0.650in
    \rlap{\kern 2.008in\lower\graphtemp\hbox to 0pt{\hss $a$\hss}}%
    \graphtemp=.5ex
    \advance\graphtemp by 0.413in
    \rlap{\kern 1.594in\lower\graphtemp\hbox to 0pt{\hss $b$\hss}}%
    \graphtemp=.5ex
    \advance\graphtemp by 1.043in
    \rlap{\kern 1.594in\lower\graphtemp\hbox to 0pt{\hss $b$\hss}}%
    \graphtemp=.5ex
    \advance\graphtemp by 1.043in
    \rlap{\kern 1.988in\lower\graphtemp\hbox to 0pt{\hss $a$\hss}}%
    \graphtemp=.5ex
    \advance\graphtemp by 1.043in
    \rlap{\kern 2.382in\lower\graphtemp\hbox to 0pt{\hss $a$\hss}}%
    \graphtemp=.5ex
    \advance\graphtemp by 1.043in
    \rlap{\kern 2.657in\lower\graphtemp\hbox to 0pt{\hss $\cdots$\hss}}%
    \graphtemp=.5ex
    \advance\graphtemp by 1.437in
    \rlap{\kern 1.988in\lower\graphtemp\hbox to 0pt{\hss $b$\hss}}%
    \graphtemp=.5ex
    \advance\graphtemp by 1.437in
    \rlap{\kern 2.382in\lower\graphtemp\hbox to 0pt{\hss $a$\hss}}%
    \graphtemp=.5ex
    \advance\graphtemp by 1.831in
    \rlap{\kern 2.382in\lower\graphtemp\hbox to 0pt{\hss $b$\hss}}%
\pdfliteral{
q [] 0 d 1 J 1 j
0.576 w
0.072 w
q 0 g
150.624 -23.976 m
148.824 -31.176 l
147.024 -23.976 l
150.624 -23.976 l
B Q
0.576 w
148.824 -5.688 m
148.824 -23.976 l
S
0.072 w
q 0 g
128.808 -18.36 m
121.68 -20.52 l
126.936 -15.264 l
128.808 -18.36 l
B Q
0.576 w
147.6 -4.968 m
127.872 -16.848 l
S
0.072 w
q 0 g
122.256 -35.352 m
120.456 -42.552 l
118.656 -35.352 l
122.256 -35.352 l
B Q
0.576 w
120.456 -22.68 m
120.456 -35.352 l
S
0.072 w
q 0 g
127.872 -56.16 m
121.464 -59.976 l
125.28 -53.568 l
127.872 -56.16 l
B Q
0.576 w
147.816 -33.624 m
126.576 -54.864 l
S
0.072 w
q 0 g
150.624 -52.344 m
148.824 -59.544 l
147.024 -52.344 l
150.624 -52.344 l
B Q
0.576 w
148.824 -33.984 m
148.824 -52.344 l
S
0.072 w
q 0 g
172.368 -53.568 m
176.184 -59.976 l
169.776 -56.16 l
172.368 -53.568 l
B Q
0.576 w
149.832 -33.624 m
171.072 -54.864 l
S
0.072 w
q 0 g
122.256 -80.64 m
120.456 -87.84 l
118.656 -80.64 l
122.256 -80.64 l
B Q
0.576 w
120.456 -62.352 m
120.456 -80.64 l
S
0.072 w
q 0 g
150.624 -80.64 m
148.824 -87.84 l
147.024 -80.64 l
150.624 -80.64 l
B Q
0.576 w
148.824 -62.352 m
148.824 -80.64 l
S
0.072 w
q 0 g
178.992 -80.64 m
177.192 -87.84 l
175.392 -80.64 l
178.992 -80.64 l
B Q
0.576 w
177.192 -62.352 m
177.192 -80.64 l
S
0.072 w
q 0 g
150.624 -109.008 m
148.824 -116.208 l
147.024 -109.008 l
150.624 -109.008 l
B Q
0.576 w
148.824 -90.72 m
148.824 -109.008 l
S
0.072 w
q 0 g
178.992 -109.008 m
177.192 -116.208 l
175.392 -109.008 l
178.992 -109.008 l
B Q
0.576 w
177.192 -90.72 m
177.192 -109.008 l
S
0.072 w
q 0 g
178.992 -137.376 m
177.192 -144.576 l
175.392 -137.376 l
178.992 -137.376 l
B Q
0.576 w
177.192 -119.088 m
177.192 -137.376 l
S
Q
}%
    \graphtemp=.5ex
    \advance\graphtemp by 2.224in
    \rlap{\kern 0.571in\lower\graphtemp\hbox to 0pt{\hss $P$\hss}}%
\pdfliteral{
q [] 0 d 1 J 1 j
0.576 w
42.552 -32.616 m
42.552 -33.41129 41.90729 -34.056 41.112 -34.056 c
40.31671 -34.056 39.672 -33.41129 39.672 -32.616 c
39.672 -31.82071 40.31671 -31.176 41.112 -31.176 c
41.90729 -31.176 42.552 -31.82071 42.552 -32.616 c
S
14.184 -60.912 m
14.184 -61.70729 13.53929 -62.352 12.744 -62.352 c
11.94871 -62.352 11.304 -61.70729 11.304 -60.912 c
11.304 -60.11671 11.94871 -59.472 12.744 -59.472 c
13.53929 -59.472 14.184 -60.11671 14.184 -60.912 c
S
42.552 -60.912 m
42.552 -61.70729 41.90729 -62.352 41.112 -62.352 c
40.31671 -62.352 39.672 -61.70729 39.672 -60.912 c
39.672 -60.11671 40.31671 -59.472 41.112 -59.472 c
41.90729 -59.472 42.552 -60.11671 42.552 -60.912 c
S
70.92 -60.912 m
70.92 -61.70729 70.27529 -62.352 69.48 -62.352 c
68.68471 -62.352 68.04 -61.70729 68.04 -60.912 c
68.04 -60.11671 68.68471 -59.472 69.48 -59.472 c
70.27529 -59.472 70.92 -60.11671 70.92 -60.912 c
S
14.184 -89.28 m
14.184 -90.07529 13.53929 -90.72 12.744 -90.72 c
11.94871 -90.72 11.304 -90.07529 11.304 -89.28 c
11.304 -88.48471 11.94871 -87.84 12.744 -87.84 c
13.53929 -87.84 14.184 -88.48471 14.184 -89.28 c
S
42.552 -89.28 m
42.552 -90.07529 41.90729 -90.72 41.112 -90.72 c
40.31671 -90.72 39.672 -90.07529 39.672 -89.28 c
39.672 -88.48471 40.31671 -87.84 41.112 -87.84 c
41.90729 -87.84 42.552 -88.48471 42.552 -89.28 c
S
70.92 -89.28 m
70.92 -90.07529 70.27529 -90.72 69.48 -90.72 c
68.68471 -90.72 68.04 -90.07529 68.04 -89.28 c
68.04 -88.48471 68.68471 -87.84 69.48 -87.84 c
70.27529 -87.84 70.92 -88.48471 70.92 -89.28 c
S
42.552 -117.648 m
42.552 -118.44329 41.90729 -119.088 41.112 -119.088 c
40.31671 -119.088 39.672 -118.44329 39.672 -117.648 c
39.672 -116.85271 40.31671 -116.208 41.112 -116.208 c
41.90729 -116.208 42.552 -116.85271 42.552 -117.648 c
S
70.92 -117.648 m
70.92 -118.44329 70.27529 -119.088 69.48 -119.088 c
68.68471 -119.088 68.04 -118.44329 68.04 -117.648 c
68.04 -116.85271 68.68471 -116.208 69.48 -116.208 c
70.27529 -116.208 70.92 -116.85271 70.92 -117.648 c
S
70.92 -146.016 m
70.92 -146.81129 70.27529 -147.456 69.48 -147.456 c
68.68471 -147.456 68.04 -146.81129 68.04 -146.016 c
68.04 -145.22071 68.68471 -144.576 69.48 -144.576 c
70.27529 -144.576 70.92 -145.22071 70.92 -146.016 c
S
Q
}%
    \graphtemp=.5ex
    \advance\graphtemp by 0.634in
    \rlap{\kern 0.331in\lower\graphtemp\hbox to 0pt{\hss $a$\hss}}%
    \graphtemp=.5ex
    \advance\graphtemp by 0.650in
    \rlap{\kern 0.839in\lower\graphtemp\hbox to 0pt{\hss $a$\hss}}%
    \graphtemp=.5ex
    \advance\graphtemp by 0.650in
    \rlap{\kern 0.512in\lower\graphtemp\hbox to 0pt{\hss $a$\hss}}%
    \graphtemp=.5ex
    \advance\graphtemp by 1.043in
    \rlap{\kern 0.098in\lower\graphtemp\hbox to 0pt{\hss $b$\hss}}%
    \graphtemp=.5ex
    \advance\graphtemp by 1.043in
    \rlap{\kern 0.492in\lower\graphtemp\hbox to 0pt{\hss $a$\hss}}%
    \graphtemp=.5ex
    \advance\graphtemp by 1.043in
    \rlap{\kern 0.886in\lower\graphtemp\hbox to 0pt{\hss $a$\hss}}%
    \graphtemp=.5ex
    \advance\graphtemp by 1.043in
    \rlap{\kern 1.161in\lower\graphtemp\hbox to 0pt{\hss $\cdots$\hss}}%
    \graphtemp=.5ex
    \advance\graphtemp by 1.437in
    \rlap{\kern 0.492in\lower\graphtemp\hbox to 0pt{\hss $b$\hss}}%
    \graphtemp=.5ex
    \advance\graphtemp by 1.437in
    \rlap{\kern 0.886in\lower\graphtemp\hbox to 0pt{\hss $a$\hss}}%
    \graphtemp=.5ex
    \advance\graphtemp by 1.831in
    \rlap{\kern 0.886in\lower\graphtemp\hbox to 0pt{\hss $b$\hss}}%
\pdfliteral{
q [] 0 d 1 J 1 j
0.576 w
0.072 w
q 0 g
20.088 -56.16 m
13.752 -59.976 l
17.568 -53.568 l
20.088 -56.16 l
B Q
0.576 w
40.104 -33.624 m
18.864 -54.864 l
S
0.072 w
q 0 g
42.912 -52.344 m
41.112 -59.544 l
39.312 -52.344 l
42.912 -52.344 l
B Q
0.576 w
41.112 -33.984 m
41.112 -52.344 l
S
0.072 w
q 0 g
64.656 -53.568 m
68.472 -59.976 l
62.064 -56.16 l
64.656 -53.568 l
B Q
0.576 w
42.12 -33.624 m
63.36 -54.864 l
S
0.072 w
q 0 g
14.544 -80.64 m
12.744 -87.84 l
10.944 -80.64 l
14.544 -80.64 l
B Q
0.576 w
12.744 -62.352 m
12.744 -80.64 l
S
0.072 w
q 0 g
42.912 -80.64 m
41.112 -87.84 l
39.312 -80.64 l
42.912 -80.64 l
B Q
0.576 w
41.112 -62.352 m
41.112 -80.64 l
S
0.072 w
q 0 g
71.28 -80.64 m
69.48 -87.84 l
67.68 -80.64 l
71.28 -80.64 l
B Q
0.576 w
69.48 -62.352 m
69.48 -80.64 l
S
0.072 w
q 0 g
42.912 -109.008 m
41.112 -116.208 l
39.312 -109.008 l
42.912 -109.008 l
B Q
0.576 w
41.112 -90.72 m
41.112 -109.008 l
S
0.072 w
q 0 g
71.28 -109.008 m
69.48 -116.208 l
67.68 -109.008 l
71.28 -109.008 l
B Q
0.576 w
69.48 -90.72 m
69.48 -109.008 l
S
0.072 w
q 0 g
71.28 -137.376 m
69.48 -144.576 l
67.68 -137.376 l
71.28 -137.376 l
B Q
0.576 w
69.48 -119.088 m
69.48 -137.376 l
S
Q
}%
    \graphtemp=.5ex
    \advance\graphtemp by 2.224in
    \rlap{\kern 5.374in\lower\graphtemp\hbox to 0pt{\hss $Q'$\hss}}%
\pdfliteral{
q [] 0 d 1 J 1 j
0.576 w
388.368 -4.248 m
388.368 -5.04329 387.72329 -5.688 386.928 -5.688 c
386.13271 -5.688 385.488 -5.04329 385.488 -4.248 c
385.488 -3.45271 386.13271 -2.808 386.928 -2.808 c
387.72329 -2.808 388.368 -3.45271 388.368 -4.248 c
S
360 -21.24 m
360 -22.03529 359.35529 -22.68 358.56 -22.68 c
357.76471 -22.68 357.12 -22.03529 357.12 -21.24 c
357.12 -20.44471 357.76471 -19.8 358.56 -19.8 c
359.35529 -19.8 360 -20.44471 360 -21.24 c
S
360 -43.92 m
360 -44.71529 359.35529 -45.36 358.56 -45.36 c
357.76471 -45.36 357.12 -44.71529 357.12 -43.92 c
357.12 -43.12471 357.76471 -42.48 358.56 -42.48 c
359.35529 -42.48 360 -43.12471 360 -43.92 c
S
388.368 -32.616 m
388.368 -33.41129 387.72329 -34.056 386.928 -34.056 c
386.13271 -34.056 385.488 -33.41129 385.488 -32.616 c
385.488 -31.82071 386.13271 -31.176 386.928 -31.176 c
387.72329 -31.176 388.368 -31.82071 388.368 -32.616 c
S
360 -60.912 m
360 -61.70729 359.35529 -62.352 358.56 -62.352 c
357.76471 -62.352 357.12 -61.70729 357.12 -60.912 c
357.12 -60.11671 357.76471 -59.472 358.56 -59.472 c
359.35529 -59.472 360 -60.11671 360 -60.912 c
S
388.368 -60.912 m
388.368 -61.70729 387.72329 -62.352 386.928 -62.352 c
386.13271 -62.352 385.488 -61.70729 385.488 -60.912 c
385.488 -60.11671 386.13271 -59.472 386.928 -59.472 c
387.72329 -59.472 388.368 -60.11671 388.368 -60.912 c
S
416.736 -60.912 m
416.736 -61.70729 416.09129 -62.352 415.296 -62.352 c
414.50071 -62.352 413.856 -61.70729 413.856 -60.912 c
413.856 -60.11671 414.50071 -59.472 415.296 -59.472 c
416.09129 -59.472 416.736 -60.11671 416.736 -60.912 c
S
453.6 -60.912 m
453.6 -61.70729 452.95529 -62.352 452.16 -62.352 c
451.36471 -62.352 450.72 -61.70729 450.72 -60.912 c
450.72 -60.11671 451.36471 -59.472 452.16 -59.472 c
452.95529 -59.472 453.6 -60.11671 453.6 -60.912 c
S
360 -89.28 m
360 -90.07529 359.35529 -90.72 358.56 -90.72 c
357.76471 -90.72 357.12 -90.07529 357.12 -89.28 c
357.12 -88.48471 357.76471 -87.84 358.56 -87.84 c
359.35529 -87.84 360 -88.48471 360 -89.28 c
S
388.368 -89.28 m
388.368 -90.07529 387.72329 -90.72 386.928 -90.72 c
386.13271 -90.72 385.488 -90.07529 385.488 -89.28 c
385.488 -88.48471 386.13271 -87.84 386.928 -87.84 c
387.72329 -87.84 388.368 -88.48471 388.368 -89.28 c
S
416.736 -89.28 m
416.736 -90.07529 416.09129 -90.72 415.296 -90.72 c
414.50071 -90.72 413.856 -90.07529 413.856 -89.28 c
413.856 -88.48471 414.50071 -87.84 415.296 -87.84 c
416.09129 -87.84 416.736 -88.48471 416.736 -89.28 c
S
453.6 -89.28 m
453.6 -90.07529 452.95529 -90.72 452.16 -90.72 c
451.36471 -90.72 450.72 -90.07529 450.72 -89.28 c
450.72 -88.48471 451.36471 -87.84 452.16 -87.84 c
452.95529 -87.84 453.6 -88.48471 453.6 -89.28 c
S
388.368 -117.648 m
388.368 -118.44329 387.72329 -119.088 386.928 -119.088 c
386.13271 -119.088 385.488 -118.44329 385.488 -117.648 c
385.488 -116.85271 386.13271 -116.208 386.928 -116.208 c
387.72329 -116.208 388.368 -116.85271 388.368 -117.648 c
S
416.736 -117.648 m
416.736 -118.44329 416.09129 -119.088 415.296 -119.088 c
414.50071 -119.088 413.856 -118.44329 413.856 -117.648 c
413.856 -116.85271 414.50071 -116.208 415.296 -116.208 c
416.09129 -116.208 416.736 -116.85271 416.736 -117.648 c
S
453.6 -117.648 m
453.6 -118.44329 452.95529 -119.088 452.16 -119.088 c
451.36471 -119.088 450.72 -118.44329 450.72 -117.648 c
450.72 -116.85271 451.36471 -116.208 452.16 -116.208 c
452.95529 -116.208 453.6 -116.85271 453.6 -117.648 c
S
416.736 -146.016 m
416.736 -146.81129 416.09129 -147.456 415.296 -147.456 c
414.50071 -147.456 413.856 -146.81129 413.856 -146.016 c
413.856 -145.22071 414.50071 -144.576 415.296 -144.576 c
416.09129 -144.576 416.736 -145.22071 416.736 -146.016 c
S
Q
}%
    \graphtemp=.5ex
    \advance\graphtemp by 0.236in
    \rlap{\kern 5.315in\lower\graphtemp\hbox to 0pt{\hss $a$\hss}}%
    \graphtemp=.5ex
    \advance\graphtemp by 0.098in
    \rlap{\kern 5.197in\lower\graphtemp\hbox to 0pt{\hss $a$\hss}}%
    \graphtemp=.5ex
    \advance\graphtemp by 0.650in
    \rlap{\kern 5.642in\lower\graphtemp\hbox to 0pt{\hss $a$\hss}}%
    \graphtemp=.5ex
    \advance\graphtemp by 0.630in
    \rlap{\kern 5.134in\lower\graphtemp\hbox to 0pt{\hss $a$\hss}}%
    \graphtemp=.5ex
    \advance\graphtemp by 0.650in
    \rlap{\kern 5.315in\lower\graphtemp\hbox to 0pt{\hss $a$\hss}}%
    \graphtemp=.5ex
    \advance\graphtemp by 1.043in
    \rlap{\kern 4.902in\lower\graphtemp\hbox to 0pt{\hss $b$\hss}}%
    \graphtemp=.5ex
    \advance\graphtemp by 0.453in
    \rlap{\kern 4.902in\lower\graphtemp\hbox to 0pt{\hss $b$\hss}}%
    \graphtemp=.5ex
    \advance\graphtemp by 1.043in
    \rlap{\kern 5.295in\lower\graphtemp\hbox to 0pt{\hss $a$\hss}}%
    \graphtemp=.5ex
    \advance\graphtemp by 1.043in
    \rlap{\kern 5.689in\lower\graphtemp\hbox to 0pt{\hss $a$\hss}}%
    \graphtemp=.5ex
    \advance\graphtemp by 1.004in
    \rlap{\kern 5.965in\lower\graphtemp\hbox to 0pt{\hss $\cdots$\hss}}%
    \graphtemp=.5ex
    \advance\graphtemp by 0.650in
    \rlap{\kern 6.004in\lower\graphtemp\hbox to 0pt{\hss $a$\hss}}%
    \graphtemp=.5ex
    \advance\graphtemp by 1.437in
    \rlap{\kern 5.295in\lower\graphtemp\hbox to 0pt{\hss $b$\hss}}%
    \graphtemp=.5ex
    \advance\graphtemp by 1.437in
    \rlap{\kern 5.689in\lower\graphtemp\hbox to 0pt{\hss $a$\hss}}%
    \graphtemp=.5ex
    \advance\graphtemp by 1.043in
    \rlap{\kern 6.220in\lower\graphtemp\hbox to 0pt{\hss $a$\hss}}%
    \graphtemp=.5ex
    \advance\graphtemp by 1.831in
    \rlap{\kern 5.689in\lower\graphtemp\hbox to 0pt{\hss $b$\hss}}%
\pdfliteral{
q [] 0 d 1 J 1 j
0.576 w
0.072 w
q 0 g
388.728 -23.976 m
386.928 -31.176 l
385.128 -23.976 l
388.728 -23.976 l
B Q
0.576 w
386.928 -5.688 m
386.928 -23.976 l
S
0.072 w
q 0 g
366.912 -18.36 m
359.784 -20.52 l
365.04 -15.264 l
366.912 -18.36 l
B Q
0.576 w
385.704 -4.968 m
365.976 -16.848 l
S
0.072 w
q 0 g
360.36 -35.352 m
358.56 -42.552 l
356.76 -35.352 l
360.36 -35.352 l
B Q
0.576 w
358.56 -22.68 m
358.56 -35.352 l
S
0.072 w
q 0 g
365.976 -56.16 m
359.568 -59.976 l
363.384 -53.568 l
365.976 -56.16 l
B Q
0.576 w
385.92 -33.624 m
364.68 -54.864 l
S
0.072 w
q 0 g
388.728 -52.344 m
386.928 -59.544 l
385.128 -52.344 l
388.728 -52.344 l
B Q
0.576 w
386.928 -33.984 m
386.928 -52.344 l
S
0.072 w
q 0 g
410.472 -53.568 m
414.288 -59.976 l
407.88 -56.16 l
410.472 -53.568 l
B Q
0.576 w
387.936 -33.624 m
409.176 -54.864 l
S
0.072 w
q 0 g
444.96 -55.872 m
450.792 -60.408 l
443.52 -59.184 l
444.96 -55.872 l
B Q
0.576 w
388.224 -33.192 m
444.24 -57.528 l
S
0.072 w
q 0 g
360.36 -80.64 m
358.56 -87.84 l
356.76 -80.64 l
360.36 -80.64 l
B Q
0.576 w
358.56 -62.352 m
358.56 -80.64 l
S
0.072 w
q 0 g
388.728 -80.64 m
386.928 -87.84 l
385.128 -80.64 l
388.728 -80.64 l
B Q
0.576 w
386.928 -62.352 m
386.928 -80.64 l
S
0.072 w
q 0 g
417.096 -80.64 m
415.296 -87.84 l
413.496 -80.64 l
417.096 -80.64 l
B Q
0.576 w
415.296 -62.352 m
415.296 -80.64 l
S
0.072 w
q 0 g
453.96 -80.64 m
452.16 -87.84 l
450.36 -80.64 l
453.96 -80.64 l
B Q
0.576 w
452.16 -62.352 m
452.16 -80.64 l
S
0.072 w
q 0 g
388.728 -109.008 m
386.928 -116.208 l
385.128 -109.008 l
388.728 -109.008 l
B Q
0.576 w
386.928 -90.72 m
386.928 -109.008 l
S
0.072 w
q 0 g
417.096 -109.008 m
415.296 -116.208 l
413.496 -109.008 l
417.096 -109.008 l
B Q
0.576 w
415.296 -90.72 m
415.296 -109.008 l
S
0.072 w
q 0 g
417.096 -137.376 m
415.296 -144.576 l
413.496 -137.376 l
417.096 -137.376 l
B Q
0.576 w
415.296 -119.088 m
415.296 -137.376 l
S
0.072 w
q 0 g
453.96 -109.008 m
452.16 -116.208 l
450.36 -109.008 l
453.96 -109.008 l
B Q
0.576 w
q [3.6 3.744] 0 d
452.16 -90.72 m
452.16 -109.008 l
S Q
0.072 w
q 0 g
453.96 -137.376 m
452.16 -144.576 l
450.36 -137.376 l
453.96 -137.376 l
B Q
0.576 w
q [0 3.6576] 0 d
452.16 -119.088 m
452.16 -137.376 l
S Q
Q
}%
    \graphtemp=.5ex
    \advance\graphtemp by 2.224in
    \rlap{\kern 3.602in\lower\graphtemp\hbox to 0pt{\hss $Q$\hss}}%
\pdfliteral{
q [] 0 d 1 J 1 j
0.576 w
260.784 -32.616 m
260.784 -33.41129 260.13929 -34.056 259.344 -34.056 c
258.54871 -34.056 257.904 -33.41129 257.904 -32.616 c
257.904 -31.82071 258.54871 -31.176 259.344 -31.176 c
260.13929 -31.176 260.784 -31.82071 260.784 -32.616 c
S
232.488 -60.912 m
232.488 -61.70729 231.84329 -62.352 231.048 -62.352 c
230.25271 -62.352 229.608 -61.70729 229.608 -60.912 c
229.608 -60.11671 230.25271 -59.472 231.048 -59.472 c
231.84329 -59.472 232.488 -60.11671 232.488 -60.912 c
S
260.784 -60.912 m
260.784 -61.70729 260.13929 -62.352 259.344 -62.352 c
258.54871 -62.352 257.904 -61.70729 257.904 -60.912 c
257.904 -60.11671 258.54871 -59.472 259.344 -59.472 c
260.13929 -59.472 260.784 -60.11671 260.784 -60.912 c
S
289.152 -60.912 m
289.152 -61.70729 288.50729 -62.352 287.712 -62.352 c
286.91671 -62.352 286.272 -61.70729 286.272 -60.912 c
286.272 -60.11671 286.91671 -59.472 287.712 -59.472 c
288.50729 -59.472 289.152 -60.11671 289.152 -60.912 c
S
326.016 -60.912 m
326.016 -61.70729 325.37129 -62.352 324.576 -62.352 c
323.78071 -62.352 323.136 -61.70729 323.136 -60.912 c
323.136 -60.11671 323.78071 -59.472 324.576 -59.472 c
325.37129 -59.472 326.016 -60.11671 326.016 -60.912 c
S
232.488 -89.28 m
232.488 -90.07529 231.84329 -90.72 231.048 -90.72 c
230.25271 -90.72 229.608 -90.07529 229.608 -89.28 c
229.608 -88.48471 230.25271 -87.84 231.048 -87.84 c
231.84329 -87.84 232.488 -88.48471 232.488 -89.28 c
S
260.784 -89.28 m
260.784 -90.07529 260.13929 -90.72 259.344 -90.72 c
258.54871 -90.72 257.904 -90.07529 257.904 -89.28 c
257.904 -88.48471 258.54871 -87.84 259.344 -87.84 c
260.13929 -87.84 260.784 -88.48471 260.784 -89.28 c
S
289.152 -89.28 m
289.152 -90.07529 288.50729 -90.72 287.712 -90.72 c
286.91671 -90.72 286.272 -90.07529 286.272 -89.28 c
286.272 -88.48471 286.91671 -87.84 287.712 -87.84 c
288.50729 -87.84 289.152 -88.48471 289.152 -89.28 c
S
326.016 -89.28 m
326.016 -90.07529 325.37129 -90.72 324.576 -90.72 c
323.78071 -90.72 323.136 -90.07529 323.136 -89.28 c
323.136 -88.48471 323.78071 -87.84 324.576 -87.84 c
325.37129 -87.84 326.016 -88.48471 326.016 -89.28 c
S
260.784 -117.648 m
260.784 -118.44329 260.13929 -119.088 259.344 -119.088 c
258.54871 -119.088 257.904 -118.44329 257.904 -117.648 c
257.904 -116.85271 258.54871 -116.208 259.344 -116.208 c
260.13929 -116.208 260.784 -116.85271 260.784 -117.648 c
S
289.152 -117.648 m
289.152 -118.44329 288.50729 -119.088 287.712 -119.088 c
286.91671 -119.088 286.272 -118.44329 286.272 -117.648 c
286.272 -116.85271 286.91671 -116.208 287.712 -116.208 c
288.50729 -116.208 289.152 -116.85271 289.152 -117.648 c
S
326.016 -117.648 m
326.016 -118.44329 325.37129 -119.088 324.576 -119.088 c
323.78071 -119.088 323.136 -118.44329 323.136 -117.648 c
323.136 -116.85271 323.78071 -116.208 324.576 -116.208 c
325.37129 -116.208 326.016 -116.85271 326.016 -117.648 c
S
289.152 -146.016 m
289.152 -146.81129 288.50729 -147.456 287.712 -147.456 c
286.91671 -147.456 286.272 -146.81129 286.272 -146.016 c
286.272 -145.22071 286.91671 -144.576 287.712 -144.576 c
288.50729 -144.576 289.152 -145.22071 289.152 -146.016 c
S
Q
}%
    \graphtemp=.5ex
    \advance\graphtemp by 0.650in
    \rlap{\kern 3.870in\lower\graphtemp\hbox to 0pt{\hss $a$\hss}}%
    \graphtemp=.5ex
    \advance\graphtemp by 0.630in
    \rlap{\kern 3.362in\lower\graphtemp\hbox to 0pt{\hss $a$\hss}}%
    \graphtemp=.5ex
    \advance\graphtemp by 0.650in
    \rlap{\kern 3.543in\lower\graphtemp\hbox to 0pt{\hss $a$\hss}}%
    \graphtemp=.5ex
    \advance\graphtemp by 1.043in
    \rlap{\kern 3.130in\lower\graphtemp\hbox to 0pt{\hss $b$\hss}}%
    \graphtemp=.5ex
    \advance\graphtemp by 1.043in
    \rlap{\kern 3.524in\lower\graphtemp\hbox to 0pt{\hss $a$\hss}}%
    \graphtemp=.5ex
    \advance\graphtemp by 1.043in
    \rlap{\kern 3.917in\lower\graphtemp\hbox to 0pt{\hss $a$\hss}}%
    \graphtemp=.5ex
    \advance\graphtemp by 1.004in
    \rlap{\kern 4.193in\lower\graphtemp\hbox to 0pt{\hss $\cdots$\hss}}%
    \graphtemp=.5ex
    \advance\graphtemp by 0.650in
    \rlap{\kern 4.232in\lower\graphtemp\hbox to 0pt{\hss $a$\hss}}%
    \graphtemp=.5ex
    \advance\graphtemp by 1.437in
    \rlap{\kern 3.524in\lower\graphtemp\hbox to 0pt{\hss $b$\hss}}%
    \graphtemp=.5ex
    \advance\graphtemp by 1.437in
    \rlap{\kern 3.917in\lower\graphtemp\hbox to 0pt{\hss $a$\hss}}%
    \graphtemp=.5ex
    \advance\graphtemp by 1.043in
    \rlap{\kern 4.449in\lower\graphtemp\hbox to 0pt{\hss $a$\hss}}%
    \graphtemp=.5ex
    \advance\graphtemp by 1.831in
    \rlap{\kern 3.917in\lower\graphtemp\hbox to 0pt{\hss $b$\hss}}%
\pdfliteral{
q [] 0 d 1 J 1 j
0.576 w
0.072 w
q 0 g
238.392 -56.16 m
232.056 -59.976 l
235.872 -53.568 l
238.392 -56.16 l
B Q
0.576 w
258.336 -33.624 m
237.096 -54.864 l
S
0.072 w
q 0 g
261.144 -52.344 m
259.344 -59.544 l
257.544 -52.344 l
261.144 -52.344 l
B Q
0.576 w
259.344 -33.984 m
259.344 -52.344 l
S
0.072 w
q 0 g
282.888 -53.568 m
286.704 -59.976 l
280.368 -56.16 l
282.888 -53.568 l
B Q
0.576 w
260.352 -33.624 m
281.592 -54.864 l
S
0.072 w
q 0 g
317.376 -55.872 m
323.28 -60.408 l
315.936 -59.184 l
317.376 -55.872 l
B Q
0.576 w
260.64 -33.192 m
316.656 -57.528 l
S
0.072 w
q 0 g
232.848 -80.64 m
231.048 -87.84 l
229.248 -80.64 l
232.848 -80.64 l
B Q
0.576 w
231.048 -62.352 m
231.048 -80.64 l
S
0.072 w
q 0 g
261.144 -80.64 m
259.344 -87.84 l
257.544 -80.64 l
261.144 -80.64 l
B Q
0.576 w
259.344 -62.352 m
259.344 -80.64 l
S
0.072 w
q 0 g
289.512 -80.64 m
287.712 -87.84 l
285.912 -80.64 l
289.512 -80.64 l
B Q
0.576 w
287.712 -62.352 m
287.712 -80.64 l
S
0.072 w
q 0 g
326.376 -80.64 m
324.576 -87.84 l
322.776 -80.64 l
326.376 -80.64 l
B Q
0.576 w
324.576 -62.352 m
324.576 -80.64 l
S
0.072 w
q 0 g
261.144 -109.008 m
259.344 -116.208 l
257.544 -109.008 l
261.144 -109.008 l
B Q
0.576 w
259.344 -90.72 m
259.344 -109.008 l
S
0.072 w
q 0 g
289.512 -109.008 m
287.712 -116.208 l
285.912 -109.008 l
289.512 -109.008 l
B Q
0.576 w
287.712 -90.72 m
287.712 -109.008 l
S
0.072 w
q 0 g
289.512 -137.376 m
287.712 -144.576 l
285.912 -137.376 l
289.512 -137.376 l
B Q
0.576 w
287.712 -119.088 m
287.712 -137.376 l
S
0.072 w
q 0 g
326.376 -109.008 m
324.576 -116.208 l
322.776 -109.008 l
326.376 -109.008 l
B Q
0.576 w
q [3.6 3.744] 0 d
324.576 -90.72 m
324.576 -109.008 l
S Q
0.072 w
q 0 g
326.376 -137.376 m
324.576 -144.576 l
322.776 -137.376 l
326.376 -137.376 l
B Q
0.576 w
q [0 3.6576] 0 d
324.576 -119.088 m
324.576 -137.376 l
S Q
Q
}%
    \hbox{\vrule depth2.224in width0pt height 0pt}%
    \kern 6.417in
  }%
}%

\centerline{\box\graph}
\vspace{2ex}

\noindent
In CCS \cite{Mi90ccs}, process $P$ can be expressed as $P := \sum_{i=1}^\infty a^i.b.{\bf 0}$, where
the abbreviation $a^i.R$ is defined by $a^0.R := R$ and $a^{i+1}.R := a.a^i.R$. This process has
branches featuring any finite positive number of $a$-transitions in succession, and each of those branches
ends with a $b$-transition. Process $Q := P+a^\infty$ is like $P$, except that it also has a
branch with infinitely many $a$-transitions, thus not ending with a $b$. The process $a^\infty$
can be defined as the unique solution of the guarded recursive equation $X = a.X$.
Finally, $P':=a.b+a.P$ and $Q':=a.b + a.Q$.

In all semantics $\equiv$ of \cite{vG01} between trace and
ready trace equivalence one has $P' \equiv P$ and $Q \equiv Q'$.
Thus, the guarded recursive specification $\RS = \{X = a.b + a.X\}$
has both $P$ and $Q$ as solutions.
Hence for all such semantic equivalences that distinguish between $P$ and $Q$ in models of
concurrency where $P$ and $Q$ both exist, RSP fails. This applies to semantics that take
infinite traces (or infinite ready traces) into account. A reason to distinguish between $P$ and
$Q$ is that $P$ is guaranteed to eventually perform a $b$-transition, whereas $Q$ is not.
For linear time semantics that do not take infinite traces into account, and identify $P$ and $Q$,
RSP may hold. But that is best established by means of fixed point arguments \cite{Ros97},
which is out of scope for this paper.

After establishing RSP and RDP for bisimilarity and (ready) simulation equivalence, as well as for
the corresponding preorders, I establish two corollaries, namely (1) that these equivalences and
preorders are full (pre)congruences for guarded recursion, and (2) a sound and ground-complete
axiomatisation of strong bisimilarity for any finitary GSOS language.

\section{Transition system specifications and their meaning}

In this paper $\Var$ and $A$ are two sets of {\em variables} and {\em
actions}. Many concepts that will appear are param-\linebreak[4]eterised by the
choice of $\Var$ and $A$, but as here this choice is fixed, a
corresponding index is suppressed.

\begin{definitionA}{\emph{Terms}}{signature}
A {\em signature} is a set $\Sigma$ of operator symbols $f \not\in \Var$, each equipped with an
\emph{arity} $\ar{f}\in\IN$.\footnote{This work, prior to \Sec{full}, generalises seamlessly to
  operators with infinitely many arguments. Such operators occur, for instance, in
  \cite[Appendix A.2]{BrGH16b}. 
  Hence one may take $\ar{f}$ to be any ordinal.  An operator, like the \emph{summation} or \emph{choice}
  of CCS \cite{Mi90ccs}, that actually takes any \emph{set} of arguments, needs to be simulated by a
  family of operators with a \emph{sequence} of arguments (but yielding the same value upon
  reshuffling of the arguments), one for each cardinality of this set.} The set
$\IT(\Sigma)$ of {\em terms with recursion} over a signature $\Sigma$ is defined
inductively by:
\begin{itemize}
\item $\Var \subseteq \IT(\Sigma)$,
\item if $f \mathbin\in \Sigma$ and $\E_1,...,\E_{\ar{f}} \mathbin\in \IT(\Sigma)$ then
$f(\E_1,...,\E_{\ar{f}}) \mathbin\in \IT(\Sigma)$,
\item If $V_\RS \subseteq \Var$, $~\RS:V_\RS \rightarrow \IT(\Sigma)$ and $X\in V_\RS$,
then $\rec{X|\RS}\in \IT(\Sigma)$.
\end{itemize}
A term $c()$ is abbreviated as $c$.\
A term $\rec{X|\RS}$ as appears in the last clause is a \emph{recursive call}, and
the function $\RS$ therein is called a \emph{recursive specification}\index{recursive specification}.
It is often displayed as $\{X=\RS_X \mid X\in V_\RS\}$.
So $V_\RS$ is the domain of $\RS$ and $\RS_X$ represents $\RS(X)$.
  Each term $\RS_Y$ for $Y\in V_\RS$ counts as a subterm of $\rec{X|\RS}$.
An occurrence of a variable $y$ in a term $\E$ is {\em free} if it does not
occur in a subterm of $\E$ of the form $\rec{X|\RS}$ with $y \in V_\RS$.
Let $\var(\E)$ denote the set of variables occurring free in a
term $\E\in\IT(\Sigma)$, and for $W\subseteq \Var$ let $\IT(\Sigma,W)$ be
the set of terms $\E$ over $\Sigma$ with $\var(\E)\subseteq W$.
$\T(\Sigma):=\IT(\Sigma,\emptyset)$ is the set of \emph{closed} terms over $\Sigma$,
modelling processes.
\end{definitionA}

\begin{example}{recursion}
Let $\Sigma$ contain three unary operators $a.\_$, $b.\_$ and $d.\_$, and an infix-written binary
operator $\|$.
Let $X,Y,z\mathbin\in\Var$. Then $\RS=\{X=(a.X)\|(b.Y),~ Y=(d.Y)\|(X\|z)\}$ is a recursive
specification, so $\rec{X|\RS}\in\IT(\Sigma)$. Since $V_\RS\mathbin=\{X,Y\}$, the only variable that
occurs free in this term is $z$.
\end{example}
As illustrated here, I often choose upper case letters for bound variables (the ones occurring in a set
$V_\RS$) and lower case ones for variables occurring free; this is a convention only.

A recursive specification $\RS$ is meant to denote a $V_\RS$-tuple (in the example above a pair) of
processes that---when filled in for the variables in $V_\RS$---forms a solution to the equations
in $\RS$.\footnote{When $\RS$ contains free variables from a set $W$, this solution is parameterised by the choice of a
valuation of these variables as processes, thereby becoming a $W$-ary function.}
The term $\rec{X|\RS}$ denotes the $X$-component of such a tuple.

\begin{definitionA}{\emph{Substitution}}{substitutions}
A {\em $\Sigma$-substitution} $\sigma$ is a partial function from $\Var$ to
$\IT(\Sigma)$; it is \emph{closed} if it is a total function from $\Var$ to $\T(\Sigma)$.
If $\sigma$ is a substitution and $S$ any syntactic
object, then $S[\sigma]$ denotes the object obtained from $S$ by
replacing, for $x$ in the domain of $\sigma$, every free occurrence of $x$
in $S$ by $\sigma(x)$, while renaming bound variables if necessary to prevent
name-clashes. In that case $S[\sigma]$ is called a {\em substitution instance} of $S$.
A substitution instance $S[\sigma]$ where $\sigma$
is given by $\sigma(x_i)=u_i$ for $i\in I$ is denoted as $S[u_i/x_i]_{i\in I}$,
and for a recursive specification $\RS$, the expression $\rec{\E|\RS}$ abbreviates $\E[\rec{Y|\RS}/Y]_{Y\in V_\RS}$.
\end{definitionA}
\begin{example}{extend}
Extend $\Sigma$ from \ex{recursion} with a constant $c$. Then

$\rec{X|\RS}[b.c/z] = \rec{X|\{X{=}(a.X)\|(b.Y),\;  Y{=}(d.Y)\|(X\|b.c)\}}$,

$\rec{X|\RS}[X/z] = \rec{Z|\{Z{=}(a.Z)\|(b.Y),\;  Y{=}(d.Y)\|(Z\|X)\}}$ and

$\rec{X|\RS}[b.c/Y] = \rec{X|\RS}$.
\end{example}

\noindent
Structural operational semantics \cite{Pl04} defines the meaning of system description languages
whose syntax is given by a signature $\Sigma$. It generates a transition system in which the
states, or \emph{processes}, are the closed terms over $\Sigma$---representing the remaining system
behaviour from that state---and transitions between processes are supplied with labels. The
transitions between processes are obtained from a transition system specification, which consists of
a set of transition rules.

\begin{definitionA}{\emph{Transition system specifications}}{TSS}
Let $\Sigma$ be a signature. A {\em positive $\Sigma$-literal} is an
expression \plat{$\E \goesto{a} \E'$} and a {\em negative $\Sigma$-literal} an
expression \plat{$\E \goesnotto{a}$} with $\E,\E'\in\IT(\Sigma)$ and $a \in A$.
For $\E,\E' \in \IT(\Sigma)$ the literals $\E \goesto{a} \E'$ and $\E \goesnotto{a}$
are said to {\em deny} each other.
A {\em transition rule} over $\Sigma$ is an expression of the form
$\frac{H}{\alpha}$ with $H$ a set of $\Sigma$-literals (the {\em
premises} or {\em antecedents} of the rule) and $\alpha$ a
positive $\Sigma$-literal (the {\em conclusion}).
The terms at the left- and right-hand side of $\alpha$ are
the \emph{source} and \emph{target} of the rule.
A rule \plat{$\frac{H}{\alpha}$} with $H=\emptyset$ is also written $\alpha$.
A literal or transition rule is {\em closed} if it contains no free variables.
A {\em transition system specification (TSS)} is a pair $(\Sigma,\R)$
with $\Sigma$ a signature and $\R$ a set of transition rules over $\Sigma$; it is
{\em positive} if all antecedents of its rules are positive.
\end{definitionA}
The concept of a (positive) TSS presented above was introduced in
{\sc Groote \& Vaandrager} \cite{GrV92}; the negative premises \plat{$\E\goesnotto{a}$}
were added in {\sc Groote} \cite{Gr93}. The notion
generalises the {\em GSOS rule systems} of \cite{BIM95} and constitutes
a formalisation of {\sc Plotkin}'s {\em Structural Operational
Semantics (SOS)} \cite{Pl04} that is sufficiently general to cover
many of its applications.

\newcommand{\en}{\textit{en}}
\newcommand{\sw}{\textit{sw}}
\begin{example}{TSS}
  Let $\Sigma$ feature a constant $0$, unary operators $a.\_$ for $a$ ranging over the set
  $A$ of actions, binary operators $\_ \triangleright \_$ and $\_ \mathbin{\|^{\mbox{}}_S} \_$ for each
  $S \subseteq A$, and unary operators $\en_R(\_)$ for each $R \subseteq A$. Their semantics is
  given by the following set of transition rules:\vspace{-4pt}
  \[\begin{array}{@{}c@{}}\displaystyle
  a.x \goto\al x
  \quad \frac{x \goto{a} x'}{x\mathbin{\|^{\mbox{}}_S} y \goto{a} x'\mathbin{\|^{\mbox{}}_S} y}~{\scriptstyle(a \mathbin{\notin} S)}
  \quad \frac{x \goto{a} x' \quad y \goto{a} y'}{x\mathbin{\|^{\mbox{}}_S} y \goto{a} x'\mathbin{\|^{\mbox{}}_S} y'}~{\scriptstyle(a \mathbin\in S)}
  \quad \frac{y \goto{a} y'}{x\mathbin{\|^{\mbox{}}_S} y \goto{a} x\mathbin{\|^{\mbox{}}_S} y'}~{\scriptstyle(a \mathbin{\notin} S)}
  \quad \frac{\rec{\RS_X|\RS} \goto{a} z}{\rec{X|\RS} \goto{a} z}
  \\[16pt] \displaystyle
  \frac{x \goto{a} x'}{\en_R(x) \goto{a} \en_R(x')}
  \qquad \frac{x \gonotto{a}}{\en_R(x) \goto{a} \en_R(x)}~{\scriptstyle(a \in R)}
  \qquad \frac{x \goto{a} x'}{x\triangleright y \goto{a} x'\triangleright y}~{\scriptstyle(a \neq \sw)}
  \qquad \frac{x \goto{\sw} x'}{x\triangleright y \goto{\sw} y\triangleright x'}
  \end{array}\]
  Here all rules displayed are really rule templates, with one instance for each choice of $a \in A$,
  except when this choice is restricted by the side condition. The operators $0$ and $a.\_$ stem
  from CCS \cite{Mi90ccs} and $\|^{\mbox{}}_S$ from CSP \cite{OH86}. The operators $\en_R$ and
  $\triangleright$ were invented to create this example.
  
  The process $0$ cannot perform any actions, and the process $a.P$ first performs the action $a$
  and then behaves as $P$. The process $P \mathbin{\|^{\mbox{}}_S} Q$ is a partially synchronous parallel
  composition of the processes $P$ and $Q$, where actions $a\notin S$ from $P$ and $Q$ may occur
  independently, whereas actions $a\in S$ can occur only when both $P$ and $Q$ partake in such a
  synchronisation. The process $\en_R(P)$ behaves like $P$, but with the modification that the
  actions from $R$ must be enabled in any state. In a state where $P$ cannot perform such an action,
  $\en_R(P)$ adds it as a self-loop. The purpose of this operator is to take $R$ to be those actions
  that are interpreted as receiving a message, so that the process $\en_R(P)$ can receive a message
  in any state. In case $P$ itself could not receive the message (by having an appropriate outgoing
  transition), then $\en_R(P)$ adds the possibility of receiving and dropping said message.
  As a consequence, whereas in $P \mathbin{\|^{\mbox{}}_S} Q$ the process $P$ can be blocked when it wants to
  ``send'' a message $b \in S$ that $Q$ is not ready to receive, in $P \mathbin{\|^{\mbox{}}_S} \en_S(Q)$ no
  such blocking can occur. Finally, the operator $P \triangleright Q$ lets $P$ work while $Q$ rests,
  until $P$ performs the action $\sw$(itch)$\;\in A$, after which $Q$ works and $P$ rests.
\end{example}

\noindent
The following definition (from \cite{vG93d}) tells when a transition is provable from a
TSS\@. It generalises the standard definition (see e.g.\ \cite{GrV92})
by (also) allowing the derivation of transition rules. The
derivation of a transition \plat{$\E\goesto{a}\E'$} corresponds to the derivation
of the transition rule \plat{$\frac{H}{\E\rule{0pt}{9pt}\goto{a}\E'}$} with $H\mathbin=\emptyset$.
The case $H \mathbin{\neq} \emptyset$ corresponds to the derivation of
\plat{$E\goesto{a}\E'$} under the assumptions $H$.

\begin{definitionA}{\emph{Proof}}{proof}
Let $\TS=(\Sigma,\R)$ be a TSS. A {\em proof} of a transition
rule $\frac{H}{\alpha}$ from $\TS$ is a well-founded, upwardly
branching tree of which the nodes are labelled by $\Sigma$-literals,
such that:
\begin{itemize}
\item the root is labelled by $\alpha$, and
\item if $\beta$ is the label of a node $q$ and $K$ is the set of
labels of the nodes directly above $q$, then
\begin{itemize}
\item either $K=\emptyset$ and $\beta \in H$,
\item or $\frac{K}{\beta}$ is a substitution instance of a rule from $\R$.
\end{itemize}
\end{itemize}
If a proof of $\frac{H}{\alpha}$ from $\TS$ exists, then $\frac{H}{\alpha}$
is {\em provable} from $\TS$, notation $\TS \vdash \frac{H}{\alpha}$.
\end{definitionA}

\begin{example}{provable}
  In \ex{TSS} one can prove that
  \plat{$a.b.0 \mathbin{\|_{\{a,b\}}} \en_{\{a,b\}}(b.a.0) \goto{a} b.0 \mathbin{\|_{\{a,b\}}} \en_{\{a,b\}}(b.a.0)$}
  and
  \plat{$b.0 \mathbin{\|_{\{a,b\}}} \en_{\{a,b\}}(b.a.0) \goto{b} 0 \mathbin{\|_{\{a,b\}}} \en_{\{a,b\}}(a.0)$}.
  No further transitions from these three processes are provable.
  Let $\RS$ be the recursive specification $X = b.0 \mathbin{\|_{b}} \en_{b}(X)$.
  Under the assumption that \plat{$\rec{X|\RS} \goto{b} P$} one can now prove that
  \plat{$\rec{X|\RS} \goto{b} 0 \mathbin{\|_{b}} \en_{b}(P)$}, whereas under the assumption that
  \plat{$\rec{X|\RS} \gonotto{b}$} one proves that
  \plat{$\rec{X|\RS} \goto{b} 0 \mathbin{\|_{b}} \en_{b}(\rec{X|\RS})$}. However, there is no
  outright proof of any transition \plat{$\rec{X|\RS} \goto{b} Q$}.
\end{example}

\noindent
A TSS is meant to specify a transition system in which the transitions are closed positive literals.
A positive TSS specifies a transition relation in a straightforward
way as the set of all provable transitions.%
  \footnote{Readers interested only in the restriction of
  my results to TSSs without negative premises---giving rise to 2-valued transition relations---can
  safely skip the remainder of this section, and identify $p\hoto{a}p'$ with $p\goto{a}p'$.
  In the proof of \lem{guarded} the induction on $\lambda$
  can be skipped, as well as Claims 1, 2 and 3, and the proof proceeds directly by induction on $\pi$.}
But as pointed out in {\sc Groote} \cite{Gr93}, it is not so easy to associate a
transition relation to a TSS with negative premises.
In \cite{vG04} several solutions to this problem were reviewed and evaluated.
Arguably, the best method to assign a meaning to all TSSs is the \emph{well-founded semantics}
of {\sc Van Gelder, Ross \& Schlipf} \cite{GRS91}, which in general yields a
{\em 3-valued transition relation}
$T: {\sf T}(\Sigma) \times A \times {\sf T}(\Sigma)\rightarrow
\{\mbox{\sl present}, \mbox{\sl undetermined}, \mbox{\sl absent}\}$.
I present such a relation as a pair $\rec{CT,PT}$ of 2-valued
transition relations---the sets of \emph{certain} and \emph{possible transitions}---with $CT \subseteq PT$.
When insisting on 2-valued transition relations, the best method is the same,
declaring meaningful only those TSSs whose well-founded semantics is 2-valued, meaning that $CT=PT$.
Such a TSS is called \hypertarget{complete}{\emph{complete}} \cite{vG04}.

Below I follow the presentation from \cite{vG17b}, which was strongly inspired by
earlier accounts \cite{Prz90,BolG96,vG04}. As \df{proof} does not allow the
derivation of negative literals, to arrive at an approximation $AT^+$ of the set of transitions
that are in the transition relation intended by a TSS $\TS$, one could start from an approximation
$AT^-$ of the closed negative literals that ought to be generated, and define $AT^+$ as the
set of closed positive literals provable from $\TS$ under the hypotheses $AT^-$.
Intuitively,
\begin{enumerate}
  \item if $AT^-$ is an under- (resp.\ over-)approximation of the closed negative
    literals that ``really'' hold, then $AT^+$ will be an under- (resp.\ over-)approximation
    of the intended (2-valued) transition relation, and
  \item if $AT^+$ is an under- (resp.\ over-)approximation of the intended transition
    relation, then the set of all closed negative literals that do not deny any literal in
    $AT^+$ is an over- (resp.\ under-)approximation of the closed negative literals
    that agree with the intended transition relation.
\end{enumerate}
Based on this insight, for each ordinal $\lambda$ I define overapproximations $PT^+_\lambda$ and
$PT^-_\lambda$ of the closed\linebreak[4] positive and negative literals that the transition relation intends to
generate, and underapproximations $CT^+_\lambda$ and $CT^-_\lambda$. The approximations get better
with increasing $\lambda$. I start by taking the set of \emph{all} closed negative literals as
$PT^-_0$ and then apply 1 and 2 above to generate these approximations in the order
\(PT^-_\lambda \rightarrow PT^+_\lambda \rightarrow CT^-_\lambda \rightarrow CT^+_\lambda \rightarrow
PT^-_{\lambda+1} \rightarrow \cdots\).

\begin{definitionA}{\emph{Over- and underapproximations
      of transition relations} \cite{vG17b}}{well-founded}
Let $\TS$ be a TSS\@.
For ordinals $\lambda$ the sets $CT^+_\lambda$ and $PT^+_\lambda$
of closed positive literals, and $CT^-_\lambda$, $PT^-_\lambda$
of closed negative literals are defined inductively~by:
\vspace{-3ex}

\noindent
\hfill
\quad\;$PT^-_\lambda$ \begin{tabular}{l}~\\is the set of negative literals\\ that do not
  deny any\\ $\beta\in CT^+_\kappa$ with $\kappa<\lambda$
  \end{tabular}
\hfill\;
$\beta \in PT^+_\lambda$ iff $\TS\vdash \frac{PT^-_\lambda}{\beta}$
\hfill\mbox{}\vspace{1ex}

\hfill
$CT^-_\lambda$ \begin{tabular}{l}is the set of negative literals\\ that do not
  deny any $\beta\in PT^+_\lambda$
  \end{tabular}
\hfill
$\beta \in CT^+_\lambda$ iff $\TS\vdash \frac{CT^-_\lambda}{\beta}$.
\hfill\mbox{}
\end{definitionA}

\begin{lemmaA}{\cite{vG17b}}{approximations}
$CT^-_\kappa \mathbin\subseteq CT^-_\lambda \mathbin\subseteq PT^-_\lambda \mathbin\subseteq PT^-_\kappa$
and $CT^+_\kappa \mathbin\subseteq CT^+_\lambda \mathbin\subseteq PT^+_\lambda \mathbin\subseteq PT^+_\kappa$
for~$\kappa\mathbin<\lambda$.
\end{lemmaA}
\begin{definitionA}{\emph{Well-founded semantics} \cite{vG17b}}{well-founded semantics}
Define $PT^- :=\bigcap_\lambda PT^-_\lambda$, taking the intersection over all ordinals.
Likewise, $PT^+ :=\bigcap_\lambda PT^+_\lambda$,
$CT^- := \bigcup_\lambda CT^-_\lambda$ and $CT^+: =\bigcup_\lambda CT^+_\lambda$.

The 3-valued transition relation $\rec{CT^+,PT^+}$ constitutes the well-founded semantics of a TSS.
\end{definitionA}
Since the closed literals over $\Sigma$ form a proper set, there must be an ordinal $\kappa$ such
that $PT^-_\lambda = PT^-_\kappa$ for all $\lambda\mathbin>\kappa$, and hence also $PT^+_\lambda = PT^+_\kappa$,
$CT^-_\lambda = CT^-_\kappa$ and $CT^+_\lambda= CT^+_\kappa$.
Such an ordinal $\kappa$ is called a \emph{closure ordinal}. Now $PT^-\mathbin{=}PT^-_\kappa$,
$PT^+\mathbin{=}PT^+_\kappa$, $CT^-\mathbin{=}CT^-_\kappa$ and $CT^+\mathbin{=}CT^+_\kappa$.
Trivially, one obtains:

\begin{observation}{well-founded}
\hypertarget{note}{\qquad $CT^- \subseteq PT^-$ \hfill\hfill $CT^+ \subseteq PT^+$ \hfill\mbox{}}\\[1ex]
\mbox{}\hfill
\quad\;$PT^-$ \begin{tabular}{l}is the set of negative literals\\ that do not
  deny any $\beta\in CT^+$
  \end{tabular}
\hfill\;
$\beta \in PT^+$ iff $\TS\vdash \frac{PT^-}{\beta}$
\hfill\mbox{}\vspace{1ex}

\hfill
$CT^-$ \begin{tabular}{l}is the set of negative literals\\ that do not
  deny any $\beta\in PT^+$
  \end{tabular}
\hfill
$\beta \in CT^+$ iff $\TS\vdash \frac{CT^-}{\beta}$.
\hfill\mbox{}
\end{observation}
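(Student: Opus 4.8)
The plan is to derive all six assertions by instantiating \df{well-founded} at a closure ordinal $\kappa$ and substituting the stabilisation identities $PT^-\mathbin=PT^-_\kappa$, $PT^+\mathbin=PT^+_\kappa$, $CT^-\mathbin=CT^-_\kappa$, $CT^+\mathbin=CT^+_\kappa$ recorded just above the statement. So the first step is simply to fix such a $\kappa$ (whose existence was argued in the preceding paragraph) and keep these four equalities at hand throughout.

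For the two inclusions $CT^-\subseteq PT^-$ and $CT^+\subseteq PT^+$, I would invoke \lem{approximations}: with $0<\kappa+1$ it gives $CT^-_{\kappa+1}\subseteq PT^-_{\kappa+1}$ and $CT^+_{\kappa+1}\subseteq PT^+_{\kappa+1}$ (these are the middle links of the two chains in that lemma), and since $\kappa$ is a closure ordinal the four indexed sets here coincide with $CT^-$, $PT^-$, $CT^+$, $PT^+$ respectively.

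For the characterisations of $PT^+$, $CT^-$ and $CT^+$ in terms of the other sets, each is an immediate substitution into \df{well-founded} at $\lambda\mathbin=\kappa$: the clause $\beta\in PT^+_\kappa \iff \TS\vdash\frac{PT^-_\kappa}{\beta}$ becomes $\beta\in PT^+ \iff \TS\vdash\frac{PT^-}{\beta}$; the clause describing $CT^-_\kappa$ as the negative literals denying no $\beta\in PT^+_\kappa$ becomes the stated description of $CT^-$; and $\beta\in CT^+_\kappa \iff \TS\vdash\frac{CT^-_\kappa}{\beta}$ becomes $\beta\in CT^+ \iff \TS\vdash\frac{CT^-}{\beta}$.

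The one clause that is not a direct substitution, and the only place deserving a sentence of justification, is the description of $PT^-$: \df{well-founded} phrases $PT^-_\lambda$ in terms of all $CT^+_{\kappa'}$ with $\kappa'<\lambda$ rather than of $CT^+$ itself. I would argue directly from $PT^-\mathbin=\bigcap_\lambda PT^-_\lambda$: a closed negative literal lies in $PT^-$ iff for every ordinal $\lambda$ it denies no $\beta\in CT^+_{\kappa'}$ with $\kappa'<\lambda$, iff it denies no $\beta\in CT^+_{\kappa'}$ for any ordinal $\kappa'$ whatsoever (each $\kappa'$ being below some $\lambda$), iff it denies no $\beta\in\bigcup_{\kappa'}CT^+_{\kappa'}=CT^+$. (Alternatively, instantiate \df{well-founded} at $\lambda\mathbin=\kappa+1$ and use the monotonicity of $\kappa'\mapsto CT^+_{\kappa'}$ from \lem{approximations} to collapse the conjunction of the conditions over $\kappa'\le\kappa$ to the single one at $\kappa'\mathbin=\kappa$.) I do not anticipate any genuine obstacle; the collapsing of this $\kappa'<\lambda$ quantifier, resting on \lem{approximations}, is the only non-mechanical point.
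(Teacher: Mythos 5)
Your proposal is correct and is exactly the argument the paper intends: the paper states the observation with no explicit proof (``Trivially, one obtains''), relying on the reader to fix a closure ordinal $\kappa$, substitute the stabilisation identities into \df{well-founded}, and use \lem{approximations} for the two inclusions, precisely as you do. Your extra care with the $PT^-$ clause --- collapsing the quantifier over $\kappa'<\lambda$ either directly from $PT^-=\bigcap_\lambda PT^-_\lambda$ or via monotonicity at $\lambda=\kappa+1$ --- correctly identifies and handles the one step that is not a literal substitution.
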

In \cite{vG17b} it is shown that the above account of the well-founded semantics is consistent with
the one in \cite{vG04}, and thereby with the ones in \cite{BolG96,Prz90,GRS91}.

Below the statements
$P \goto{a}_\lambda Q$, $P \gonotto{a}_\lambda$, $P \hoto{a}_\lambda Q$ and
$P \honotto{a}_\lambda$ will abbreviate $(P \goto{a} Q)\in CT^+_\lambda$, 
$(P \gonotto{a})\mathbin\in CT^-_\lambda$, $(P \goto{a} Q)\in PT^+_\lambda$
and $(P \gonotto{a})\in PT^-_\lambda$, respectively.
Likewise, $P \goto{a} Q$ and $P \hoto{a} Q$ abbreviate $(P \goto{a} Q)\in CT^+$, 
and $(P \goto{a} Q)\in PT^+$.

\begin{example}{solution S}
  For $\RS$ the recursive specification of \ex{provable}, we have
  \plat{$\rec{X|\RS} \gonotto{a}$} for all $a \neq b$,
  \plat{$\rec{X|\RS} \honotto{b}$} and \plat{$\rec{X|\RS} \hoto{b} 0 \mathbin{\|_{b}} \en_{b}(\rec{X|\RS})$}.
  For this application the ordinal 0 is already a closure ordinal.
\end{example}

\section{The ready simulation format with recursion}

\begin{definitionA}{\emph{Ready simulation format with recursion}}{ntyft}
An \emph{ntyft/ntyxt rule} is a transition rule whose source contains at most one operator symbol
and no multiple occurrences of the same variable, and in which the right-hand sides of positive
premises are variables that are all distinct and do not occur in the source. It has
\emph{lookahead} if some variable both occurs in the right-hand side of a premise and in the left-hand
side of a premise.
A TSS is in the \emph{ready simulation format with recursion} if for every recursive
specification $\RS$ and $X \in V_\RS$ it has a rule\vspace{-1ex}
\[\displaystyle\frac{\rec{\RS_X|\RS} \goesto{a} z}{\rec{X|\RS}\goesto{a}z}\]
with $z \in \Var$ and all of its other rules are recursion-free\footnote{i.e.\ without
containing recursive calls at all. I conjecture that this restriction can be lifted at the expense of
extra bookkeeping regarding bound variables. As I am not aware of any applications of such a
generalisation I did not explore this further.} ntyft/ntyxt rules without lookahead.
\end{definitionA}

\begin{example}{RS format}
The TSS of \ex{TSS} is in the GSOS format of \cite{BIM88} (cf.~\df{GSOS}), and thus certainly in the
ready simulation format. To make it a non-GSOS example of a TSS in ready simulation format, add for
instance a ternary operator $h$ with a rule
$\displaystyle\frac{\en_{b}(x \triangleright y) \goto{a} x'}{h(x,y,z) \goto{a} h(z,x',y)}$ for each $a \in A$.
\end{example}

\section{Guardedness}

\begin{definitionA}{Frozen arguments of operators \cite{FGW12}}{liquid/frozen}
  Let $\Sigma$ be a signature and $\Gamma$ a unary predicate on the collection
  $\{(f,i)\mid 1 \leq i \leq \ar{f},~f \in \Sigma \}$ of arguments of operators. If
  $\Gamma(f,i)$, then argument $i$ of $f$ is {\em $\overline\Gamma$-frozen}; otherwise it is
  \emph{$\overline\Gamma$-liquid}. An occurrence of a variable $x$ in a term $\E$ is
  {\em $\overline\Gamma$-frozen} iff $\E$ contains a subterm $f(\E_1,\ldots,\E_{\ar{f}})$ such that the
  occurrence of $x$ is in $\E_i$ for a $\overline\Gamma$-frozen argument $i$ of $f$.
\end{definitionA}
To prepare for the definition of guardedness below, I here use a predicate $\Gamma$ that 
marks the frozen arguments of operators. However, I do not want to break the convention of
\cite{FGW12}, where a predicate $\Gamma$ marks the liquid arguments. This is why I define frozen and
liquid arguments and variable occurrences in terms of the complement $\overline\Gamma$ of my $\Gamma$.

\begin{definition}{respect for guardedness}
  Let $\TS$ be a TSS in the ready simulation format with recursion.
  A predicate $\Gamma$ as above \emph{respects guardedness} for $\TS$ if for each rule $r$ of
  $\TS$ with source $f(x_1,\dots,x_{\ar{f}})$ and $i$ a $\overline\Gamma$-frozen argument of $f$, the
  variable $x_i$ only occurs $\overline\Gamma$-frozen in the left-hand sides of the premises of $r$.
\end{definition}
Note that the union of any collection of predicates $\Gamma$ that respect guardedness for
$\TS$ is itself a predicate that respects guardedness for $\TS$. Hence there exists a largest predicate
$\Gamma_{\!\TS}$ that respects guardedness for $\TS$. Henceforth, \emph{$\TS$-guarded} (or
\emph{guarded} when $\TS$ is understood from context) will mean $\overline{\Gamma_{\!\TS}}$-frozen.

Thus, if $r$ is a rule of $\TS$ with source $f(x_1,\dots,x_{\ar{f}})$ and $i$ is a guarded (=
$\overline{\Gamma_{\!\TS}}$-frozen) argument of $f$, then $x_i$ only occurs guarded (=
$\overline{\Gamma_{\!\TS}}$-frozen) in the left-hand sides of the premises of $r$.  

\begin{definition}{guarded}
Let $\TS$ be a TSS in the ready simulation format with recursion.
An occurrence of a variable $x$ in a term $\E$ is \emph{$\TS$-guarded} if it is $\overline\Gamma$-frozen
for some predicate $\Gamma$ on arguments of operators that respects guardedness for $\TS$.
A term $\E$ is \emph{$\TS$-guarded} if all free occurrences of variables in $\E$ are $\TS$-guarded.
\end{definition}

\begin{example}{guarded}
For the LTS $\TS$ of \ex{RS format}, the predicate $\Gamma_{\!\TS}$ marks as $\overline{\Gamma_{\!\TS}}$-frozen
(1) the argument of the unary action prefixing operators $a.\_$, (2) the second argument of the
binary operator $\triangleright$, (3) the third argument of $h$, and (4) the second argument of $h$.
In case (2) this is because there is no rule with source $x \triangleright y$ in which $y$ occurs in
the left-hand side of a premise. Likewise, for (1), there is no rule with source $a.x$ in which $x$
occurs in a premise---in fact, the only rule with source $a.x$ has no premises
at all. Similarly, for (3), the only rule with source $h(x,y,z)$ lacks a premise containing $z$. Case
(4) is slightly harder, as the only rule with source $h(x,y,z)$ does have a premise containing $y$.
However, in the left-hand side of that premise $y$ merely occurs $\overline{\Gamma_{\!\TS}}$-frozen.
Namely, in the term $\en_{b}(x \triangleright y)$ the operator $\triangleright$ guards the
occurrence of $y$. As a consequence, the term $h(a.x,x,x)$ is $\TS$-guarded, as in it the variable
$x$ occurs $\TS$-guarded only.
\end{example}
The next lemma, that for CCS goes back to \textsc{Milner} \cite{Mi90ccs}, is a crucial tool in proving RSP\@.
When creating or adapting a definition of guardedness, the main consideration is to make sure that
this lemma holds. Below $\vec{P},\vec{Q}\in\T^W$ are $W$-tuples of processes; such tuples are also
substitutions---hence the notation $G[\vec{P}]$.

\begin{lemma}{guarded}
Let $G\in\IT(\Sigma,W)$ be guarded and have free variables from $W\subseteq \Var$ only, and let
$\vec{P},\vec{Q}\in\T^W$.\linebreak
If $G[\vec{P}] \goesto\al R$ with $\al\in A$, then $R$ has the form $G'[\vec{P}]$ for some
$G'\in\IT(\Sigma,W)$. Moreover $G[\vec{Q}] \goesto\al G'[\vec{Q}]$.
If $G[\vec{Q}] \hoto\al R$ with $\al\in A$, then $R$ has the form $G'[\vec{Q}]$ for some
$G'\in\IT(\Sigma,W)$. Moreover $G[\vec{P}] \hoto\al G'[\vec{P}]$.
\end{lemma}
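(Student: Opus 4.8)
The plan is to prove the lemma by induction, matching the two-layer structure of the well-founded semantics: an outer induction on the ordinal $\lambda$ (for the ``certain''/$\goto\al$ statements) respectively an induction on closure (for the ``possible''/$\hoto\al$ statements), and an inner induction on the structure of a proof $\pi$ witnessing the transition. Concretely, for the first half I would show by induction on $\lambda$ that if $G[\vec P] \goto\al_\lambda R$ then $R = G'[\vec P]$ for a guarded $G' \in \IT(\Sigma,W)$ and $G[\vec Q] \goto\al_\lambda G'[\vec Q]$; the claim for $\goto\al$ (= membership in $CT^+$) then follows since $CT^+ = \bigcup_\lambda CT^+_\lambda$. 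Inside the step for a fixed $\lambda$ I would do a further induction on the proof tree $\pi$ of $G[\vec P] \goto\al R$ from the hypotheses $CT^-_\lambda$, simultaneously using the inductive hypothesis on the ordinal to handle negative premises that get discharged against $CT^-_\lambda$ (which, via \obs{well-founded}-style reasoning, are negative literals not denying anything in $PT^+_{<\lambda}$ or similar). The second half, for $\hoto\al$, is entirely symmetric with the roles of $\vec P$ and $\vec Q$ swapped and with $PT^+$/$PT^-$ in place of $CT^+$/$CT^-$; I would remark that it is proved ``mutatis mutandis'' rather than repeat it.

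The heart of the argument is the inner induction on the proof $\pi$ of $G[\vec P] \goto\al R$. I would case on the shape of $G$. If $G$ is a variable $x \in W$, then the free occurrence of $x$ in $G$ must be guarded, but a lone variable occurrence is by definition not frozen under any $\Gamma$ — so this case is vacuous (guardedness of $G$ forbids $G \in W$ outright, or more precisely the transition cannot be fired because $\vec P(x)$ is an arbitrary process and the lemma makes no claim about it; one has to check the definition rules this out, which it does since a variable occurrence is never $\overline\Gamma$-frozen). If $G = \rec{X|\RS}$ is a recursive call, the only applicable rule is the recursion rule $\frac{\rec{\RS_X|\RS}\goto a z}{\rec{X|\RS}\goto a z}$, so the last step of $\pi$ reduces the goal to a shorter proof of $\rec{\RS_X|\RS}[\vec P] \goto\al R$ = $(\rec{\RS_X|\RS})[\vec P]\goto\al R$, where $\rec{\RS_X|\RS}$ is again a term in $\IT(\Sigma,W)$ — and crucially still guarded, since $\rec{X|\RS}$ being guarded (all its free variables guarded) and the substitution of bound variables by recursive calls with no free variables in $W$ preserves this; then apply the induction hypothesis on $\pi$. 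The main case is $G = f(E_1,\dots,E_{\ar f})$: the last rule of $\pi$ is a substitution instance $r[\rho]$ of a recursion-free ntyft/ntyxt rule $r$ with source $f(x_1,\dots,x_{\ar f})$, where $\rho(x_j)$ is (up to matching) $E_j[\vec P]$. For each positive premise $t \goto b y$ of $r$, $t[\rho]$ is a subterm-substitution-instance whose proof is a proper subtree, so I can try to apply the induction hypothesis — but only if $t[\rho]$ is of the form $H[\vec P]$ for a guarded $H$. This is where \df{respect for guardedness} does the work: in $t$, any $x_i$ that sits in a frozen argument slot of $f$ occurs only $\overline{\Gamma_{\!\TS}}$-frozen, i.e. guarded; and the occurrences of $x_i$ in $E_i[\vec P]$, when $i$ is a guarded argument of $f$, are exactly the places where $\vec P$ got substituted into $E_i$, hence substituting $\vec P$-components there is harmless for the ``$G'$ is over $W$'' conclusion. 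So I can push the $\vec P$'s outward: $t[\rho]$ equals $H_t[\vec P]$ with $H_t$ guarded, built from $t$ by substituting $E_j$ for each $x_j$ that is liquid in $f$ (these carry the ``real'' variables, but they occur guarded in $t$ via the format/respect condition) and by plugging the already-processed right-hand-side variables of earlier premises.

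Carrying this through, I accumulate from the induction hypotheses: for each positive premise, the matched target is $H_t'[\vec P]$ with $H_t'$ guarded and the same transition holds with $\vec Q$; and for each negative premise $t \gonotto b$ with $t[\rho] = H_t[\vec P]$, I need $H_t[\vec Q]\gonotto b$ as well — this follows because a negative literal in $CT^-_\lambda$ does not deny anything in $PT^+_\lambda$, combined with the (already-established, by the symmetric part of the lemma for smaller data, or by a simultaneous induction) fact that $H_t[\vec P] \hoto b S$ iff $H_t[\vec Q]\hoto b$-something, so $H_t[\vec P]$ and $H_t[\vec Q]$ have the ``same'' possible transitions up to re-substitution; hence $H_t[\vec P]\gonotto b$ (certain) transfers to $H_t[\vec Q]\gonotto b$. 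I then re-assemble: the target $R$ of the conclusion of $r$ is a term over $\{x_1,\dots,x_{\ar f}\} \cup \{$premise RHS variables$\}$; substituting in the $E_j$'s and the $H_t'$'s yields $R = G'[\vec P]$ with $G'$ guarded and over $W$, and firing the same rule instance over $\vec Q$ gives $G[\vec Q]\goto\al G'[\vec Q]$.

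I expect the main obstacle to be the bookkeeping around negative premises and the interleaving of the two inductions: establishing that a discharged negative premise $H_t[\vec P] \gonotto b \in CT^-_\lambda$ implies $H_t[\vec Q]\gonotto b \in CT^-_\lambda$. This needs the ``possible-transition'' half of the lemma applied to the strictly smaller term $H_t$ (smaller than $G$ in the proof-tree order, not the ordinal order), so the cleanest organisation is a single statement proved by induction on $\lambda$, and within that, by induction on the proof $\pi$, with both the $\goto\al/\gonotto\al$ and $\hoto\al/\honotto\al$ conclusions bundled together so each can feed the other at subterms. The absence of lookahead in the format is what makes this safe: it guarantees that a variable appearing on the left of a premise never reappears as a premise's right-hand side, so the "build $H_t$ from $t$" construction does not create circular dependencies and the premises can be processed in order. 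I would flag that, and the variable-renaming/freshness issues in forming $G'$, as the only genuinely delicate points; everything else is a routine but lengthy structural case analysis.
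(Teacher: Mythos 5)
Your proposal follows essentially the same route as the paper's proof: a bundled four-part statement (certain/possible, positive/negative) proved by an outer induction on the ordinal $\lambda$ with a nested induction on the proof tree $\pi$, a case split on the shape of $G$, the guardedness-propagation argument for premise left-hand sides via the respect-for-guardedness condition, and the transfer of negative premises through the possible-transition half of the claim. Two details are off, though neither damages the skeleton. First, your inductive invariant asserts that the residual term $G'$ is \emph{guarded}; this is false (for $G=a.z$ the rule $a.x\goto{a}x$ yields $G'=z$, which is unguarded) and also unnecessary --- since the format forbids lookahead, the targets $E'_y$ produced for right-hand-side variables never re-enter a premise's left-hand side, so their guardedness is never needed, and the paper accordingly makes no such claim. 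Second, your justification for guardedness of $t[\rho]$ conflates the two cases: for a \emph{frozen} argument $i$ the respect condition ensures $x_i$ occurs only frozen in $t$, so whatever $G_i$ contains ends up guarded; for a \emph{liquid} argument $j$ the respect condition gives nothing, and instead one uses that $G=f(G_1,\dots,G_{\ar f})$ is itself guarded, which forces every free variable occurrence inside $G_j$ to be guarded already \emph{within} $G_j$. Relatedly, there is no ``processing of premises in order'': absence of lookahead means right-hand-side variables simply never occur in any premise's left-hand side, so no such ordering is needed. Finally, on the ordinal bookkeeping: the transfer of $CT^-_\lambda$-literals goes through the $PT^+$-claim at the \emph{same} $\lambda$, while the transfer of $PT^-_\lambda$-literals goes through the $CT^+$-claim at strictly \emph{smaller} ordinals; it is this asymmetry that makes the mutual dependencies well-founded, and it is worth writing out explicitly rather than leaving as ``or similar''.
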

\begin{proof}
  Given $\vec{P},\vec{Q}\in\T^W$, I will show, for each ordinal $\lambda$, that
  for each guarded $G\in\IT(\Sigma,W)$ and $\al\in A$,
  \begin{enumerate}
  \item If $G[\vec{Q}] \honotto\al_\lambda$ then $G[\vec{P}] \honotto\al_\lambda$,
  \item If $G[\vec{Q}] \hoto\al_\lambda R$ then $R$ has the form $G'[\vec{Q}]$ for some
$G'\in\IT(\Sigma,W)$. Moreover $G[\vec{P}] \hoto\al_\lambda G'[\vec{P}]$,
  \item If $G[\vec{P}] \goesnotto\al_\lambda$ then $G[\vec{Q}] \goesnotto\al_\lambda$, and
  \item If $G[\vec{P}] \goesto\al_\lambda R$ then $R$ has the form $G'[\vec{P}]$ for some
$G'\in\IT(\Sigma,W)$. Moreover $G[\vec{Q}] \goesto\al_\lambda G'[\vec{Q}]$.
  \end{enumerate}
The statements of the lemma then follow by taking $\lambda$ to be a closure ordinal.  
I apply induction on $\lambda$.
  \begin{enumerate}
\item Suppose $G[\vec{P}] \honotto\al_\lambda$ does not hold. Then, by \df{well-founded},
  the literal $G[\vec{P}] \honotto\al$ denies a $\beta \in CT^+_\kappa$ with $\kappa < \lambda$.
  The literal $\beta$ must have the form $G[\vec{P}] \goesto\al_\kappa R$.
  By induction, using Claim 4 above, $G[\vec{Q}] \goesto\al_\kappa R'$ for some $R'\in \T(\Sigma)$.
  Consequently, $G[\vec{Q}] \honotto\al_\lambda$ does not hold. 
\item The proof of Claim 2 proceeds exactly like the one of Claim 4 below.
\item Suppose $G[\vec{Q}] \gonotto\al_\lambda$ does not hold. Then, by \df{well-founded},
  the literal $G[\vec{Q}] \gonotto\al$ denies a $\beta \in PT^+_\lambda$.
  The literal $\beta$ must have the form $G[\vec{Q}] \hoto\al_\lambda R$.
  Using Claim 2 above, $G[\vec{Q}] \hoto\al_\lambda R'$ for some $R'\in \T(\Sigma)$.
  Consequently, $G[\vec{P}] \honotto\al_\lambda$ does not hold. 
\item The proof of this claim proceeds by a nested induction on the derivation $\pi$ of
  $G[\vec{P}] \goesto\al R$ from the hypotheses $CT^-_\lambda$, making a case distinction on the
  shape of $G$.
  Trivially, if a closed positive literal is provable from a set of closed literals, all literals
  occurring in the proof can be chosen closed. This is the case because applying the same closed
  substitution to all literals in a proof, preserves the property of it being a proof.
  I thus may assume that all literals in $\pi$ are closed.

The case that $G=x\in \Var$ cannot occur, as $G$ is guarded.

Let $G = f(G_1,\dots,G_{\ar{f}})$ for some $f\in \Sigma$ and $G_1,\dots,G_{\ar{f}}\in \IT(\Sigma,W)$.
Let $r=\frac{H}{f(x_1,\dots,x_{\ar{f}}) \goto\al F}$ be the ntyft/ntyxt-rule without lookahead
and $\sigma: V \rightarrow \T(\Sigma)$ the substitution that are used in the last step of $\pi$.
Here $V$ is the set of free variables occurring in $r$.
Then $F[\sigma]=R$ and $f(x_1,\dots,x_{\ar{f}})[\sigma]=G[\vec{P}]$,
so $\sigma(x_i) = G_i[\vec{P}]$ for $i=1,\dots,\ar{f}$.
Now define the substitution $\rho: V \rightarrow \IT(\Sigma,W)$ by\vspace{-2ex}
\begin{center}
\begin{tabular}{l@{~=~}l@{\quad for }l}
  $\rho(x_i)$ & $G_i$ & $i=1,\dots,\ar{f}$, and\\
  $\rho(y)$ & $\sigma(y)$ & all $y \in V{\setminus}\{x_1,\dots,x_{\ar{f}}\}$.
\end{tabular}
\end{center}
Then $E[\sigma] = E[\rho][\vec{P}]$ for any recursion-free term $E \in \IT(\Sigma,V)$.

Given a positive premise $E_y \goesto{c} y$ in $H$, one finds that
$E_y[\rho][\vec{P}] = E_y[\sigma] \goesto{c}_\lambda \sigma(y)$, and
the proof of this transition is a subproof of $\pi$.
Consequently, I may apply the induction hypothesis on this transition, provided I can show that 
$E_y[\rho]$ is guarded.

Any free occurrence of a variable $y$ in $E_y[\rho]$ must lay in a term $G_i$, for some
$i \in \{1,\dots,\ar{f}\}$, that is substituted by $\rho$ for an occurrence of $x_i$ in $E_y$.
In case $x_i$ is a guarded argument of $f$, each occurrence of $x_i$ in $E_y$ is guarded, and hence
the occurrence of $y$ is guarded in $E_y[\rho]$ as well.
In case $x_i$ is an unguarded argument of $f$, as this occurrence of $y$ must be guarded in $G$,
it is guarded in $G_i$, and thus also in $E_y[\rho]$.

So $E_y[\rho]$ is guarded indeed. By induction, $\sigma(y)$ has the form $E'_y[\vec{P}]$ for some
$E'_y\in\IT(\Sigma,W)$. Moreover \plat{$E_y[\rho][\vec{Q}] \goesto{c}_\lambda E'_y[\vec{Q}]$}.
Let $V_{\rm RHS}$ be the set of variables occurring as right-hand sides of premises in $H$,
and define the substitution $\overline\rho: V \rightarrow \IT(\Sigma,W)$ by
\begin{center}
\begin{tabular}{l@{~=~}l@{\quad for }l}
  $\overline\rho(x_i)$ & $G_i$ & $i=1,\dots,\ar{f}$,\\
  $\overline\rho(y)$ & $E'_y$ &  $y \in V_{\rm RHS}$, and\\
  $\overline\rho(y)$ & $\sigma(y)$ & all $y \in V{\setminus}(\{x_1,\dots,x_{\ar{f}}\}\cup V_{\rm RHS})$.
\end{tabular}
\end{center}
Then $E[\sigma] = E[\overline\rho][\vec{P}]$ for any recursion-free term $E \in \IT(\Sigma,V)$.
Thus $R = F[\sigma] = F[\overline\rho][\vec{P}]$, which is the first statement that needed to be proved.

Towards the second, let $\overline\sigma: V \rightarrow \T(\Sigma)$ be the substitution given by
\begin{center}
\begin{tabular}{l@{~=~}l@{\quad for }l}
  $\overline\sigma(x_i)$ & $G_i[\vec{Q}]$ & $i=1,\dots,\ar{f}$,\\
  $\overline\sigma(y)$ & $E'_y[\vec{Q}]$ &  $y \in V_{\rm RHS}$, and\\
  $\overline\sigma(y)$ & $\sigma(y)$ & all $y \in V{\setminus}(\{x_1,\dots,x_{\ar{f}}\}\cup V_{\rm RHS})$.
\end{tabular}
\end{center}
Then $E[\overline\sigma] = E[\overline\rho][\vec{Q}]$ for any recursion-free term $E \in \IT(\Sigma,V)$.
For each premise $E_y \goesto{c} y$ in $H$ one has $E_y[\rho] = E_y[\overline\rho]$ since the
rule $r$ has no lookahead. Thus \[E_y[\overline\sigma] = E_y[\overline\rho][\vec{Q}] =
E_y[\rho][\vec{Q}] \goesto{c}_\lambda E'_y[\vec{Q}] = y[\overline\sigma].\]

For each premise $E \goesnotto{c}$ in $H$ one has
$E[\rho][\vec{P}] = E[\sigma] \goesnotto{c}_\lambda$, and $E[\rho]$ is guarded by the same
reasoning as above. By Claim 3 above, $E[\rho][\vec{Q}]\goesnotto{c}_\lambda$.
Again $E[\rho] = E[\overline\rho]$, as $r$ has no lookahead, so 
\(E[\overline\sigma] = E[\overline\rho][\vec{Q}] = E[\rho][\vec{Q}] \goesnotto{c}_\lambda\).
Applying rule $r$, it follows that
\[G[\vec{Q}] = f(x_1,\dots,x_{\ar{f}})[\overline\sigma] \goesto\al_\lambda F[\overline\sigma]
= F[\overline\rho][\vec{Q}],\] which had to be shown.\pagebreak[3]

Finally, let $G = \rec{X|\RS}$, so that $G[\vec{P}] = \rec{X|\RS[\vec{P}^\dagger]}$, where
$\vec{P}^\dagger$ is the $W {\setminus} V_\RS$-tuple that is left of $\vec{P}$ after deleting the
$Y$-components for $Y\in V_\RS$.
The transition $\rec{\RS_X[\vec{P}^\dagger]|\RS[\vec{P}^\dagger]} \goesto\al_\lambda R$ is derivable through a
subderivation of the one for $\rec{X|\RS[\vec{P}^\dagger]}\goesto\al_\lambda R$.
Moreover, $\rec{\RS_X[\vec{P}^\dagger]|\RS[\vec{P}^\dagger]} = \rec{\RS_X|\RS}[\vec{P}]$.
So by induction, $R$ has the form $E[\vec{P}]$ for some $E\mathbin\in\IT(\Sigma,W)$,
and $\rec{\RS_X|\RS}[\vec{Q}] \goesto\al_\lambda E[\vec{Q}]$. 
Since $\rec{\RS_X|\RS}[\vec{Q}] = \rec{\RS_X[\vec{Q}^\dagger]|\RS[\vec{Q}^\dagger]}$, it follows that 
$G[\vec{Q}] = \rec{X|\RS}[\vec{Q}]= \rec{X|\RS[\vec{Q}^\dagger]}\goesto\al_\lambda G'[\vec{Q}]$. 
\qed
  \end{enumerate}
\end{proof}
\vspace{1ex}

\begin{definitionA}{Guarded recursive specifications}{guarded RS}
A recursive specification $\RS$ is \emph{manifestly guarded} if for each $X,Y\in V_\RS$ all
occurrences of the variable $Y$ in the term $\RS_X$ are guarded.
It is \emph{guarded} if it can be converted into a manifestly guarded recursive specification by
repeated substitution of expressions $\RS_Y$ for variables $Y\in V_\RS$ occurring in expressions
$\RS_Z$ for $Z\in V_\RS$.
\end{definitionA}

\begin{example}{manifestly}
  The recursive specification $\{ X = h(a.b.\sw.X, X, X) \}$ is manifestly guarded.
  The specification $\RS := \{ X = Y \triangleright X, ~ Y = (a.b.\sw.X) \triangleright Y \}$ is not
  manifestly guarded, since $Y$ occurs unguarded in $\RS_X$. However,  $\RS$ is guarded, because
  substituting $\RS_Y =  (a.b.\sw.X) \triangleright Y$ for $Y$ in $\RS_X = Y \triangleright X$
  yields the recursive specification
  $\{ X = ((a.b.\sw.X) \triangleright Y) \triangleright X, ~ Y = (a.b.\sw.X) \triangleright Y \}$,
  which \emph{is} manifestly guarded.
\end{example}

\section{The bisimulation preorder}

Traditionally \cite{Mi90ccs}, bisimilarity is defined on 2-valued transition systems only, whereas
the structural operational semantics of a language specified by a TSS can be 3-valued. Rather than
limit my results to languages specified by \hyperlink{complete}{complete TSSs}, I follow
\cite{vG17b} in using an extension of the notion of bisimilarity to 3-valued transition
systems. This extension stems from \cite{LT88} under the name \emph{modal refinement}; there
3-valued transition systems are called \emph{modal transition systems}.

\hypertarget{bisimilarity}{
\begin{definitionA}{Bisimilarity}{refinement}
Let $\TS = (\Sigma,\R)$ be a TSS\@.
A \emph{bisimulation} $\B$ is a binary relation on $\T(\Sigma)$ such that, for
$P,Q\in\T(\Sigma)$ and $a\in A$,
\begin{itemize}
\item if $P\B Q$ and $P \goto{a} P'$, then there is a $Q'$ with $Q \goto{a} Q'$ and $P' \B Q'$,
\item if $P\B Q$ and $Q \hoto{a} Q'$, then there is a $P'$ with $P \hoto{a} P'$ and $P' \B Q'$.
\end{itemize}
A process $Q\mathbin\in \T(\Sigma)$ is a \emph{modal refinement} of $P\mathbin\in \T(\Sigma)$, notation
$P \sqsubseteq_B Q$, if there exists a bisimulation $\B$ with $P \B Q$.
I call $\sqsubseteq_B$ the \emph{bisimulation preorder}, or \emph{bisimilarity}.
The \hyperlink{kernel}{kernel} of $\sqsubseteq$, given by
${\equiv_B} := {\sqsubseteq_B} \cap {\sqsupseteq_B}$, is \emph{bisimulation equivalence}.
\end{definitionA}}
Clearly, modal refinement is reflexive and transitive, and hence a preorder.
The underlying idea is that a process $P$ with a 3-valued transition relation $\rec{CT,PT}$ is a
\emph{specification} of a process with a 2-valued transition relation, in which the presence or
absence of certain transitions is left open. $CT$ contains the transitions that are \emph{required}
by the specification, and $PT$ the ones that are \emph{allowed}. If $P \sqsubseteq_B Q$, then $Q$
may be closer to the eventual implementation, in the sense that some of the undetermined transitions
have been resolved to present or absent. The requirements of \df{refinement} now say that any
transition that is required by $P$ should be (matched by a transition) required by $Q$, whereas any
transition allowed by $Q$, should certainly be (matched by a transition) allowed by $P$.

In case $P$ and $Q$ are 2-valued (i.e.~\emph{implementations}) the modal refinement relation is
just the traditional notion of bisimilarity \cite{Mi90ccs} (and thus symmetric).

\begin{theoremA}{Lean precongruence \cite{vG17b}}{congruence}
  Let $\TS = (\Sigma,\R)$ be a TSS in the ready simulation format with recursion.
  Let $E\in\IT(\Sigma,W)$ be an expression, and let $\vec{P},\vec{Q}\in \T(\Sigma)^W$ be $W$-tuples of
  processes for some $W \subseteq \Var$, such that $\vec{P}(x)\sqsubseteq_B\vec{Q}(x)$ for each $x\in W$.
  Then $E[\vec{P}] \sqsubseteq_B E[\vec{Q}]$.
\end{theoremA}
In \cite{vG17b} this result was actually obtained for the more general ntyft/ntyxt-format with recursion,
obtained by lifting the restriction ruling out lookahead. The special case that $E$ is a
recursion-free expression stems from \cite{BolG96} (under a side condition on the TSS called
\emph{well-foundedness}, which was lifted in \cite{FG96}). The special subcase in which $\TS$ has no
negative premises goes back to \cite{GrV92}.

\textsc{Milner} \cite{Mi90ccs} proposed an important tool in proving bisimilarity between two processes: the
notion of a bisimulation up to bisimilarity. Here I extend it to the bisimulation preorder on
3-valued TSSs.

\begin{definitionA}{Bisimulation upto}{upto}
Let $\TS = (\Sigma,\R)$ be a TSS\@.
A \emph{bisimulation $\B$ up to bisimilarity} is a binary relation on $\T(\Sigma)$ such that, for
$P,Q\in\T(\Sigma)$ and $a\in A$,
\begin{itemize}
\item if $P\B Q$ and $P \goto{a} P'$, then there is a $Q'$ with $Q \goto{a} Q'$ and
  $P' \sqsubseteq_B \,\B\, \sqsubseteq_B Q'$,
\item if $P\B Q$ and $Q \hoto{a} Q'$, then there is a $P'$ with $P \hoto{a} P'$ and
  $P' \sqsubseteq_B \,\B\, \sqsubseteq_B Q'$.
\end{itemize}
Here
${\sqsubseteq_B\,\B\,\sqsubseteq_B} := \{(R,S) \mid \exists R',S'.~R\sqsubseteq_B R' \B S'\sqsubseteq_B S\}$.
\end{definitionA}

\begin{proposition}{upto}
If $P \B Q$ for some bisimulation $\B$ up to bisimilarity, then $P \sqsubseteq_B Q$.
\end{proposition}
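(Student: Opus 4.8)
The plan is to show that the relation $\mathcal{U} := {\sqsubseteq_B\,\B\,\sqsubseteq_B}$ is itself a bisimulation. Once that is done, the proposition follows immediately: by \df{refinement}, $\sqsubseteq_B$ is the union of all bisimulations, so ${\mathcal{U}} \subseteq {\sqsubseteq_B}$; and since $\sqsubseteq_B$ is reflexive, the hypothesis $P \B Q$ gives $P \sqsubseteq_B P \B Q \sqsubseteq_B Q$, i.e.\ $P \mathrel{\mathcal{U}} Q$, whence $P \sqsubseteq_B Q$.

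To check that $\mathcal{U}$ is a bisimulation I would take $R \mathrel{\mathcal{U}} S$, witnessed by $R \sqsubseteq_B R' \B S' \sqsubseteq_B S$, and fix bisimulations $\B_1,\B_2$ with $R \mathrel{\B_1} R'$ and $S' \mathrel{\B_2} S$. For the first clause of \df{refinement}, from $R \goto{a} R_1$ I would match the transition first along $\B_1$, obtaining $R' \goto{a} R'_1$ with $R_1 \sqsubseteq_B R'_1$; then along the bisimulation-up-to $\B$, obtaining $S' \goto{a} S'_1$ with $R'_1 \mathrel{\mathcal{U}} S'_1$; and finally along $\B_2$, obtaining $S \goto{a} S_1$ with $S'_1 \sqsubseteq_B S_1$. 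Writing the middle link as $R'_1 \sqsubseteq_B A \B B \sqsubseteq_B S'_1$ and using transitivity of $\sqsubseteq_B$ to absorb $R_1 \sqsubseteq_B R'_1$ into $R'_1 \sqsubseteq_B A$ and $S'_1 \sqsubseteq_B S_1$ into $B \sqsubseteq_B S'_1$, I would conclude $R_1 \sqsubseteq_B A \B B \sqsubseteq_B S_1$, that is $R_1 \mathrel{\mathcal{U}} S_1$.

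The second clause I expect to be entirely symmetric, only with every matching step run from right to left: from $S \hoto{a} S_1$ I would first use the $\hoto{}$-clause of $\B_2$ to get $S' \hoto{a} S'_1$ with $S'_1 \sqsubseteq_B S_1$, then the $\hoto{}$-clause of the up-to relation $\B$ to get $R' \hoto{a} R'_1$ with $R'_1 \mathrel{\mathcal{U}} S'_1$, then the $\hoto{}$-clause of $\B_1$ to get $R \hoto{a} R_1$ with $R_1 \sqsubseteq_B R'_1$, and finish with the same transitivity bookkeeping.

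I do not anticipate a real obstacle; the only points requiring care are the direction of matching --- in the first clause of a bisimulation a $\goto{}$-transition is carried from the left component to the right, whereas in the second clause a $\hoto{}$-transition is carried from right to left --- and the fact that closing $\mathcal{U}$ under pre- and post-composition with $\sqsubseteq_B$ is harmless precisely because $\sqsubseteq_B$ is a preorder, as noted just after \df{refinement}. If anything is fiddly, it is keeping the roles of $\B_1$, $\B$ and $\B_2$ straight through the three-step matching in each clause.
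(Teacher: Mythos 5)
Your proposal is correct and is exactly the paper's argument: the paper's proof likewise reduces the claim, via reflexivity of $\sqsubseteq_B$, to showing that ${\sqsubseteq_B\,\B\,\sqsubseteq_B}$ is a bisimulation, which it dispatches as "straightforward using transitivity"; your three-step matching in each clause is just the expansion of that remark, carried out with the correct directions for the $\goto{}$- and $\hoto{}$-clauses.
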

\begin{proof}
Using the reflexivity of $\sqsubseteq_B$ it suffices to show that $\sqsubseteq_B \,\B\, \sqsubseteq_B$
is a bisimulation.
Using transitivity of $\sqsubseteq_B$ this is straightforward.
\end{proof}

\section{The recursive specification principle}\label{sec:RSP}

Let $\Sigma$ be a signature.
Any preorder ${\sqsubseteq} \subseteq \T(\Sigma)\times\T(\Sigma)$ on the closed terms over $\Sigma$
extends to a preorder ${\sqsubseteq} \subseteq \IT(\Sigma)\times\IT(\Sigma)$ on open terms
by defining $E \sqsubseteq F$ iff $E[\rho] \sqsubseteq F[\rho]$ for each closed substitution
$\rho$. 
It extends to substitutions $\rho,\nu:\Var\rightharpoonup \IT(\Sigma)$
by $\rho\sim \nu$ iff $\textrm{dom}(\rho) = \textrm{dom}(\nu)$ and
$\rho(x) \sim \nu(x)$ for each $x\in \textrm{dom}(\rho)$.
As $W$-tuples $\vec{E},\vec{F}\in \IT(\Sigma)^W$ for $W \subseteq \Var$ are substitutions,
this gives $\vec{E}\sqsubseteq\vec{F}$ iff $\vec{E}(x) \sqsubseteq \vec{F}(x)$ for each $x \in W$.
A recursive specification $\RS: V_\RS \rightarrow  \IT(\Sigma)$ is a $V_\RS$-tuple;
if $\vec{E}\in \IT(\Sigma)^{V_\RS}$ is another $V_\RS$-tuple then with
$\RS[\vec{E}]$ I denote the $V_\RS$-tuple with elements $\RS_X[\vec{E}]$ for each $X \in V_\RS$,
that is, the body $\RS_X$ of the recursive equation of $X$ in which $\vec{E}(Y)$ is substituted for
each variable $Y \in V_\RS$ occurring freely in $\RS_X$.

\begin{definitionA}{Solution}{solution}
For $\RS$ a recursive specification and $\equiv$ a semantic equivalence,
a $V_\RS$-tuple of expressions $\vec{E} \in \IT(\Sigma)^{V_\RS}$ is a \emph{solution} of $\RS$ if
$\vec{E} \equiv \RS[\vec{E}]$.
\end{definitionA}
When generalising this definition with a preorder $\sqsubseteq$ in the r\^ole of $\equiv$, one
obtains \emph{pre-} and \emph{postsolutions}.
\begin{definition}{presolution}
For $\RS$ a recursive specification and $\sqsubseteq$ a semantic equivalence,
a $V_\RS$-tuple of expressions $\vec{E} \in \IT(\Sigma)^{V_\RS}$ is a \emph{presolution}
(resp.\ \emph{postsolution}) of $\RS$ if $\vec{E} \sqsubseteq \RS[\vec{E}]$ (resp.\
$\RS[\vec{E}] \sqsubseteq \vec{E}$).
\end{definition}
The following theorem establishes RSP for the bisimulation preorder.

\begin{theorem}{RSP}
  Let $\TS=(\Sigma,\R)$ be a TSS in the ready simulation format with recursion,
  and $\RS$ be a $\TS$-guarded recursive specification.
If $\vec{E} \sqsubseteq_B \RS[\vec{E}]$ and $\RS[\vec{F}] \sqsubseteq_B \vec{F}$ with
$\vec E, \vec F\mathbin\in\IT(\Sigma)^{V_\RS}$, then $\vec{E} \sqsubseteq_B \vec{F}$.
\end{theorem}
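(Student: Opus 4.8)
Here is how I would attack \thm{RSP}.

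\textbf{Overall approach and preliminary reductions.} The plan is to find a single relation that is a bisimulation up to bisimilarity (\df{upto}) and that, via \pr{upto}, delivers the inequality we want. Before that I would make two harmless reductions. First, since $\sqsubseteq_B$ on open terms is defined through closed substitutions, and applying a closed substitution everywhere turns $\RS$ into a recursive specification that is still $\TS$-guarded and still has $\vec E$ (resp.\ $\vec F$) as a pre- (resp.\ post)solution, it suffices to treat the case where $\RS$ has no free variables outside $V_\RS$ and $\vec E,\vec F\in\T(\Sigma)^{V_\RS}$ are tuples of closed terms; write $\vec P:=\vec E$ and $\vec Q:=\vec F$. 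Second, I would reduce to the case that $\RS$ is \emph{manifestly} guarded. Indeed, if $\RS'$ arises from $\RS$ by replacing one occurrence of a variable $Y\in V_\RS$ inside some body $\RS_Z$ by $\RS_Y$, then, viewing $\RS_Z$ as a context with a fresh hole at that occurrence and applying \thm{congruence} to the presolution hypothesis $\vec P(Y)\sqsubseteq_B\RS_Y[\vec P]$ (resp.\ the postsolution hypothesis $\RS_Y[\vec Q]\sqsubseteq_B\vec Q(Y)$), one gets $\RS_Z[\vec P]\sqsubseteq_B\RS'_Z[\vec P]$ (resp.\ $\RS'_Z[\vec Q]\sqsubseteq_B\RS_Z[\vec Q]$), so $\vec P$ stays a presolution and $\vec Q$ a postsolution of $\RS'$. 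Iterating over the substitutions witnessing guardedness, I may assume henceforth that each body $\RS_X$ is a guarded term in $\IT(\Sigma,V_\RS)$.

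\textbf{The candidate relation.} I would then show that
\[\B \;:=\; \{\,(G[\vec P],\,G[\vec Q]) \mid G\in\IT(\Sigma,V_\RS)\text{ and }G\text{ is }\TS\text{-guarded}\,\}\]
is a bisimulation up to bisimilarity. So take a guarded $G\in\IT(\Sigma,V_\RS)$ and suppose $G[\vec P]\goto\al R$. By \lem{guarded}, $R=G'[\vec P]$ for some $G'\in\IT(\Sigma,V_\RS)$ and $G[\vec Q]\goto\al G'[\vec Q]$. The term $G'$ need not be guarded, but I re-guard it: put $H:=G'[\RS_Y/Y]_{Y\in V_\RS}\in\IT(\Sigma,V_\RS)$. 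Every free variable occurrence in $H$ lies inside a copy of some body $\RS_Y$, which is guarded, so that occurrence is guarded in $H$ as well; hence $(H[\vec P],H[\vec Q])\in\B$. By \thm{congruence} and the presolution hypothesis $\vec P(Y)\sqsubseteq_B\RS_Y[\vec P]$ we get $G'[\vec P]\sqsubseteq_B G'[\RS[\vec P]]=H[\vec P]$, and by \thm{congruence} and the postsolution hypothesis $\RS_Y[\vec Q]\sqsubseteq_B\vec Q(Y)$ we get $H[\vec Q]=G'[\RS[\vec Q]]\sqsubseteq_B G'[\vec Q]$. Chaining these, $R=G'[\vec P]\mathrel{\sqsubseteq_B\,\B\,\sqsubseteq_B}G'[\vec Q]$, which matches the transition $G[\vec Q]\goto\al G'[\vec Q]$. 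The clause starting from a possible transition $G[\vec Q]\hoto\al R$ is handled identically, using the second half of \lem{guarded} to obtain $R=G'[\vec Q]$ and $G[\vec P]\hoto\al G'[\vec P]$ and then the same re-guarding argument.

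\textbf{Conclusion and the main obstacle.} By \pr{upto} we get $\B\subseteq{\sqsubseteq_B}$. Since $\RS$ is manifestly guarded, each $\RS_X$ is a guarded term in $\IT(\Sigma,V_\RS)$, so $(\RS_X[\vec P],\RS_X[\vec Q])\in\B$, hence $\RS_X[\vec P]\sqsubseteq_B\RS_X[\vec Q]$; combining with the hypotheses gives $\vec P(X)\sqsubseteq_B\RS_X[\vec P]\sqsubseteq_B\RS_X[\vec Q]\sqsubseteq_B\vec Q(X)$ for every $X\in V_\RS$, i.e.\ $\vec P\sqsubseteq_B\vec Q$, which after undoing the reductions is the claim. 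I expect the delicate point to be the re-guarding step: a bare recursion variable is never $\TS$-guarded, so $\B$ genuinely cannot contain the pairs $(\vec P(X),\vec Q(X))$, and one must see that substituting the guarded bodies $\RS_Y$ back in restores enough guardedness for \lem{guarded} to apply at the next step — an operation that is sound up to $\sqsubseteq_B$ \emph{precisely because} $\vec P$ is a presolution and $\vec Q$ a postsolution. Keeping the orientation of every occurrence of $\sqsubseteq_B$ consistent (presolution inequalities on the $\vec P$-side, postsolution inequalities on the $\vec Q$-side) is where the asymmetric formulation of RSP is essential; with two presolutions, or two postsolutions, exactly this chaining breaks.
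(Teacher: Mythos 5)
Your proposal is correct and follows essentially the same route as the paper: the same reductions to closed terms and to a manifestly guarded $\RS$, then a bisimulation up to bisimilarity built from pairs $G[\vec P],G[\vec Q]$, closed under transitions via \lem{guarded} and re-entered via \thm{congruence} applied to the pre- and postsolution hypotheses. The only (immaterial) difference is that you index $\B$ by arbitrary guarded terms $H$ and re-guard explicitly by forming $G'[\RS]$, whereas the paper indexes $\B$ by terms of the form $G[\RS]$ from the outset — the same relation up to presentation.
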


\begin{proof}
It suffices to prove \thm{RSP} under the assumptions that $\vec E, \vec F\mathbin\in\T(\Sigma)^{V_\RS}$ and
only the variables from $V_\RS$ occur free in the expressions $\RS_X$ for $X \in V_\RS$.
For in the general case I have to establish that $\vec{E}[\sigma] \sqsubseteq_B \vec{F}[\sigma]$
for an arbitrary closed substitution $\sigma:\Var\rightarrow\T(\Sigma)$.
Let $\sigma^\dagger:\Var{\setminus}V_\RS\rightarrow\T(\Sigma)$ be given by $\sigma^\dagger(X)=\sigma(X)$ for
all $X\in \Var{\setminus}V_\RS$. Then $\vec{E} \sqsubseteq_B \RS[\vec{E}]$ and
$\RS[\vec{F}] \sqsubseteq_B \vec{F}$ imply
$\vec{E}[\sigma] \sqsubseteq_B \RS[\vec{E}][\sigma] = \RS[\sigma^\dagger][\vec{E}[\sigma]]$
and $\RS[\sigma^\dagger][\vec{F}[\sigma]] = \RS[\vec{F}][\sigma] \sqsubseteq_B \vec{F}[\sigma]$.
Hence, I merely have to prove the theorem with $\vec{E}[\sigma]$, $\vec{F}[\sigma]$and
$\RS[\sigma^\dagger]$ in place of $\vec{E}$, $\vec{F}$ and $\RS$.

It also suffices to prove \thm{RSP} under the assumption that $\RS$ is a manifestly guarded
recursive specification. Namely, for a general guarded recursive specification $\RS$, let $\RS'$ be
the manifestly guarded recursive specification into which $\RS$ can be converted.
Then $\vec{E} \sqsubseteq_B \RS[\vec{E}]$ implies
$\vec{E} \sqsubseteq_B \RS'[\vec{E}]$ by \thm{congruence} and transitivity, and likewise
$\RS[\vec{F}] \sqsubseteq_B \vec{F}$ implies $\RS'[\vec{F}] \sqsubseteq_B \vec{F}$.

So let $\RS$ be manifestly guarded with free variable from $V_\RS$ only,
and let $\vec{P},\vec{Q}\in \T^{V_\RS}(\Sigma)$ be a pre- and postsolution of $\RS$, that
is, $\vec{P} \sqsubseteq_B \RS[\vec{P}]$ and $\RS[\vec{Q}] \sqsubseteq_B \vec{Q}$.
I will show that \[{\B} := \{G [\RS[\vec{P}]], G[\RS[\vec{Q}]] \mid
G\mathbin\in\IT(\Sigma) \mbox{~is recursion-free and has variables from $V_\RS$ only}\}\] is a
bisimulation up to bisimilarity. Once I have that, taking $G := X\in V_\RS$ yields
$\RS_X[\vec{P}] \sqsubseteq_B \RS_X[\vec{Q}]$ by \pr{upto},
and thus $\vec{P}(X) \sqsubseteq_B \RS_X[\vec{P}] \sqsubseteq_B \RS_X[\vec{Q}] \sqsubseteq_B \vec{Q}(X)$
for all $X\in V_\RS$. So $\vec{P} \sqsubseteq_B \vec{Q}$.

\begin{itemize}
\item Let $R \B T$ and $R \goesto{\al} R'$ with $\al\mathbin\in A$.
I have to find a $T'$ with $T \goesto{\al} T'$ and $R' \sqsubseteq_B\, \B \,\sqsubseteq_B T'\!$.
Let $R \mathbin= G[\RS[\vec{P}]]$ and $T\mathbin=G[\RS[\vec{Q}]]$, where $G\mathbin\in\IT(\Sigma)$
is recursion-free and has variables from $V_\RS$ only.
Note that $G[\RS[\vec{P}]]$ can also be written as $G[\RS][\vec{P}]$.
Since the expressions $\RS_X$ for $X\in V_\RS$ have free variables from $V_\RS$ only, so does $G[\RS]$.
Moreover, since $\RS$ is manifestly guarded, the expression $G[\RS]$ must be guarded.
By \lem{guarded}, $R'$ must have the form $G'[\vec{P}]$, where $G'\in\IT(\Sigma,V_\RS)$.
Moreover, $T = G [\RS[\vec{Q}]] = G[\RS][\vec{Q}] \goesto\al G'[\vec{Q}] =: T'$. 
Furthermore, by \thm{congruence}, $G'[\vec{P}] \sqsubseteq_B G'[\RS[\vec{P}]]$ and
$G'[\RS[\vec{Q}]] \sqsubseteq_B G'[\vec{Q}]$. By definition $G'[\RS[\vec{P}]] \B G'[\RS[\vec{Q}]]$.
Thus, $R'\mathbin= G'[\vec{P}]\sqsubseteq_B\,\B\,\sqsubseteq_B G'[\vec{Q}] \mathbin= T'\!$.
\item Let $R \B T$ and $T \hoto{\al} T'$ with $\al\mathbin\in A$.
I have to find an $R'$ with $R \hoto{\al} R'$ and $R' \sqsubseteq_B\, \B \,\sqsubseteq_B T'\!$.
Let $R \mathbin= G[\RS[\vec{P}]]$ and $T\mathbin=G[\RS[\vec{Q}]]$, where $G\mathbin\in\IT(\Sigma)$
is recursion-free and has variables from $V_\RS$ only.
Note that $G[\RS[\vec{Q}]]$ can also be written as $G[\RS][\vec{Q}]$.
Since the expressions $\RS_X$ for $X\in V_\RS$ have free variables from $V_\RS$ only, so does $G[\RS]$.
Moreover, since $\RS$ is manifestly guarded, the expression $G[\RS]$ must be guarded.
By \lem{guarded}, $T'$ must have the form $G'[\vec{Q}]$, where $G'\in\IT(\Sigma,V_\RS)$.
Moreover, $R = G [\RS[\vec{P}]] = G[\RS][\vec{P}] \hoto\al G'[\vec{P}] =: R'$. 
Furthermore, by \thm{congruence}, $G'[\vec{P}] \sqsubseteq_B G'[\RS[\vec{P}]]$ and
$G'[\RS[\vec{Q}]] \sqsubseteq_B G'[\vec{Q}]$. By definition $G'[\RS[\vec{P}]] \B G'[\RS[\vec{Q}]]$.
Thus, $R'\mathbin= G'[\vec{P}]\sqsubseteq_B\,\B\,\sqsubseteq_B G'[\vec{Q}] \mathbin= T'\!$.\qed
\end{itemize}
\end{proof}

\section{The recursive definition principle}\label{sec:RDP}
The recursive definition principle, saying that recursive specifications have a solution, trivially
holds for bisimulation equivalence, and thereby also for all coarser equivalences. Namely the
recursive call $\rec{X|\RS}$ itself is a solution of the recursive specification $\RS$.
\begin{equation}
  \rec{X|\RS} \equiv_B \rec{\RS_X|\RS} \quad \mbox{for all~} X \in V_\RS
  \label{RDP}
\end{equation}
This follows immediately from the operational rules for recursion.

Write $\rec{V_\RS|\RS}$ for the $V_\RS$-tuple with elements $\rec{X|\RS}$ for $X \in V_\RS$.
Connecting this with the notation of \df{substitutions}, and seeing $V_\RS$-tuples as substitutions,
this yields $\rec{E|\RS} = E[\rec{V_\RS|\RS}]$ for any $E\in \IT(\Sigma)$.
Moreover, $\RS[\rec{V_\RS|\RS}]$ is the $V_\RS$-tuple with elements $\rec{\RS_X|\RS}$ for $X \in V_\RS$.

Now (\ref{RDP}), or RDP, for an arbitrary semantic equivalence $\equiv$, can be restated as
$\rec{V_\RS|\RS} \equiv \RS[\rec{V_\RS|\RS}]$. When $\equiv$ is the \hyperlink{kernel}{kernel}
of a preorder $\sqsubseteq$, this implies $\RS[\rec{V_\RS|\RS}] \sqsubseteq \rec{V_\RS|\RS}$ and
$\rec{V_\RS|\RS} \sqsubseteq  \RS[\rec{V_\RS|\RS}]$.
Using this, RSP (\ref{RSP}) can be restated as
\begin{equation}\label{RDSP}
 \frac{\vec{P} \equiv \RS[\vec{P}]}{\vec{P} \equiv \rec{V_\RS|\RS}}
\qquad
 \frac{\vec{P} \sqsubseteq \RS[\vec{P}]}{\vec{P} \sqsubseteq \rec{V_\RS|\RS}}
\qquad
 \frac{\RS[\vec{Q}] \sqsubseteq \vec{Q}}{\rec{V_\RS|\RS} \sqsubseteq \vec{Q}}\;.
\end{equation}

\section{The simulation and ready simulation preorders}

The ready simulation preorder $\sqsubseteq_{\it RS}$ \cite{BIM95,vG01} has been generalised to
3-valued transition systems in \cite{BFG04}. Its definition is just like \df{refinement}, but
without the targets $Q'$ and $P'$ in the second clause. Here $Q \hoto{a}$ means that there exists a
$Q'$ such that $Q \hoto{a} Q'$.  In \cite{BFG04} it is also shown to be a congruence
(cf.\ \thm{congruence}) on TSSs in the ready simulation format, albeit without recursion. Using
this, the proof of \thm{RSP} generalises smoothly to $\sqsubseteq_{\it RS}$. There merely is a
straightforward simplification in the second clause, dealing with transitions $\hoto{}$. I conclude
that RSP holds for the ready simulation preorder for TSSs in the ready simulation format.

The simulation preorder $\sqsubseteq_S$ \cite{vG01} is defined just like the
\hyperlink{bisimilarity}{bisimulation preorder},
but without its second clause. It is a congruence only for TSSs in the positive
ready simulation format, that is, ruling out negative premises \cite{BFG04}.
As a counterexample, in the TSS of \ex{TSS}, but extended with the familiar choice operator $+$ of
CCS \cite{Mi90ccs}, we have $a.0 \sqsubseteq_S a.0+b.0$, yet
$\en_{\{b\}}(a.0) \not\sqsubseteq_S \en_{\{b\}}(a.0+b.0)$,
since only $\en_{\{b\}}(a.0)$ can do first a $b$ and then an $a$-transition.
Again a straightforward simplification of the above proof shows that
RSP holds for the simulation preorder for TSSs in the positive ready simulation format.

\begin{example}{RSP for simulation preorder}
Let $\RS := \{X = a.X + b.0\}$. Take $P := \rec{Y|\{Y=a.Y\}}$ and $Q := \rec{Z|\{Z=a.Z\ + a.b.0\}}$.

\expandafter\ifx\csname graph\endcsname\relax
   \csname newbox\expandafter\endcsname\csname graph\endcsname
\fi
\ifx\graphtemp\undefined
  \csname newdimen\endcsname\graphtemp
\fi
\expandafter\setbox\csname graph\endcsname
 =\vtop{\vskip 0pt\hbox{%
\pdfliteral{
q [] 0 d 1 J 1 j
0.576 w
0.576 w
10.224 -48.96 m
10.224 -50.948225 8.612225 -52.56 6.624 -52.56 c
4.635775 -52.56 3.024 -50.948225 3.024 -48.96 c
3.024 -46.971775 4.635775 -45.36 6.624 -45.36 c
8.612225 -45.36 10.224 -46.971775 10.224 -48.96 c
S
Q
}%
    \graphtemp=.5ex
    \advance\graphtemp by 0.500in
    \rlap{\kern 0.092in\lower\graphtemp\hbox to 0pt{\hss $P$\hss}}%
\pdfliteral{
q [] 0 d 1 J 1 j
0.576 w
0.072 w
q 0 g
10.152 -58.896 m
9.144 -51.48 l
13.464 -57.528 l
10.152 -58.896 l
B Q
0.576 w
4.104 -51.48 m
1.8 -57.42 l
0.23328 -61.4592 -0.504 -65.664 -0.504 -70.56 c
-0.504 -75.456 1.78848 -77.76 6.66 -77.76 c
11.53152 -77.76 13.824 -75.456 13.824 -70.56 c
13.824 -65.664 13.12128 -61.5744 11.628 -57.78 c
9.432 -52.2 l
S
Q
}%
    \graphtemp=.5ex
    \advance\graphtemp by 1.180in
    \rlap{\kern 0.092in\lower\graphtemp\hbox to 0pt{\hss $a$\hss}}%
\pdfliteral{
q [] 0 d 1 J 1 j
0.576 w
82.224 -12.96 m
82.224 -14.948225 80.612225 -16.56 78.624 -16.56 c
76.635775 -16.56 75.024 -14.948225 75.024 -12.96 c
75.024 -10.971775 76.635775 -9.36 78.624 -9.36 c
80.612225 -9.36 82.224 -10.971775 82.224 -12.96 c
S
Q
}%
    \graphtemp=.5ex
    \advance\graphtemp by 0.000in
    \rlap{\kern 1.092in\lower\graphtemp\hbox to 0pt{\hss $a.P+b.0$\hss}}%
\pdfliteral{
q [] 0 d 1 J 1 j
0.576 w
111.024 -27.36 m
111.024 -29.348225 109.412225 -30.96 107.424 -30.96 c
105.435775 -30.96 103.824 -29.348225 103.824 -27.36 c
103.824 -25.371775 105.435775 -23.76 107.424 -23.76 c
109.412225 -23.76 111.024 -25.371775 111.024 -27.36 c
S
Q
}%
    \graphtemp=.5ex
    \advance\graphtemp by 0.190in
    \rlap{\kern 1.312in\lower\graphtemp\hbox to 0pt{\hss $b$\hss}}%
\pdfliteral{
q [] 0 d 1 J 1 j
0.576 w
0.072 w
q 0 g
98.568 -20.952 m
104.184 -25.776 l
96.984 -24.12 l
98.568 -20.952 l
B Q
0.576 w
81.864 -14.544 m
97.776 -22.536 l
S
82.224 -48.96 m
82.224 -50.948225 80.612225 -52.56 78.624 -52.56 c
76.635775 -52.56 75.024 -50.948225 75.024 -48.96 c
75.024 -46.971775 76.635775 -45.36 78.624 -45.36 c
80.612225 -45.36 82.224 -46.971775 82.224 -48.96 c
S
0.072 w
q 0 g
80.424 -38.16 m
78.624 -45.36 l
76.824 -38.16 l
80.424 -38.16 l
B Q
0.576 w
78.624 -16.56 m
78.624 -38.16 l
S
Q
}%
    \graphtemp=.5ex
    \advance\graphtemp by 0.430in
    \rlap{\kern 1.092in\lower\graphtemp\hbox to 0pt{\hss ~~~~$a$\hss}}%
\pdfliteral{
q [] 0 d 1 J 1 j
0.576 w
0.072 w
q 0 g
82.152 -58.896 m
81.144 -51.48 l
85.464 -57.528 l
82.152 -58.896 l
B Q
0.576 w
76.104 -51.48 m
73.764 -57.42 l
72.1728 -61.4592 71.424 -65.664 71.424 -70.56 c
71.424 -75.456 73.728 -77.76 78.624 -77.76 c
83.52 -77.76 85.824 -75.456 85.824 -70.56 c
85.824 -65.664 85.12128 -61.5744 83.628 -57.78 c
81.432 -52.2 l
S
Q
}%
    \graphtemp=.5ex
    \advance\graphtemp by 1.180in
    \rlap{\kern 1.092in\lower\graphtemp\hbox to 0pt{\hss $a$\hss}}%
\pdfliteral{
q [] 0 d 1 J 1 j
0.576 w
154.224 -48.96 m
154.224 -50.948225 152.612225 -52.56 150.624 -52.56 c
148.635775 -52.56 147.024 -50.948225 147.024 -48.96 c
147.024 -46.971775 148.635775 -45.36 150.624 -45.36 c
152.612225 -45.36 154.224 -46.971775 154.224 -48.96 c
S
Q
}%
    \graphtemp=.5ex
    \advance\graphtemp by 0.500in
    \rlap{\kern 2.092in\lower\graphtemp\hbox to 0pt{\hss $Q$\hss}}%
\pdfliteral{
q [] 0 d 1 J 1 j
0.576 w
0.072 w
q 0 g
154.152 -58.896 m
153.144 -51.48 l
157.464 -57.528 l
154.152 -58.896 l
B Q
0.576 w
148.104 -51.48 m
145.764 -57.42 l
144.1728 -61.4592 143.424 -65.664 143.424 -70.56 c
143.424 -75.456 145.728 -77.76 150.624 -77.76 c
155.52 -77.76 157.824 -75.456 157.824 -70.56 c
157.824 -65.664 157.12128 -61.5744 155.628 -57.78 c
153.432 -52.2 l
S
Q
}%
    \graphtemp=.5ex
    \advance\graphtemp by 1.180in
    \rlap{\kern 2.092in\lower\graphtemp\hbox to 0pt{\hss $a$\hss}}%
\pdfliteral{
q [] 0 d 1 J 1 j
0.576 w
183.024 -63.36 m
183.024 -65.348225 181.412225 -66.96 179.424 -66.96 c
177.435775 -66.96 175.824 -65.348225 175.824 -63.36 c
175.824 -61.371775 177.435775 -59.76 179.424 -59.76 c
181.412225 -59.76 183.024 -61.371775 183.024 -63.36 c
S
Q
}%
    \graphtemp=.5ex
    \advance\graphtemp by 0.690in
    \rlap{\kern 2.312in\lower\graphtemp\hbox to 0pt{\hss $b$\hss}}%
\pdfliteral{
q [] 0 d 1 J 1 j
0.576 w
0.072 w
q 0 g
170.568 -56.952 m
176.184 -61.776 l
168.984 -60.12 l
170.568 -56.952 l
B Q
0.576 w
153.864 -50.544 m
169.776 -58.536 l
S
211.824 -77.76 m
211.824 -79.748225 210.212225 -81.36 208.224 -81.36 c
206.235775 -81.36 204.624 -79.748225 204.624 -77.76 c
204.624 -75.771775 206.235775 -74.16 208.224 -74.16 c
210.212225 -74.16 211.824 -75.771775 211.824 -77.76 c
S
Q
}%
    \graphtemp=.5ex
    \advance\graphtemp by 0.890in
    \rlap{\kern 2.712in\lower\graphtemp\hbox to 0pt{\hss $c$\hss}}%
\pdfliteral{
q [] 0 d 1 J 1 j
0.576 w
0.072 w
q 0 g
199.368 -71.352 m
204.984 -76.176 l
197.784 -74.52 l
199.368 -71.352 l
B Q
0.576 w
182.664 -64.944 m
198.576 -72.936 l
S
262.224 -12.96 m
262.224 -14.948225 260.612225 -16.56 258.624 -16.56 c
256.635775 -16.56 255.024 -14.948225 255.024 -12.96 c
255.024 -10.971775 256.635775 -9.36 258.624 -9.36 c
260.612225 -9.36 262.224 -10.971775 262.224 -12.96 c
S
Q
}%
    \graphtemp=.5ex
    \advance\graphtemp by 0.000in
    \rlap{\kern 3.592in\lower\graphtemp\hbox to 0pt{\hss $a.Q+b.0$\hss}}%
\pdfliteral{
q [] 0 d 1 J 1 j
0.576 w
291.024 -27.36 m
291.024 -29.348225 289.412225 -30.96 287.424 -30.96 c
285.435775 -30.96 283.824 -29.348225 283.824 -27.36 c
283.824 -25.371775 285.435775 -23.76 287.424 -23.76 c
289.412225 -23.76 291.024 -25.371775 291.024 -27.36 c
S
Q
}%
    \graphtemp=.5ex
    \advance\graphtemp by 0.190in
    \rlap{\kern 3.812in\lower\graphtemp\hbox to 0pt{\hss $b$\hss}}%
\pdfliteral{
q [] 0 d 1 J 1 j
0.576 w
0.072 w
q 0 g
278.568 -20.952 m
284.184 -25.776 l
276.984 -24.12 l
278.568 -20.952 l
B Q
0.576 w
261.864 -14.544 m
277.776 -22.536 l
S
262.224 -48.96 m
262.224 -50.948225 260.612225 -52.56 258.624 -52.56 c
256.635775 -52.56 255.024 -50.948225 255.024 -48.96 c
255.024 -46.971775 256.635775 -45.36 258.624 -45.36 c
260.612225 -45.36 262.224 -46.971775 262.224 -48.96 c
S
0.072 w
q 0 g
260.424 -38.16 m
258.624 -45.36 l
256.824 -38.16 l
260.424 -38.16 l
B Q
0.576 w
258.624 -16.56 m
258.624 -38.16 l
S
Q
}%
    \graphtemp=.5ex
    \advance\graphtemp by 0.430in
    \rlap{\kern 3.592in\lower\graphtemp\hbox to 0pt{\hss ~~~~$a$\hss}}%
\pdfliteral{
q [] 0 d 1 J 1 j
0.576 w
0.072 w
q 0 g
262.152 -58.896 m
261.144 -51.48 l
265.464 -57.528 l
262.152 -58.896 l
B Q
0.576 w
256.104 -51.48 m
253.764 -57.42 l
252.1728 -61.4592 251.424 -65.664 251.424 -70.56 c
251.424 -75.456 253.728 -77.76 258.624 -77.76 c
263.52 -77.76 265.824 -75.456 265.824 -70.56 c
265.824 -65.664 265.12128 -61.5744 263.628 -57.78 c
261.432 -52.2 l
S
Q
}%
    \graphtemp=.5ex
    \advance\graphtemp by 1.180in
    \rlap{\kern 3.592in\lower\graphtemp\hbox to 0pt{\hss $a$\hss}}%
\pdfliteral{
q [] 0 d 1 J 1 j
0.576 w
291.024 -63.36 m
291.024 -65.348225 289.412225 -66.96 287.424 -66.96 c
285.435775 -66.96 283.824 -65.348225 283.824 -63.36 c
283.824 -61.371775 285.435775 -59.76 287.424 -59.76 c
289.412225 -59.76 291.024 -61.371775 291.024 -63.36 c
S
Q
}%
    \graphtemp=.5ex
    \advance\graphtemp by 0.690in
    \rlap{\kern 3.812in\lower\graphtemp\hbox to 0pt{\hss $b$\hss}}%
\pdfliteral{
q [] 0 d 1 J 1 j
0.576 w
0.072 w
q 0 g
278.568 -56.952 m
284.184 -61.776 l
276.984 -60.12 l
278.568 -56.952 l
B Q
0.576 w
261.864 -50.544 m
277.776 -58.536 l
S
319.824 -77.76 m
319.824 -79.748225 318.212225 -81.36 316.224 -81.36 c
314.235775 -81.36 312.624 -79.748225 312.624 -77.76 c
312.624 -75.771775 314.235775 -74.16 316.224 -74.16 c
318.212225 -74.16 319.824 -75.771775 319.824 -77.76 c
S
Q
}%
    \graphtemp=.5ex
    \advance\graphtemp by 0.890in
    \rlap{\kern 4.212in\lower\graphtemp\hbox to 0pt{\hss $c$\hss}}%
\pdfliteral{
q [] 0 d 1 J 1 j
0.576 w
0.072 w
q 0 g
307.368 -71.352 m
312.984 -76.176 l
305.784 -74.52 l
307.368 -71.352 l
B Q
0.576 w
290.664 -64.944 m
306.576 -72.936 l
S
Q
}%
    \hbox{\vrule depth1.180in width0pt height 0pt}%
    \kern 4.442in
  }%
}%

\centerline{\box\graph}
\vspace{2ex}

\noindent
Now $P \sqsubseteq_S  \RS[P] = a.P + b.0$ and $\RS[Q] = a.Q + b.0 \sqsubseteq_S  Q$.
Hence, by RSP, $P \sqsubseteq_S Q$.
\end{example}

\noindent
The literature is divided about the orientation of the simulation preorder, i.e., whether
$a.0 \sqsubseteq_S a.0+b.0$ or $a.0+b.0 \sqsubseteq_S a.0$, and likewise for all semantic preorders.
Intuitively, the argument for $a.0 \sqsubseteq_S a.0+b.0$ is that there exists a simulation
\emph{from} $a.0$ \emph{to} $a.0+b.0$, in the sense that behaviour of the former process can be
mimicked by the latter. Related to that, the process $a.0$ has \emph{less} behaviours or runs than $a.0+b.0$.

Below I will follow the argument of \textsc{Tony Hoare} \cite{Ho85} that $P \sqsubseteq Q$ should
mean that $Q$ is a (non-strict) \emph{improvement} of $P$, in the sense that all good properties of $P$
should hold for $Q$ as well. As a specification $S$ tells us the minimal set of requirements that an
implementation $I$ should satisfy, this pleads for the orientation $S \sqsubseteq I$.

Usually this argument is used to advocate the direction $a.0+b.0 \sqsubseteq_S a.0$.  A good
property could be a \emph{satisfy} property, saying that nothing bad will ever happen, or a
\emph{liveness} property, saying that something good will happen eventually \cite{Lam77}. An example
of a safety property could be that the action $b$ will never occur, and an example of a liveness
property could be that each run contains the action $a$ or $b$. Both safety and liveness properties
are \emph{linear time properties}: they are defined to hold for certain runs of a system, and then
are said to hold for a system iff they hold for all its runs. Now a system with less runs will
automatically satisfy more linear time properties, and thus be a more suitable implementation. This
pleads for the direction $a.0+b.0 \sqsubseteq_S a.0$.

Nevertheless, theoretically one could have quite different intentions with specifications, modelled
as 2- or 3-valued labelled transition systems. It could be that the specification stipulates which
transitions should be present at least, allowing room for adding more transitions. This fits with
allowing \emph{possibility properties} \cite{Lam98} in the above argument about good properties
preserved by semantic preorders. Such properties may say that a process has the possibility to behave
in a certain way.

In \cite{vG93}, for each semantic preorder $\sqsubseteq$ surveyed there I propose two variants,
called $\sqsubseteq^{\rm may}$ and $\sqsubseteq^{\rm must}\!$. Although these preorders differ from
each other in non-trivial ways when dealing with divergence and underspecification, when restricted
to processes $P$ and $Q$ without divergence and underspecification, we simply have
$P \sqsubseteq^{\rm may} Q$ iff $Q \sqsubseteq^{\rm must} P$. For the simulation preorder we have
$a.0 \sqsubseteq^{\rm may}_S a.0+b.0$ and $a.0+b.0 \sqsubseteq^{\rm must}_S a.0$, and more in
general $P \sqsubseteq^{\rm may} Q$ says that possible behaviour of $P$ must also be possible for $Q$,
whereas $P \sqsubseteq^{\rm must} Q$ says that required behaviour of $P$ must also be required of $Q$.
The annotations $^{\rm may}$ and $^{\rm must}$ were inspired by the may- and must-testing preorders
of \textsc{De Nicola \& Hennessy} \cite{DH84}, which also are oriented this way.

When applying preorders $\sqsubseteq^{\rm may}$ and $\sqsubseteq^{\rm must}$ to modal transition
systems, in which transitions $\goto{a}$ are required, and transitions $\hoto{a}$ are allowed,
it makes most sense to define $\sqsubseteq^{\rm may}$ in terms of the necessary transitions $\goto{a}$,
and $\sqsubseteq^{\rm must}$ in terms of the possible transitions $\hoto{a}$.
Thus $\sqsubseteq^{\rm may}_S$ is defined like $\sqsubseteq_B$ in \df{refinement}, but without the
second clause, whereas $\sqsubseteq^{\rm must}_S$ is defined like $\sqsubseteq_B$, but without the
first clause. In the same spirit, $\sqsubseteq^{\rm may}_{\it RS}$ is as $\sqsubseteq_B$ but without
the targets in the second clause, so that this second clause can be restated as
\begin{itemize}
\item if $P\B Q$ and $P \gonotto{a}$, then $Q \gonotto{a}$.
\end{itemize}
Similarly, $\sqsubseteq^{\rm must}_{\it RS}$ is as $\sqsubseteq_B$ but without the targets in the
first clause. Hence the preorders $\sqsubseteq_{\it RS}$ and $\sqsubseteq_S$ from \cite{vG01,BFG04} that
were featured at the beginning of this section are the may-versions of the ready simulation
and simulation preorders. The must-versions of these preorders, which may be preferable when
concentrating on safety and liveness properties, satisfy RSP by entirely symmetric arguments.

\begin{corollary}{RSP}
  RSP holds for $\sqsubseteq^{\rm must}_{\it RS}$ for any TSS in the ready simulation format.\\
  RSP also holds for $\sqsubseteq^{\rm must}_S$ for any TSS in the positive ready simulation format.
\end{corollary}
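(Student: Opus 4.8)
The plan is to transcribe the proof of \thm{RSP}, keeping exactly those parts that survive the removal of clauses from \df{refinement}. Recall that $\sqsubseteq^{\rm must}_S$ is \df{refinement} with only its second clause (the one governing the possible transitions $\hoto{a}$), and that $\sqsubseteq^{\rm must}_{\it RS}$ is that second clause together with the target-free remnant of the first clause, namely ``if $P\B Q$ and $P\goto{a}$ then $Q\goto{a}$''; both are reflexive and transitive. The one ingredient that needs separate attention is the must-analogue of \thm{congruence}: for a TSS in the ready simulation format with recursion, $\vec P\sqsubseteq^{\rm must}_{\it RS}\vec Q$ implies $E[\vec P]\sqsubseteq^{\rm must}_{\it RS}E[\vec Q]$, and likewise for $\sqsubseteq^{\rm must}_S$ when the TSS is positive. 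For recursion-free $E$ this is the must-counterpart of the congruence results of \cite{BFG04}; the extension to recursive $E$ follows the pattern of \cite{vG17b}, and since \lem{guarded} and \df{well-founded} treat $\goto{}$ and $\hoto{}$ symmetrically, this extension is the mirror image of the one used for the may-versions. I also need the must-analogue of \pr{upto} --- a $\sqsubseteq^{\rm must}$-simulation up to $\sqsubseteq^{\rm must}$ is contained in $\sqsubseteq^{\rm must}$ --- which is proved verbatim as \pr{upto} from reflexivity and transitivity of $\sqsubseteq^{\rm must}$.

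Granted these, I would rerun the proof of \thm{RSP}. Its two opening reductions --- to closed tuples $\vec P,\vec Q\in\T(\Sigma)^{V_\RS}$ with only $V_\RS$ occurring free in each body $\RS_X$, and to a manifestly guarded $\RS$ --- rely only on \thm{congruence}, transitivity, and the way a preorder extends to open terms and substitutions, all available for the must-versions, and hence carry over verbatim. So let $\RS$ be manifestly guarded with free variables from $V_\RS$ only, let $\vec P$ be a presolution and $\vec Q$ a postsolution of $\RS$ for the preorder at hand, and put $\B:=\{(G[\RS[\vec P]],G[\RS[\vec Q]])\mid G\in\IT(\Sigma) \mbox{ recursion-free, } \var(G)\subseteq V_\RS\}$; as in \thm{RSP}, $G[\RS]$ is then guarded with free variables from $V_\RS$. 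For $\sqsubseteq^{\rm must}_S$ the sole proof obligation is the second bullet of the proof of \thm{RSP}: if $R\B T$ and $T=G[\RS][\vec Q]\hoto{\al}T'$, then \lem{guarded} gives $T'=G'[\vec Q]$ with $G'\in\IT(\Sigma,V_\RS)$ and $R=G[\RS][\vec P]\hoto{\al}G'[\vec P]=:R'$; the must-version of \thm{congruence} then yields $G'[\vec P]\sqsubseteq^{\rm must}_S G'[\RS[\vec P]]$ and $G'[\RS[\vec Q]]\sqsubseteq^{\rm must}_S G'[\vec Q]$, while $G'[\RS[\vec P]]\B G'[\RS[\vec Q]]$ by construction, so $R'\sqsubseteq^{\rm must}_S\,\B\,\sqsubseteq^{\rm must}_S T'$. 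Instantiating $G:=X\in V_\RS$ and applying the must-version of \pr{upto} gives $\RS_X[\vec P]\sqsubseteq^{\rm must}_S\RS_X[\vec Q]$, whence $\vec P(X)\sqsubseteq^{\rm must}_S\RS_X[\vec P]\sqsubseteq^{\rm must}_S\RS_X[\vec Q]\sqsubseteq^{\rm must}_S\vec Q(X)$ for every $X\in V_\RS$, i.e.\ $\vec P\sqsubseteq^{\rm must}_S\vec Q$.

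For $\sqsubseteq^{\rm must}_{\it RS}$ there is one additional obligation, the target-free first clause: if $R=G[\RS][\vec P]\goto{\al}$, then the part of \lem{guarded} concerning $\goto{}$-transitions supplies $G'\in\IT(\Sigma,V_\RS)$ with $R\goto{\al}G'[\vec P]$ and $T=G[\RS][\vec Q]\goto{\al}G'[\vec Q]$, so in particular $T\goto{\al}$; the second clause is discharged exactly as above, and the conclusion follows as before. This handles $\sqsubseteq^{\rm must}_{\it RS}$ over any TSS in the ready simulation format, and $\sqsubseteq^{\rm must}_S$ over any TSS in its positive fragment --- positivity being needed only so that the must-version of \thm{congruence} holds, exactly as in the may-case treated in \cite{BFG04}. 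I expect the one real difficulty to be pinning down that must-congruence cleanly in the combined 3-valued-and-recursive setting; once that is in hand, the RSP argument is a near-mechanical transcription of the proof of \thm{RSP} with the now-superfluous clauses deleted.
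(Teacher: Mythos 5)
Your proposal is correct and follows exactly the route the paper intends: the paper disposes of this corollary by remarking that the must-versions ``satisfy RSP by entirely symmetric arguments'' to the may-versions, which in turn were obtained by straightforwardly simplifying the proof of \thm{RSP}, and your transcription spells out precisely that simplification (dropping the now-superfluous clauses, invoking the must-analogues of \thm{congruence} and \pr{upto}, and reusing \lem{guarded} unchanged). Your identification of the must-congruence in the 3-valued recursive setting as the one ingredient requiring separate care matches the paper, which likewise leans on \cite{BFG04} and the pattern of \cite{vG17b} for that step.
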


\section{Full congruence results for guarded recursion}\label{sec:full}

A semantic equivalence $\equiv$ is called a \emph{full congruence} for recursion \cite{vG17b}
if \[\RS \equiv \RS' \Rightarrow \rec{V_\RS|\RS} \equiv \rec{V_\RS|\RS'}\] for all recursive
specifications $\RS$ and $\RS'$ with $V_\RS = V_\RS'$. When expanding the application of $\equiv$ to
$V_\RS$-tuples, defined at the beginning of \Sec{RSP}, and the tuple $\rec{V_\RS|\RS}$ of \Sec{RDP},
this translates to
\begin{center}
  if $\RS_Y \equiv \RS'_Y$ for all $Y \in V_\RS$ then $\rec{X|\RS} \equiv \rec{X|\RS'}$ for all $X
  \in \RS$,
\end{center}
where $\RS_Y \equiv \RS'_Y$ means that $\RS_Y[\sigma] \equiv \RS'_Y[\sigma]$ for all closed
substitutions $\sigma$. In \cite{vG17b} this was merely required for recursive
specifications $\RS,\RS':V_\RS \rightarrow \IT(\Sigma,V_\RS)$, that is, without free variables
(cf.\ \ex{recursion}), so that $\rec{X|\RS}$ and $\rec{X|\RS'}$ are closed terms. However, when we
have that, the general form follows trivially, just as in the first paragraph of the proof of \thm{RSP}.
In the same way, a preorder $\sqsubseteq$ is a \emph{full precongruence} for recursion \cite{vG17b}
if\vspace{-1ex}
 \[\RS \sqsubseteq \RS' \Rightarrow \rec{V_\RS|\RS} \sqsubseteq \rec{V_\RS|\RS'}\] for all recursive
specifications $\RS$ and $\RS'$ with $V_\RS = V_\RS'$. A preorder $\sqsubseteq$ is a
\emph{precongruence} for an $n$-ary operator $f \in \Sigma$ if
\[\vec{P} \sqsubseteq \vec{Q} \Rightarrow f(\vec{P}) \sqsubseteq f(\vec{Q})\]
where $\vec{P}$ and $\vec{Q}$ denote the tuples $(P_i)_{i=1}^n$ and $(Q_i)_{i=1}^n$.
Clearly, if $\sqsubseteq$ is a precongruence for all operators $f$ in a language given by a
signature $\Sigma$, as well as a full precongruence for recursion, then it surely is a
\emph{lean precongruence} \cite{vG17b} (cf.\ \thm{congruence}), meaning that
\[\vec{P} \sqsubseteq \vec{Q} \Rightarrow E[\vec{P}]\sqsubseteq E[\vec{Q}]\]
for all expressions $E\in \IT(\Sigma)$. Moreover, being a lean precongruence implies that a preorder
is a precongruence for all operators. However, as illustrated in \cite{vG17b}, being a lean
precongruence does not imply that a preorder is a full precongruence for recursion.
In \cite{vG17b} it was shown that (strong) bisimilarity is a lean precongruence for all TSSs in the
ready simulation format with recursion---in fact, this was shown for the even more general
ntyft/ntyxt format with recursion. However, a full precongruence result could be obtained only for
the \emph{positive} ntyft/ntyxt format with recursion, avoiding all negative premises. For the ready
simulation format with recursion, allowing negative premises, and even for the more restrictive GSOS
format (\df{GSOS}), it remained an open question whether bisimilarity is a full
precongruence. Here I partly answer that question: for TSSs in the ready simulation format with
recursion, bisimilarity is a full precongruence for \emph{guarded} recursion. This is a
straightforward consequence of RSP\@.

\begin{theorem}{full}
  Let $\TS=(\Sigma,\R)$ be a TSS in the ready simulation format with recursion,
  and $\RS,\RS'$ be two recursive specifications with $V_\RS=V_{\RS'}$ and one of them $\TS$-guarded.
  If $\RS \sqsubseteq_B \RS'$ then $\rec{V_\RS|\RS} \sqsubseteq_B \rec{V_\RS|\RS'}$.
\end{theorem}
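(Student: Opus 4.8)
The plan is to obtain \thm{full} as a short consequence of RSP (\thm{RSP}) and RDP (\ref{RDP}): I will recognise $\rec{V_\RS|\RS}$ and $\rec{V_\RS|\RS'}$ as a presolution and a postsolution of whichever of $\RS,\RS'$ is $\TS$-guarded. As in the first paragraph of the proof of \thm{RSP}, it suffices to treat the case in which $\RS$ and $\RS'$ have free variables from $V_\RS$ only, so that $\vec P:=\rec{V_\RS|\RS}$ and $\vec Q:=\rec{V_\RS|\RS'}$ are closed $V_\RS$-tuples; for an arbitrary closed $\tau$ one has $\rec{X|\RS}[\tau]=\rec{X|\RS[\tau^\dagger]}$ with $\tau^\dagger$ the restriction of $\tau$ to $\Var\setminus V_\RS$, and $\RS[\tau^\dagger],\RS'[\tau^\dagger]$ again satisfy the hypotheses of the theorem, since substituting closed terms for variables outside $V_\RS$ keeps a recursive specification $\TS$-guarded and preserves $\RS\sqsubseteq_B\RS'$.

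By RDP, in the form $\rec{V_\RS|\RS}\equiv_B\RS[\rec{V_\RS|\RS}]$ recalled in \Sec{RDP}, we have $\vec P\equiv_B\RS[\vec P]$ and $\vec Q\equiv_B\RS'[\vec Q]$; in particular $\vec P\sqsubseteq_B\RS[\vec P]$ and $\RS'[\vec Q]\sqsubseteq_B\vec Q$. Since $\vec P$ and $\vec Q$ are closed substitutions, the hypothesis $\RS\sqsubseteq_B\RS'$ --- which unfolds to $\RS_X[\sigma]\sqsubseteq_B\RS'_X[\sigma]$ for all $X\in V_\RS$ and all closed substitutions $\sigma$ --- instantiates to $\RS[\vec P]\sqsubseteq_B\RS'[\vec P]$ and $\RS[\vec Q]\sqsubseteq_B\RS'[\vec Q]$.

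I then distinguish two cases. If $\RS'$ is $\TS$-guarded, transitivity of $\sqsubseteq_B$ gives $\vec P\sqsubseteq_B\RS[\vec P]\sqsubseteq_B\RS'[\vec P]$, so $\vec P$ is a presolution of $\RS'$, while $\RS'[\vec Q]\sqsubseteq_B\vec Q$ makes $\vec Q$ a postsolution of $\RS'$; applying \thm{RSP} to $\RS'$ yields $\vec P\sqsubseteq_B\vec Q$. If instead $\RS$ is $\TS$-guarded, then $\vec P\sqsubseteq_B\RS[\vec P]$ makes $\vec P$ a presolution of $\RS$, while $\RS[\vec Q]\sqsubseteq_B\RS'[\vec Q]\sqsubseteq_B\vec Q$ makes $\vec Q$ a postsolution of $\RS$; applying \thm{RSP} to $\RS$ again yields $\vec P\sqsubseteq_B\vec Q$. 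Either way $\rec{V_\RS|\RS}=\vec P\sqsubseteq_B\vec Q=\rec{V_\RS|\RS'}$, which is the claim.

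Almost all of this is routine chaining of RDP, the hypothesis, and RSP via transitivity; no appeal to \thm{congruence} is needed. The only place requiring a little care is the reduction in the first step, i.e.\ verifying that $\TS$-guardedness of the guarded specification and the inequality $\RS\sqsubseteq_B\RS'$ both survive substitution of closed terms for the free variables outside $V_\RS$ --- and that the composition of such a substitution with $\rec{V_\RS|\,\cdot\,}$ behaves as expected. I expect that bookkeeping, rather than any conceptual difficulty, to be the main obstacle.
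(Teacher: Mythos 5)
Your proof is correct and follows essentially the same route as the paper: use RDP to exhibit $\rec{V_\RS|\RS}$ and $\rec{V_\RS|\RS'}$ as a presolution and postsolution of the guarded specification (via the hypothesis $\RS\sqsubseteq_B\RS'$ instantiated at these closed tuples, plus transitivity), then apply RSP. The only cosmetic differences are that you invoke \thm{RSP} directly where the paper uses its restated form (\ref{RDSP}), and that you spell out the reduction to specifications without free variables outside $V_\RS$, which the paper handles in the remark preceding the theorem.
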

\begin{proof}
  I consider first the case that $\RS$ and then the one that $\RS'$ is guarded.

  By RDP, $\rec{V_\RS|\RS'} \equiv_B \RS'[\rec{V_\RS|\RS'}]$, and thus
  $\RS'[\rec{V_\RS|\RS'}] \sqsubseteq_B \rec{V_\RS|\RS'}$.
  Since $\RS[\sigma] \sqsubseteq_B \RS'[\sigma]$ for all substitutions $\sigma$,
  in particular $\RS[\rec{V_\RS|\RS'}] \sqsubseteq_B \RS'[\rec{V_\RS|\RS'}]$.
  By transitivity, $\RS\rec{V_\RS|\RS'} \sqsubseteq_B \rec{V_\RS|\RS'}$, that is, $\rec{V_\RS|\RS'}$
  is a postsolution of $\RS$.
  Given that $\RS$ is guarded, this yields $\rec{V_\RS|\RS} \sqsubseteq_B \rec{V_\RS|\RS'}$ by RSP (\ref{RDSP}).

  By RDP, $\rec{V_\RS|\RS} \equiv_B \RS[\rec{V_\RS|\RS}]$, and thus
  $\rec{V_\RS|\RS} \sqsubseteq_B \RS[\rec{V_\RS|\RS}]$.
  Since $\RS[\sigma] \sqsubseteq_B \RS'[\sigma]$ for all substitutions $\sigma$,
  in particular $\RS[\rec{V_\RS|\RS}] \sqsubseteq_B \RS'[\rec{V_\RS|\RS}]$.
  By transitivity, $\rec{V_\RS|\RS} \sqsubseteq_B \RS'\rec{V_\RS|\RS}$, that is, $\rec{V_\RS|\RS}$
  is a presolution of $\RS'$.
  Given that $\RS'$ is guarded, this yields $\rec{V_\RS|\RS} \sqsubseteq_B \rec{V_\RS|\RS'}$ by RSP (\ref{RDSP}).
\end{proof}
Similarly, bisimulation equivalence, $\equiv_B$, is a full congruence for guarded recursion.
This can be shown in exactly the same way, or follows directly from \thm{full}, using that
$\equiv_B$ is the \hyperlink{kernel}{kernel} of $\sqsubseteq_B$.

Also in the same way it follows that $\sqsubseteq^{\rm may}_{\it RS}$ and
$\sqsubseteq^{\rm must}_{\it RS}$ are full precongruences for guarded recursion on TSSs
in the ready simulation format with recursion, and that $\sqsubseteq^{\rm may}_{\it S}$ and
$\sqsubseteq^{\rm must}_{\it S}$ are full precongruences for guarded recursion on TSSs
in the positive ready simulation format with recursion.

\section{GSOS languages}

\begin{definition}{GSOS}
A \emph{GSOS rule} is a ntyft/ntyxt rule without lookahead, such that the left-hand sides of all
premises are variables that also occur in the source of the rule, and its target contains only
variables that also occur in its source or premises. A TSS is in \emph{GSOS format} if has the
recursion rules from \df{ntyft}, and all of its other rules are recursion-free GSOS rules.
A \emph{GSOS language} is language whose syntax is given by a signature $\Sigma$ and its semantics by
TSS $\TS$ in GSOS format, but only allowing recursive calls $\rec{X|\RS}$ for $\TS$-guarded
recursive specifications $\RS$.
\end{definition}
GSOS languages, or \emph{GSOS rule systems}, were introduced by \textsc{Bloom, Istrail \& Meyer} in
\cite{BIM88}. Here ``GSOS'' stands for \emph{structural operational semantics with guarded recursion}.
In \cite{BIM88} the action prefixing operators $a.\_$ of CCS and \ex{TSS} are required to be present
in each GSOS language, so that guarded recursion could be defined as in CCS, using only such
operators as guards. As here I have proposed a general definition of guardedness for GSOS languages,
the requirement that the operators $a.\_$ must be present can be dropped.
In \cite{BIM95}, which could be seen as the journal version of \cite{BIM88}, recursion has been
dropped altogether, and there the ``G'' of ``GSOS'' `might as well stand for ``Grand.'''

\begin{theorem}{2-valued GSOS}
The well-founded semantics of a GSOS language is always 2-valued.
\end{theorem}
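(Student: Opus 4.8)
The plan is to show that for a GSOS language, the closure ordinal is~$0$, i.e.\ that $CT^+_0 = PT^+_0$, from which $CT^+ = PT^+$ (2-valuedness) follows immediately by \lem{approximations} and the remark after \df{well-founded semantics}. Unfolding \df{well-founded}, we have $PT^-_0 = $ the set of \emph{all} closed negative literals (nothing to deny, since there is no $\kappa < 0$), so $\beta \in PT^+_0$ iff $\TS \vdash \frac{PT^-_0}{\beta}$. Then $CT^-_0$ is the set of closed negative literals denying no $\beta \in PT^+_0$, and $\beta \in CT^+_0$ iff $\TS \vdash \frac{CT^-_0}{\beta}$. Since $CT^-_0 \subseteq PT^-_0$ trivially, we always get $CT^+_0 \subseteq PT^+_0$; the real content is the reverse inclusion $PT^+_0 \subseteq CT^+_0$.

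**First I would** record the key structural feature of GSOS rules: the premises are of the form $x \goto{c} y$ or $x \gonotto{c}$ where $x$ ranges over the finitely many variables in the source $f(x_1,\dots,x_n)$, and $y$'s are fresh distinct target variables. Consequently, in any proof of a closed positive literal $P \goesto{a} P'$ from a set of closed negative literals, the negative hypotheses actually used have the form $Q \gonotto{c}$ where $Q$ is a \emph{proper subterm reachable by one transition step} from some term appearing above in the proof tree --- in a GSOS proof the depth of terms strictly decreases as one goes up, because each rule application peels off one operator and the premises only mention direct components and their one-step successors, which are again substitution instances of target terms. This gives a well-founded measure (e.g.\ the size of the closed source term) on the negative hypotheses that can occur, which is the standard ``stratification by term size'' argument underlying the original \cite{BIM95} result that GSOS gives a unique 2-valued transition relation.

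**The core step** is then: suppose $P \goesto{a} P' \in PT^+_0$, witnessed by a proof $\pi$ from $PT^-_0$. I claim $\pi$ is in fact a proof from $CT^-_0$, i.e.\ every negative hypothesis $Q \gonotto{c}$ occurring in $\pi$ already lies in $CT^-_0$ --- equivalently, $Q \gonotto{c}$ denies no literal in $PT^+_0$, equivalently there is no proof of $Q \goesto{c} Q''$ from $PT^-_0$. This I would prove by induction on the size of $Q$ (which, as noted, is strictly smaller than that of $P$, so the induction is well-founded): if $Q \goesto{c} Q''$ had a proof from $PT^-_0$, then $Q \goesto{c} Q'' \in PT^+_0$, and by the \emph{same} argument applied at the smaller term $Q$ that proof uses only hypotheses in $CT^-_0$, so $Q \goesto{c} Q'' \in CT^+_0$; but then $Q \gonotto{c}$ denies $Q \goesto{c} Q'' \in PT^+_0$ (indeed $CT^+_0 \subseteq PT^+_0$), contradicting $Q \gonotto{c} \in PT^-_0$. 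Hence $\pi$ is a proof from $CT^-_0$ and $P \goesto{a} P' \in CT^+_0$, giving $PT^+_0 \subseteq CT^+_0$ as required.

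**The main obstacle** I anticipate is handling the recursion rules $\frac{\rec{\RS_X|\RS} \goesto{a} z}{\rec{X|\RS} \goesto{a} z}$, which are \emph{not} GSOS rules and do not decrease term size --- $\rec{\RS_X|\RS}$ can be larger than $\rec{X|\RS}$. So the naive ``term size'' measure fails, and one must either (i) restrict attention to $\TS$-guarded recursive specifications --- as a GSOS \emph{language} does by definition --- and use guardedness to argue that any proof of a transition out of $\rec{X|\RS}$ unwinds the recursion only finitely (so that a more refined well-founded measure, counting nested operator depth ``modulo recursive calls'' plus a bound on unwinding before a guard is hit, still decreases), or (ii) invoke \lem{guarded}-style reasoning to confine the negative premises that can arise underneath a recursive call to ones that are themselves provable-or-not at strictly smaller structural complexity. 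I would most likely carry this out by first establishing the result for the recursion-free fragment by the size induction above, and then extending it by an outer induction accounting for recursion unwinding, using $\TS$-guardedness to ensure that between any two successive applications of a recursion rule along a branch of $\pi$ at least one guard (hence one genuine GSOS rule consuming an action prefix or similar) intervenes, which caps the unwinding and restores well-foundedness of the measure on negative hypotheses.
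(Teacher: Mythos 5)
Your headline plan---showing that the closure ordinal is $0$, i.e.\ $PT^+_0 \subseteq CT^+_0$---is false for GSOS languages with negative premises, and the precise point where your induction breaks is the final ``contradiction'': you close the argument by ``contradicting $Q \gonotto{c} \in PT^-_0$'', but $PT^-_0$ is by definition the set of \emph{all} closed negative literals, so $Q \gonotto{c} \in PT^-_0$ holds unconditionally and nothing is contradicted. Concretely, in the GSOS language of \ex{TSS}, the literal $\en_{\{a\}}(a.0) \goto{a} \en_{\{a\}}(a.0)$ lies in $PT^+_0$, via the rule with premise $x \gonotto{a}$ instantiated with the hypothesis $a.0 \gonotto{a} \in PT^-_0$; but $a.0 \gonotto{a} \notin CT^-_0$ (it denies $a.0 \goto{a} 0 \in PT^+_0$), and the only other rule for $\en_R$ yields the target $\en_{\{a\}}(0)$, so this literal is \emph{not} in $CT^+_0$. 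Hence a proof from $PT^-_0$ is in general not a proof from $CT^-_0$, and $PT^+_0 \neq CT^+_0$. Your argument only becomes sound when run at a closure ordinal, i.e.\ for $PT^+$ and $CT^+$ rather than their stage-$0$ approximants: by \obs{well-founded}, $PT^-$ consists exactly of the negative literals denying no member of $CT^+$, so deriving $Q \goto{c} Q'' \in CT^+$ genuinely contradicts $Q \gonotto{c} \in PT^-$. That is what the paper proves: $(P \goto{a} P') \in PT^+ \Rightarrow (P \goto{a} P') \in CT^+$, which suffices since $CT^+ \subseteq PT^+$ always holds.

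Second, the part you defer as ``the main obstacle'' is where the real work lies, and neither of your two options is carried out. The paper's induction is not on term size but on $e(P)$, the length of the longest forward chain of a relation $\rightsquigarrow$ with $f(P_1,\dots,P_k) \rightsquigarrow P_i$ for \emph{unguarded} arguments $i$ only and $\rec{X|\RS} \rightsquigarrow \rec{\RS_X|\RS}$; well-foundedness of $\rightsquigarrow$ is a separate claim derived from guardedness of all recursive specifications occurring in the language. Two points you need but do not quite state: (a) a guarded argument $x_i$ of $f$ occurs in \emph{no} premise of any GSOS rule for $f$, since the occurrence of $x_i$ as the entire left-hand side of a premise is unguarded---so proof trees descend only into unguarded argument positions; it is not that ``a guard intervenes and consumes an action prefix''; and (b) after unfolding $\rec{X|\RS}$ to $\rec{\RS_X|\RS}$, guardedness of $\RS$ guarantees that the subsequent descent through unguarded positions cannot reach another call $\rec{Y|\RS}$, which is what makes every $\rightsquigarrow$-chain finite. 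With the induction re-based on $e(P)$ and the whole argument moved from stage $0$ to the limit, your strategy coincides with the paper's.
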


\begin{proof}
  Let $\TS=(\Sigma,\R)$ be the TSS of a GSOS language. If $\rec{CT^+,PT^+}$ is its well-founded
  semantics, I have to show that $CT^+ = PT^+$.
  This amounts to $PT^+ \subseteq CT^+$, as the other direction holds always.

Let $\rightsquigarrow$ be the smallest binary relation on $\T(\Sigma)$ such that
(i) $f(P_1,\dots,P_k) \rightsquigarrow P_i$ for each $k$-ary operator $f\in \Sigma$ and each
$i\in\{1,\dots,k\}$ such that the $i^{\rm th}$ argument of $f$ is unguarded, and
(ii) $\rec{X|\RS} \rightsquigarrow \rec{\RS_X|\RS}$.

\textit{Claim:} $\rightsquigarrow$ has no forward infinite chain
$P_0 \rightsquigarrow P_1 \rightsquigarrow \dots$.

\textit{Proof of claim:} Let $P_0 \rightsquigarrow P_1 \rightsquigarrow \dots$ be a forward infinite
chain.  First consider the possibility that $P_{h_0}$ has the form $\rec{X|\RS}$ for some $h_0\geq 0$,
and all $P_m$ with $m>h_0$ have a subterm of the form $\rec{Y|\RS}$. In that case there must
exist indices $h_i$ for $i \in \IN$ such that $h_0 < h_1 < h_2 < \cdots$ and, for all $i \in \IN$,
each $P_{h_i}$ has the form $\rec{X_i|\RS}$ for some $X_i \in V_\RS$, and $X_{i+1}$ occurs unguarded
in the expression $P_{h_i+1} = \rec{\RS_{X_i} | \RS}$.\footnote{Here I use that a term
$\rec{\RS_{X_i} | \RS}$ cannot contain, as a subterm, a recursive call $\rec{Z|\RS'}$ with
$\RS'\neq \RS$, such that $\rec{Z|\RS'}$ has a subterm of the form $\rec{Y|\RS}$. This is
because terms and recursive specifications are constructed inductively, and if $\RS$ is a recursive
specification with a subterm $\rec{Z|\RS'}$ then $\RS'$ cannot have a subterm $\rec{Y|\RS}$.} It
follows that the recursive specification $\RS$ is not manifestly guarded, and neither can it be
converted into a manifestly guarded recursive specification by repeated substitution of expressions
$\RS_Y$ for variables $Y\in V_\RS$ occurring in expressions $\RS_Z$ for $Z\in V_\RS$.
This contradicts the assumption that all recursive specifications $\RS$ in the language must be guarded.
It follows that each $P_h$ in the chain is followed by an expression $P_m$ with $m>h$ that is a
strict subterm of $P_h$. But this contradicts the inductive process of building terms.

\textit{Application of the claim:}
The relation $\rightsquigarrow$ is finitely branching, meaning that for any $P \in \T(\Sigma)$ there
are only finitely many $P^\dagger$ with $P \rightsquigarrow P^\dagger$.
For any process $P\in\T(\Sigma)$, let $e(P)$ be the length of the longest forward chain
$P \rightsquigarrow P_1 \rightsquigarrow \dots \rightsquigarrow P_{e(P)}$.
I show with induction on $e(P)$ that
\[(P \goto{a} P')\in PT^+ ~~\Rightarrow~~ (P \goto{a} P')\in CT^+\]
which suffices to conclude the proof.
Here I make a case distinction on the shape of $P$.

Let $P=f(P_1,\dots,P_{\ar{f}})$. Using that
\hyperlink{note}{$\beta \in PT^+$ iff $\TS\vdash \frac{PT^-}{\beta}$},
the transition $(P \goto{a} P')\in PT^+$ must be derivable from $PT^-$
through a substitution instance of a proof rule $r=\frac{H}{f(x_1,\dots,x_{\ar{f}}) \goto\al F}$ from $\R$.
Since $r$ is a GSOS rule, the premises of that substitution instance must have the form
$P_i \goto{a} R$ and/or $P_i \gonotto{a}$ with $i\in\{1,\dots,\ar{f}\}$.
Moreover, in case $i$ is be a guarded argument of $f$, the rule $r$ cannot have premises $x_i \goto{a} y$
or $x_i \gonotto{a}$, using that the occurrence of the variable $x$ in the term $x$ is unguarded.
Thus, if there is a premise $P_i \goto{a} R$ or $P_i \gonotto{a}$ then $i$ must be an unguarded
argument of $f$, and hence $P \rightsquigarrow P_i$. In that case $e(P_i) < e(P)$, and I may apply the
induction hypothesis.

If the substitution instance of $r$ has a premise $P_i \goto{a} R$, then $(P_i \goto{a} R) \in PT^+$,
and by induction $(P_i \goto{a} R) \in CT^+$. If it has a premise $P_i \gonotto{a}$, then
$(P_i \gonotto{a}) \in PT^-$. I show that $(P_i \gonotto{a}) \in CT^-$ by contradiction.
Namely, if $(P_i \gonotto{a}) \notin CT^-$, then $(P_i \goto{a} Q) \in PT^+$
for some $Q \in \T(\Sigma)$, by \obs{well-founded}. In that case $(P_i \goto{a} Q) \in CT^+$ by
induction, which implies $(P_i \gonotto{a}) \notin PT^-$, again by \obs{well-founded}, so that a
contradiction is reached.

Applying the rule $r$ again, I obtain $(P \goto{a} P')\in CT^+$.

In case $P=\rec{X|\RS}$, the transition $(P \goto{a} P')\in PT^+$ must be derivable from $PT^-$
using a proof rule for recursion with premise $\rec{\RS_X|\RS} \goto{a} P'$.
Since $e(\rec{\RS_X|\RS}) < e(\rec{X|\RS})$ I may apply the induction hypothesis.
So from $(\rec{\RS_X|\RS} \goto{a} P') \in PR^+$ I conclude $(\rec{\RS_X|\RS} \goto{a} P') \in CR^+$
by induction, and this yields $(P \goto{a} P')\in CT^+$.
\end{proof}

\noindent
In \cite{BIM88} it was observed that processes in GSOS languages $(\Sigma,\R)$ are finitely branching,
in the sense that for each $P \in \T(\Sigma)$ there are only finitely many pairs
$(a,P^\dagger)\in A\times\T(\Sigma)$ such that $P \goto{a} P^\dagger$. Here I show that this
observation remains valid now that I have generalised the definition of guardedness.
In~\cite{BIM88,BIM95} GSOS languages are required to feature only finitely many GSOS rules, and each of
these rules has only finitely many premises. In the following I need a weaker assumption, namely
that each operator $f \in \Sigma$ occurs in the source of only finitely many abstract GSOS rules,
and each of these rules has only finitely many positive premises. Here I define an \emph{abstract
GSOS rule} as a (possibly infinite) family of GSOS rules that differ only in their transition
labels, and with the property that for each choice of transition labels in the premises there are
only finitely many choices for the transition label of the conclusion.\pagebreak[3]
GSOS languages that satisfy this requirement I here call \emph{finitary}.

\begin{proposition}{finitely branching}
Each process in a finitary GSOS language is finitely branching.
\end{proposition}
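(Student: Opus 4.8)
The plan is to reuse the machinery set up in the proof of \thm{2-valued GSOS}: the relation $\rightsquigarrow$ on $\T(\Sigma)$ (which is well-founded and finitely branching) and the induction measure $e(P)\in\IN$, the length of the longest $\rightsquigarrow$-chain out of $P$. By \thm{2-valued GSOS} the well-founded semantics is 2-valued, so $CT^+=PT^+$, $CT^-=PT^-$, and $\goto a$ coincides with $\hoto a$; hence, by \obs{well-founded}, a transition $P \goto a P'$ holds iff it is derivable from $CT^-$. I would prove by induction on $e(P)$ that every $P\in\T(\Sigma)$ is finitely branching, using the same case distinction on the shape of $P$ as in \thm{2-valued GSOS}. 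When $P=\rec{X|\RS}$, its transitions are exactly those of $\rec{\RS_X|\RS}$, each derived through the recursion rule (as argued in \thm{2-valued GSOS}); since $\rec{X|\RS}\rightsquigarrow\rec{\RS_X|\RS}$ we have $e(\rec{\RS_X|\RS})<e(P)$, so the induction hypothesis makes $\rec{\RS_X|\RS}$ finitely branching, and therefore so is $P$.

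The substantial case is $P=f(P_1,\dots,P_{\ar{f}})$. A transition $P\goto a P'$ is obtained from $CT^-$ by a substitution instance, with some $\sigma$ satisfying $\sigma(x_i)=P_i$, of a concrete GSOS rule $r=\frac{H}{f(x_1,\dots,x_{\ar{f}})\goto a F}$ belonging to one of the abstract GSOS rules having $f$ in the source; by finitariness there are only finitely many such abstract rules. Any variable $x_i$ occurring in $H$ must be an unguarded argument of $f$ — the left-hand side of a GSOS premise is a bare variable, and that occurrence of $x_i$ is unguarded, which a guarded argument forbids — so $P\rightsquigarrow P_i$, whence $e(P_i)<e(P)$ and the induction hypothesis applies to $P_i$. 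Now: the positive premises of $r$ are finitely many (finitariness), and since each relevant $P_i$ is finitely branching there are only finitely many ways to assign transition labels and target terms to these premises so that they hold in $CT^+$; for each such assignment finitariness allows only finitely many conclusion labels $a$; and since every variable of $F$ occurs in the source or in a premise, $F[\sigma]$ is determined by that finite data. The negative premises of $r$, whatever their labels, merely hold or fail in $CT^-$ and so only gate the rule without contributing transitions. Hence $P$ has finitely many transitions, completing the induction.

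I expect the main obstacle to be the counting in the operator case, which is precisely where all the finitary hypotheses are needed: finitely many abstract rules per operator, finitely many positive premises per abstract rule, and finitely many conclusion labels per labelling of a rule's premises, together with the induction hypothesis to confine the targets of the positive premises — and thereby $F[\sigma]$ — to a finite set. Two minor points to dispatch with care are that negative premises cannot manufacture new transitions, and the reduction (already implicit in \thm{2-valued GSOS}) to operator-sourced rules.
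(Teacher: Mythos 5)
Your proof is correct and follows essentially the same route as the paper's: the same measure $e(P)$ and relation $\rightsquigarrow$ from the proof of \thm{2-valued GSOS}, the same case split on the shape of $P$, the same use of guardedness to ensure $e(P_i)<e(P)$ for exactly those arguments that can appear in premises, and the same appeal to finitariness to bound the abstract rules, the premise instantiations and the conclusion labels. The only cosmetic difference is that the paper's induction hypothesis counts proofs from $CT^-$ with source $P$ rather than outgoing transitions of $P$; your weaker hypothesis suffices because a GSOS target contains only variables occurring in the source or the premises, so the derived transition is determined by the same finite data you enumerate.
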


\begin{proof}
  Let $\TS=(\Sigma,\R)$ be the TSS of a finitary GSOS language.  Again I apply the metric $e(P)$ defined
  in the previous proof.  I show with induction on $e(P)$ that there are only finitely many proofs $\pi$
  of transitions with source $P$ that may use negative literals from $CT^-$ as assumptions. This is
  sufficient, since
  \hyperlink{note}{\mbox{$(P \goto{a} P')\in CT^+$ iff $\TS \vdash \frac{CT^-}{P \goto{a} P'}$}}.
  If $\pi$ derives a transition $Q \goesto{a} R$, then I call $Q$ the \emph{source} of $\pi$.

In case $P=f(P_1,\dots,P_{\ar{f}})$, a proof $\pi$ with source $P$ is completely determined by the
concluding abstract rule from $\R$, the label of the concluding transition, and the
subderivations of $\pi$ with source $P_i$ for some of the $i\in\{1,\dots,\ar{f}\}$.  Moreover, as
already argued in the previous proof, such subderivations only exists when $i$ is an unguarded
argument of $f$, so that $e(P_i)<e(P)$.  The choice of the concluding abstract rule $r$ depends on $f$,
and for each $f$ there are only finitely many choices. Moreover, using that $e(P_i)<e(P)$, by induction
there are only finitely many subderivations of $\pi$ with source $P_i$. Finally, for each collection
of premises of an instance of $r$ (determined by these subderivations) there are only finitely many
possibilities for the label of the concluding transition. This shows there are only finitely
many choices for $\pi$.

In case $P=\rec{X|\RS}$, the last step in $\pi$ must be application of the rules for recursion, so
$\pi$ is completely determined by a subderivation $\pi'$ of a transition with source $\rec{\RS_X|\RS}$.
By induction there are only finitely many choices for $\pi'$, and hence also for $\pi$.
\end{proof}

\section{A complete axiomatisation of bisimilarity for GSOS languages}\label{sec:complete}

Let BCCSP \cite{vG01} be the common kernel of the process algebras CCS \cite{Mi90ccs} and CSP
\cite{BHR84}, consisting of the operators $0$, $a.\_$ for $a \in A$ and $\_ + \_$.
It is well-known \cite{Mi90ccs,BW90,Fok00} that the collection~{\it Ax} of the four axioms A1--4
constitutes a sound and ground-complete axiomatisation of bisimilarity over BCCSP\@.

\begin{minipage}{2.5in}
  \[\begin{array}{c@{\qquad}r@{~=~}l}
    \rm A1 & x+y   & y+x \\
    \rm A2 & (x+y)+z & x+(y+z) \\
    \rm A3 & x+x   & x \\
    \rm A4 & x+0   & x \\
  \end{array}\]
  \vspace{1pt}
\end{minipage}
\hfill
$\displaystyle\frac{x \goto{a} y}{\pi_{n+1} \goto{a} \pi_n(y)}$
\hfill
\begin{minipage}{2.5in}
  \[\begin{array}{r@{~=~}l}
    \pi_0(x) & 0 \\
    \pi_{n+1}(0) & 0 \\
    \pi_{n+1}(a.x) & a.\pi_n(x) \\
    \pi_{n+1}(x+y) & \pi_{n+1}(x) +  \pi_{n+1}(y)\\
  \end{array}\]
  \vspace{1pt}
\end{minipage}

\noindent
This means that $P \equiv_B Q \Leftrightarrow \textit{Ax} \vdash P=Q$ for all BCCSP processes $P$
and $Q$. Here provability is w.r.t.\ equational logic.
To reason in an algebraic way about BCCSP processes extended with guarded recursion, the
\emph{Approximation Induction Principle} (AIP) \cite{BBK87a,BW90,Fok00} can be used. It employs
\emph{projection operators} $\pi_n$ for all $n\in\IN$ that cut off a process after $n$ transitions.
There transition rules are displayed above.
Complete axioms for these operators, when added to BCCSP with recursion, are displayed above as well.
Now AIP says that equivalence of processes is determined by the equivalence of their finite
projections. It can be formulated as
\[\frac{\forall n.~~ \pi_n(x) = \pi_n(y)}{x=y}\;.\]
As shown, e.g., in \cite{Fok00}, \textit{Ax} together with RDP and AIP, and of course the projection
axioms, constitutes a sound and ground-complete axiomatisation of BCCSP with guarded recursion.
Here, provability is w.r.t.\ infinitary conditional equational logic \cite{GV93}.

\textsc{Aceto, Bloom \& Vaandrager} \cite{ABV94} extended this sound and ground-complete
axiomatisation to finite GSOS languages $(\Sigma,\R)$ that contain BCCSP as a sublanguage.
Their key contribution is to provide for each such language a set of axioms that allow to convert
each process into a head-normal form.

\begin{definition}{hnf}
A \emph{head formal form} is a process of the form $\Sigma_{i \in I} a_i.P_i$ with $a_i\in A$ and
$P_i\in \T(\Sigma)$ for all $i$ from a finite index set $I$.
\end{definition}
In generating these head-normalising axioms some auxiliary operators need to be added to language.
In \cite{ABV94} only recursion-free GSOS languages are studied with finitely many GSOS rules, in
which each rule has only finite many premises. However, their method of generating axioms that
allow to convert each process into a head-normal form generalises straightforwardly to finitary
GSOS languages, allowing guarded recursion, now also using RDP among the axioms. As a consequence,
\textit{Ax} together with RDP, AIP, the projection axioms and the head-normalising axioms from
\cite{ABV94} constitutes a sound and ground-complete axiomatisation of any finitary GSOS language
that contains BCCSP as a sublanguage.

I here contribute that in the above ground-complete axiomatisation result one can replace AIP by RSP\@.
In \cite[Section~4.4]{Fok00} it is shown that the basic axioms that are ground-complete for recursion-free
ACP processes, together with RDP and RSP, fail to constitute a ground-complete axiomatisation of
bisimilarity for ACP with guarded recursion. However, this is because in \cite{Fok00} only recursive
specifications with finitely many recursion equations are allowed. When allowing arbitrary many
recursive equations, the counterexample of \cite{Fok00} goes away. In fact, \cite[Theorem 4.4.1]{Fok00}
shows that the basic axioms that are ground-complete for recursion-free ACP processes,
together with RDP and RSP, do form a ground-complete axiomatisation for ACP processes with finite,
linear, guarded recursive specifications. Here \emph{linear} says in essence that within these
recursive specifications only BCCSP operators are allowed (except that in the setting of ACP one
employs a variant that distinguishes deadlock from successful termination). As observed in
\cite{GM20}, nothing in the proof of \cite[Theorem 4.4.1]{Fok00} depends on the finiteness of
recursive specifications, so the result holds without that disclaimer. This does turn RSP into an
infinitary conditional equation, just like AIP, so that also here infinitary conditional equational
logic needs to be employed.

\textsc{Van Glabbeek \& Middelburg} \cite{GM20} also show that using RSP, any guarded recursive
specification can be converted into a linear, guarded recursive specification. This shows that the
basic axioms that are ground-complete for recursion-free ACP processes, together with RDP and RSP,
do form a sound and ground-complete axiomatisation of ACP with guarded recursion. Moreover, nothing
in the reasoning of \cite{GM20} is specific to the language ACP\@. The only relevant fact about ACP
that is incorporated in the proof is \cite[Proposition~1]{GM20}, saying that each ACP process with
guarded recursion can be brought into head normal form.
Thus, combining the insights of \cite{ABV94,Fok00,GM20}, it follows that \textit{Ax} together with
RDP, RSP and the head-normalising axioms from \cite{ABV94} constitutes a sound and ground-complete
axiomatisation of any finitary GSOS language that contains BCCSP as a sublanguage.

\section{Future work}

It would be great to extend this work to branching bisimilarity and/or other weak equivalences.

\bibliographystyle{eptcs}
\bibliography{../../Biblio/abbreviations,../../Biblio/dbase,../../Biblio/new,glabbeek}

\begin{thebibliography}{10}
\providecommand{\bibitemdeclare}[2]{}
\providecommand{\surnamestart}{}
\providecommand{\surnameend}{}
\providecommand{\urlprefix}{Available at }
\providecommand{\url}[1]{\texttt{#1}}
\providecommand{\href}[2]{\texttt{#2}}
\providecommand{\urlalt}[2]{\href{#1}{#2}}
\providecommand{\doi}[1]{doi:\urlalt{https://doi.org/#1}{#1}}
\providecommand{\eprint}[1]{arXiv:\urlalt{https://arxiv.org/abs/#1}{#1}}
\providecommand{\bibinfo}[2]{#2}

\bibitemdeclare{article}{ABV94}
\bibitem{ABV94}
\bibinfo{author}{L.~\surnamestart Aceto\surnameend},
  \bibinfo{author}{B.~\surnamestart Bloom\surnameend} \& \bibinfo{author}{F.W.
  \surnamestart Vaandrager\surnameend} (\bibinfo{year}{1994}):
  \emph{\bibinfo{title}{Turning {SOS} Rules into Equations}}.
\newblock {\slshape \bibinfo{journal}{Information and Computation}}
  \bibinfo{volume}{111}(\bibinfo{number}{1}), pp. \bibinfo{pages}{1--52},
  \doi{10.1006/inco.1994.1040}.

\bibitemdeclare{article}{BBK87a}
\bibitem{BBK87a}
\bibinfo{author}{J.C.M. \surnamestart Baeten\surnameend}, \bibinfo{author}{J.A.
  \surnamestart Bergstra\surnameend} \& \bibinfo{author}{J.W. \surnamestart
  Klop\surnameend} (\bibinfo{year}{1987}): \emph{\bibinfo{title}{On the
  consistency of {Koomen's} fair abstraction rule}}.
\newblock {\slshape \bibinfo{journal}{Theoretical Computer Science}}
  \bibinfo{volume}{51}(\bibinfo{number}{1/2}), pp. \bibinfo{pages}{129--176},
  \doi{10.1016/0304-3975(87)90052-1}.

\bibitemdeclare{book}{BW90}
\bibitem{BW90}
\bibinfo{author}{J.C.M. \surnamestart Baeten\surnameend} \&
  \bibinfo{author}{W.P. \surnamestart Weijland\surnameend}
  (\bibinfo{year}{1990}): \emph{\bibinfo{title}{Process Algebra}}.
\newblock \bibinfo{series}{Cambridge Tracts in Theoretical Computer Science
  18}, \bibinfo{publisher}{Cambridge University Press},
  \doi{10.1017/CBO9780511624193}.

\bibitemdeclare{inproceedings}{BK88}
\bibitem{BK88}
\bibinfo{author}{J.A. \surnamestart Bergstra\surnameend} \&
  \bibinfo{author}{J.W. \surnamestart Klop\surnameend} (\bibinfo{year}{1988}):
  \emph{\bibinfo{title}{A complete inference system for regular processes with
  silent moves}}.
\newblock In \bibinfo{editor}{F.R. \surnamestart Drake\surnameend} \&
  \bibinfo{editor}{J.K. \surnamestart Truss\surnameend}, editors: {\slshape
  \bibinfo{booktitle}{Proceedings Logic Colloquium 1986 {\rm Hull}}},
  \bibinfo{publisher}{North-Holland}, pp. \bibinfo{pages}{21--81},
  \doi{10.1016/S0049-237X(09)70651-2}.
\newblock \bibinfo{note}{First appeared as: Report CS-R8420, CWI, Amsterdam,
  1984}.

\bibitemdeclare{article}{BFG04}
\bibitem{BFG04}
\bibinfo{author}{B.~\surnamestart Bloom\surnameend}, \bibinfo{author}{W.J.
  \surnamestart Fokkink\surnameend} \& \bibinfo{author}{R.J.~van \surnamestart
  Glabbeek\surnameend} (\bibinfo{year}{2004}):
  \emph{\bibinfo{title}{Precongruence Formats for Decorated Trace Semantics}}.
\newblock {\slshape \bibinfo{journal}{Transactions on Computational Logic}}
  \bibinfo{volume}{5}(\bibinfo{number}{1}), pp. \bibinfo{pages}{26--78},
  \doi{10.1145/963927.963929}.

\bibitemdeclare{inproceedings}{BIM88}
\bibitem{BIM88}
\bibinfo{author}{B.~\surnamestart Bloom\surnameend},
  \bibinfo{author}{S.~\surnamestart Istrail\surnameend} \&
  \bibinfo{author}{A.R. \surnamestart Meyer\surnameend} (\bibinfo{year}{1988}):
  \emph{\bibinfo{title}{Bisimulation Can't be Traced: Preliminary Report}}.
\newblock In: {\slshape \bibinfo{booktitle}{Conference Record of the $15^{th}$
  ACM Symposium on Principles of Programming Languages, {\rm San Diego,
  California}}}, pp. \bibinfo{pages}{229--239}, \doi{10.1145/73560.73580}.

\bibitemdeclare{article}{BIM95}
\bibitem{BIM95}
\bibinfo{author}{B.~\surnamestart Bloom\surnameend},
  \bibinfo{author}{S.~\surnamestart Istrail\surnameend} \&
  \bibinfo{author}{A.R. \surnamestart Meyer\surnameend} (\bibinfo{year}{1995}):
  \emph{\bibinfo{title}{Bisimulation Can't be Traced}}.
\newblock {\slshape \bibinfo{journal}{Journal of the ACM}}
  \bibinfo{volume}{42}(\bibinfo{number}{1}), pp. \bibinfo{pages}{232--268},
  \doi{10.1145/200836.200876}.

\bibitemdeclare{article}{BolG96}
\bibitem{BolG96}
\bibinfo{author}{R.N. \surnamestart Bol\surnameend} \& \bibinfo{author}{J.F.
  \surnamestart Groote\surnameend} (\bibinfo{year}{1996}):
  \emph{\bibinfo{title}{The meaning of negative premises in transition system
  specifications}}.
\newblock {\slshape \bibinfo{journal}{Journal of the ACM}}
  \bibinfo{volume}{43}(\bibinfo{number}{5}), pp. \bibinfo{pages}{863--914},
  \doi{10.1145/234752.234756}.

\bibitemdeclare{techreport}{BrGH16b}
\bibitem{BrGH16b}
\bibinfo{author}{E.~\surnamestart Bres\surnameend}, \bibinfo{author}{R.J.~van
  \surnamestart Glabbeek\surnameend} \& \bibinfo{author}{P.~\surnamestart
  H{\"o}fner\surnameend} (\bibinfo{year}{2016}): \emph{\bibinfo{title}{A Timed
  Process Algebra for Wireless Networks with an Application in Routing}}.
\newblock \bibinfo{type}{Technical Report} \bibinfo{number}{9145},
  \bibinfo{institution}{NICTA}.
\newblock \urlprefix\url{http://arxiv.org/abs/1606.03663}.
\newblock \bibinfo{note}{Extended abstract in P. Thiemann, editor: {\sl
  Programming Languages and Systems: {\rm Proceedings 25th} European Symposium
  on Programming, {\rm ESOP'16; held as part of the} European Joint Conferences
  on Theory and Practice of Software, {\rm ETAPS'16, Eindhoven, The
  Netherlands, April 2016}}, LNCS 9632, Springer, 2016, pp. 95-122,
  doi:\href{https://doi.org/10.1007/978-3-662-49498-1_5}{10.1007/978-3-662-49498-1_5}}.

\bibitemdeclare{article}{BHR84}
\bibitem{BHR84}
\bibinfo{author}{S.D. \surnamestart Brookes\surnameend},
  \bibinfo{author}{C.A.R. \surnamestart Hoare\surnameend} \&
  \bibinfo{author}{A.W. \surnamestart Roscoe\surnameend}
  (\bibinfo{year}{1984}): \emph{\bibinfo{title}{A theory of communicating
  sequential processes}}.
\newblock {\slshape \bibinfo{journal}{Journal of the ACM}}
  \bibinfo{volume}{31}(\bibinfo{number}{3}), pp. \bibinfo{pages}{560--599},
  \doi{10.1145/828.833}.

\bibitemdeclare{article}{DH84}
\bibitem{DH84}
\bibinfo{author}{R.~\surnamestart De~Nicola\surnameend} \&
  \bibinfo{author}{M.~\surnamestart Hennessy\surnameend}
  (\bibinfo{year}{1984}): \emph{\bibinfo{title}{Testing equivalences for
  processes}}.
\newblock {\slshape \bibinfo{journal}{Theoretical Computer Science}}
  \bibinfo{volume}{34}, pp. \bibinfo{pages}{83--133},
  \doi{10.1016/0304-3975(84)90113-0}.

\bibitemdeclare{book}{Fok00}
\bibitem{Fok00}
\bibinfo{author}{W.~J. \surnamestart Fokkink\surnameend}
  (\bibinfo{year}{2000}): \emph{\bibinfo{title}{Introduction to Process
  Algebra}}.
\newblock \bibinfo{series}{Texts in Theoretical Computer Science, An EATCS
  Series}, \bibinfo{publisher}{Springer}, \doi{10.1007/978-3-662-04293-9}.

\bibitemdeclare{article}{FG96}
\bibitem{FG96}
\bibinfo{author}{W.J. \surnamestart Fokkink\surnameend} \&
  \bibinfo{author}{R.J.~van \surnamestart Glabbeek\surnameend}
  (\bibinfo{year}{1996}): \emph{\bibinfo{title}{Ntyft/ntyxt rules reduce to
  ntree rules}}.
\newblock {\slshape \bibinfo{journal}{Information and Computation}}
  \bibinfo{volume}{126}(\bibinfo{number}{1}), pp. \bibinfo{pages}{1--10},
  \doi{10.1006/inco.1996.0030}.

\bibitemdeclare{article}{FGW12}
\bibitem{FGW12}
\bibinfo{author}{W.J. \surnamestart Fokkink\surnameend},
  \bibinfo{author}{R.J.~van \surnamestart Glabbeek\surnameend} \&
  \bibinfo{author}{P.~de \surnamestart Wind\surnameend} (\bibinfo{year}{2012}):
  \emph{\bibinfo{title}{Divide and congruence: From decomposition of modal
  formulas to preservation of branching and $\eta$-bisimilarity}}.
\newblock {\slshape \bibinfo{journal}{Information and Computation}}
  \bibinfo{volume}{214}, pp. \bibinfo{pages}{59--85},
  \doi{10.1016/j.ic.2011.10.011}.

\bibitemdeclare{article}{GRS91}
\bibitem{GRS91}
\bibinfo{author}{A.~van \surnamestart Gelder\surnameend},
  \bibinfo{author}{K.~\surnamestart Ross\surnameend} \& \bibinfo{author}{J.S.
  \surnamestart Schlipf\surnameend} (\bibinfo{year}{1991}):
  \emph{\bibinfo{title}{The well-founded semantics for general logic
  programs}}.
\newblock {\slshape \bibinfo{journal}{Journal of the ACM}}
  \bibinfo{volume}{38}(\bibinfo{number}{3}), pp. \bibinfo{pages}{620--650},
  \doi{10.1145/116825.116838}.

\bibitemdeclare{inproceedings}{vG93d}
\bibitem{vG93d}
\bibinfo{author}{R.J.~van \surnamestart Glabbeek\surnameend}
  (\bibinfo{year}{1993}): \emph{\bibinfo{title}{Full Abstraction in Structural
  Operational Semantics (extended abstract)}}.
\newblock In \bibinfo{editor}{M.~\surnamestart Nivat\surnameend},
  \bibinfo{editor}{C.~\surnamestart Rattray\surnameend},
  \bibinfo{editor}{T.~\surnamestart Rus\surnameend} \&
  \bibinfo{editor}{G.~\surnamestart Scollo\surnameend}, editors: {\slshape
  \bibinfo{booktitle}{{\rm Proceedings of the $3^{rd}$ International Conference
  on} Algebraic Methodology and Software Technology, {\rm AMAST'93, Twente, The
  Netherlands, June l993}}}, \bibinfo{series}{Workshops in Computing},
  \bibinfo{publisher}{Springer}, pp. \bibinfo{pages}{75--82},
  \doi{10.1007/978-1-4471-3227-1_7}.
\newblock \urlprefix\url{http://theory.stanford.edu/~rvg/abstracts.html#28}.

\bibitemdeclare{inproceedings}{vG93}
\bibitem{vG93}
\bibinfo{author}{R.J.~van \surnamestart Glabbeek\surnameend}
  (\bibinfo{year}{1993}): \emph{\bibinfo{title}{The Linear Time -- Branching
  Time Spectrum {II}; The semantics of sequential systems with silent moves
  (extended abstract)}}.
\newblock In \bibinfo{editor}{E.~\surnamestart Best\surnameend}, editor:
  {\slshape \bibinfo{booktitle}{{\rm Proceedings} CONCUR'93, {\rm 4$^{\it th}$
  International Conference on} Concurrency Theory, {\rm Hildesheim, Germany,
  August 1993}}}, {\slshape \bibinfo{series}{\rm LNCS}} \bibinfo{volume}{715},
  \bibinfo{publisher}{Springer}, pp. \bibinfo{pages}{66--81},
  \doi{10.1007/3-540-57208-2\_6}.

\bibitemdeclare{incollection}{vG01}
\bibitem{vG01}
\bibinfo{author}{R.J.~van \surnamestart Glabbeek\surnameend}
  (\bibinfo{year}{2001}): \emph{\bibinfo{title}{The Linear Time -- Branching
  Time Spectrum {I}; The Semantics of Concrete, Sequential Processes}}.
\newblock In \bibinfo{editor}{J.A. \surnamestart Bergstra\surnameend},
  \bibinfo{editor}{A.~\surnamestart Ponse\surnameend} \& \bibinfo{editor}{S.A.
  \surnamestart Smolka\surnameend}, editors: {\slshape
  \bibinfo{booktitle}{Handbook of Process Algebra}},
  chapter~\bibinfo{chapter}{1}, \bibinfo{publisher}{Elsevier}, pp.
  \bibinfo{pages}{3--99}, \doi{10.1016/B978-044482830-9/50019-9}.

\bibitemdeclare{article}{vG04}
\bibitem{vG04}
\bibinfo{author}{R.J.~van \surnamestart Glabbeek\surnameend}
  (\bibinfo{year}{2004}): \emph{\bibinfo{title}{The Meaning of Negative
  Premises in Transition System Specifications {II}}}.
\newblock {\slshape \bibinfo{journal}{Journal of Logic and Algebraic
  Programming}} \bibinfo{volume}{60--61}, pp. \bibinfo{pages}{229--258},
  \doi{10.1016/j.jlap.2004.03.007}.

\bibitemdeclare{inproceedings}{vG17b}
\bibitem{vG17b}
\bibinfo{author}{R.J.~van \surnamestart Glabbeek\surnameend}
  (\bibinfo{year}{2017}): \emph{\bibinfo{title}{Lean and Full Congruence
  Formats for Recursion}}.
\newblock In: {\slshape \bibinfo{booktitle}{{\rm Proceedings $32^{nd}$ Annual
  ACM/IEEE Symposium on} Logic in Computer Science, {\rm LICS'17, Reykjavik,
  Iceland, June 2017}}}, \bibinfo{publisher}{IEEE Computer Society Press},
  \doi{10.1109/LICS.2017.8005142}.

\bibitemdeclare{misc}{GM20}
\bibitem{GM20}
\bibinfo{author}{R.J.~van \surnamestart Glabbeek\surnameend} \&
  \bibinfo{author}{C.A. \surnamestart Middelburg\surnameend}
  (\bibinfo{year}{2020}): \emph{\bibinfo{title}{On Infinite Guarded Recursive
  Specifications in Process Algebra}}.
\newblock \urlprefix\url{http://arxiv.org/abs/2005.00746}.

\bibitemdeclare{article}{GV93}
\bibitem{GV93}
\bibinfo{author}{R.J.~van \surnamestart Glabbeek\surnameend} \&
  \bibinfo{author}{F.W. \surnamestart Vaandrager\surnameend}
  (\bibinfo{year}{1993}): \emph{\bibinfo{title}{Modular Specification of
  Process Algebras}}.
\newblock {\slshape \bibinfo{journal}{Theoretical Computer Science}}
  \bibinfo{volume}{113}(\bibinfo{number}{2}), pp. \bibinfo{pages}{293--348},
  \doi{10.1016/0304-3975(93)90006-F}.

\bibitemdeclare{article}{Gr93}
\bibitem{Gr93}
\bibinfo{author}{J.F. \surnamestart Groote\surnameend} (\bibinfo{year}{1993}):
  \emph{\bibinfo{title}{Transition System Specifications with Negative
  Premises}}.
\newblock {\slshape \bibinfo{journal}{Theoretical Computer Science}}
  \bibinfo{volume}{118}, pp. \bibinfo{pages}{263--299},
  \doi{10.1016/0304-3975(93)90111-6}.

\bibitemdeclare{article}{GrV92}
\bibitem{GrV92}
\bibinfo{author}{J.F. \surnamestart Groote\surnameend} \& \bibinfo{author}{F.W.
  \surnamestart Vaandrager\surnameend} (\bibinfo{year}{1992}):
  \emph{\bibinfo{title}{Structured Operational Semantics and Bisimulation as a
  Congruence}}.
\newblock {\slshape \bibinfo{journal}{Information and Computation}}
  \bibinfo{volume}{100}(\bibinfo{number}{2}), pp. \bibinfo{pages}{202--260},
  \doi{10.1016/0890-5401(92)90013-6}.

\bibitemdeclare{book}{Ho85}
\bibitem{Ho85}
\bibinfo{author}{C.A.R. \surnamestart Hoare\surnameend} (\bibinfo{year}{1985}):
  \emph{\bibinfo{title}{Communicating {S}equential {P}rocesses}}.
\newblock \bibinfo{publisher}{Prentice Hall}, \bibinfo{address}{Englewood
  Cliffs}.

\bibitemdeclare{article}{Lam77}
\bibitem{Lam77}
\bibinfo{author}{L.~\surnamestart Lamport\surnameend} (\bibinfo{year}{1977}):
  \emph{\bibinfo{title}{Proving the correctness of multiprocess programs}}.
\newblock {\slshape \bibinfo{journal}{IEEE Transactions on Software
  Engineering}} \bibinfo{volume}{3}(\bibinfo{number}{2}), pp.
  \bibinfo{pages}{125--143}, \doi{10.1109/TSE.1977.229904}.

\bibitemdeclare{article}{Lam98}
\bibitem{Lam98}
\bibinfo{author}{L.~\surnamestart Lamport\surnameend} (\bibinfo{year}{1998}):
  \emph{\bibinfo{title}{Proving Possibility Properties}}.
\newblock {\slshape \bibinfo{journal}{Theoretical Computer Science}}
  \bibinfo{volume}{206}(\bibinfo{number}{1-2}), pp. \bibinfo{pages}{341--352},
  \doi{10.1016/S0304-3975(98)00129-7}.
\newblock \bibinfo{note}{See also
  \url{http://research.microsoft.com/users/lamport/pubs/pubs.html\#lamport-possibility}}.

\bibitemdeclare{inproceedings}{LT88}
\bibitem{LT88}
\bibinfo{author}{K.G. \surnamestart Larsen\surnameend} \&
  \bibinfo{author}{B.~\surnamestart Thomsen\surnameend} (\bibinfo{year}{1988}):
  \emph{\bibinfo{title}{A Modal Process Logic}}.
\newblock In: {\slshape \bibinfo{booktitle}{{\rm Proc.\ LICS'88}}},
  \bibinfo{publisher}{IEEE Computer Society Press}, pp.
  \bibinfo{pages}{203--210}, \doi{10.1109/LICS.1988.5119}.

\bibitemdeclare{article}{Mi84}
\bibitem{Mi84}
\bibinfo{author}{R.~\surnamestart Milner\surnameend} (\bibinfo{year}{1984}):
  \emph{\bibinfo{title}{A complete inference system for a class of regular
  behaviours}}.
\newblock {\slshape \bibinfo{journal}{Journal of Computer and System Sciences}}
  \bibinfo{volume}{28}, pp. \bibinfo{pages}{439--466},
  \doi{10.1016/0022-0000(84)90023-0}.

\bibitemdeclare{incollection}{Mi90ccs}
\bibitem{Mi90ccs}
\bibinfo{author}{R.~\surnamestart Milner\surnameend} (\bibinfo{year}{1990}):
  \emph{\bibinfo{title}{Operational and algebraic semantics of concurrent
  processes}}.
\newblock In \bibinfo{editor}{J.~\surnamestart van Leeuwen\surnameend}, editor:
  {\slshape \bibinfo{booktitle}{Handbook of Theoretical Computer Science}},
  chapter~\bibinfo{chapter}{19}, \bibinfo{publisher}{Elsevier Science
  Publishers B.V. (North-Holland)}, pp. \bibinfo{pages}{1201--1242}.
\newblock \bibinfo{note}{Alternatively see{ \em Communication and Concurrency},
  Prentice-Hall, Englewood Cliffs, 1989, of which an earlier version appeared
  as{ \em A Calculus of Communicating Systems}, LNCS 92, Springer, 1980,
  doi:\href{http:dx.doi.org/10.1007/3-540-10235-3}{10.1007/3-540-10235-3}}.

\bibitemdeclare{article}{OH86}
\bibitem{OH86}
\bibinfo{author}{E.-R. \surnamestart Olderog\surnameend} \&
  \bibinfo{author}{C.A.R. \surnamestart Hoare\surnameend}
  (\bibinfo{year}{1986}): \emph{\bibinfo{title}{Specification-oriented
  semantics for communicating processes}}.
\newblock {\slshape \bibinfo{journal}{Acta Informatica}} \bibinfo{volume}{23},
  pp. \bibinfo{pages}{9--66}, \doi{10.1007/BF00268075}.

\bibitemdeclare{article}{Pl04}
\bibitem{Pl04}
\bibinfo{author}{G.D. \surnamestart Plotkin\surnameend} (\bibinfo{year}{2004}):
  \emph{\bibinfo{title}{A Structural Approach to Operational Semantics}}.
\newblock {\slshape \bibinfo{journal}{The Journal of Logic and Algebraic
  Programming}} \bibinfo{volume}{60--61}, pp. \bibinfo{pages}{17--139},
  \doi{10.1016/j.jlap.2004.05.001}.
\newblock \bibinfo{note}{Originally appeared in 1981}.

\bibitemdeclare{article}{Prz90}
\bibitem{Prz90}
\bibinfo{author}{T.C. \surnamestart Przymusinski\surnameend}
  (\bibinfo{year}{1990}): \emph{\bibinfo{title}{The Well-founded Semantics
  Coincides with the Three-valued Stable Semantics}}.
\newblock {\slshape \bibinfo{journal}{Fundamenta Informaticae}}
  \bibinfo{volume}{XIII}(\bibinfo{number}{4}), pp. \bibinfo{pages}{445--463},
  \doi{10.3233/FI-1990-13404}.

\bibitemdeclare{book}{Ros97}
\bibitem{Ros97}
\bibinfo{author}{A.W. \surnamestart Roscoe\surnameend} (\bibinfo{year}{1997}):
  \emph{\bibinfo{title}{The Theory and Practice of Concurrency}}.
\newblock \bibinfo{publisher}{Prentice-Hall}.
\newblock
  \urlprefix\url{http://www.comlab.ox.ac.uk/bill.roscoe/publications/68b.pdf}.

\bibitemdeclare{inproceedings}{Va93}
\bibitem{Va93}
\bibinfo{author}{F.W. \surnamestart Vaandrager\surnameend}
  (\bibinfo{year}{1993}): \emph{\bibinfo{title}{Expressiveness Results for
  Process Algebras}}.
\newblock In \bibinfo{editor}{J.W. \surnamestart de~Bakker\surnameend},
  \bibinfo{editor}{W.P.~de \surnamestart Roever\surnameend} \&
  \bibinfo{editor}{G.~\surnamestart Rozenberg\surnameend}, editors: {\slshape
  \bibinfo{booktitle}{Proceedings REX Workshop on Semantics: Foundations and
  Applications, {\rm Beekbergen, The Netherlands, 1992}}}, {\slshape
  \bibinfo{series}{\rm LNCS}} \bibinfo{volume}{666},
  \bibinfo{publisher}{Springer}, pp. \bibinfo{pages}{609--638},
  \doi{10.1007/3-540-56596-5\_49}.

\end{thebibliography}
\end{document}

Notation:
Var    The set of variables
\Gamma predicate on arguments of operators
\Sigma Signature
\alpha, \beta   Literals
\kappa, \lambda ordinals
\sigma, \rho    substitutions
\IT    Set of open terms
A      Set of actions
\fc B  Bisimulation
B      Bisimilarity subscript
E,F    (open) terms
G      guarded open term
H,K    Sets of premises
P,Q,R,S,T  processes (= closed terms)
R,S    sets of actions (in Ex. 4)
\cal R Set of rules in a TSS
\cal S recursive specification
S      syntactic object (used in Def4, applied to a rule)
S      Similarity subscript
\rm T  Set of closed terms
\cal T TSS
T      Transition relation (only in Sect. TSS) (also AT, CT, PT)
V,W    Sets of variables
X,Z,Z  variables (used bound)
x,y,z  variables (used free)
a,b,c,d      actions
c      constant symbol
a,b,d  unary operator symbols (in Example 1,2)
f,h    operator symbol
i      typical index, argument number
k      arity of f, in Section 10
h,m    indices of chain in Section 10
q      node in proof tree
r      rule in a TSS